\def\BIBDIR{./}
\def\PICDIR{./}
\title{ 
  Non-asymptotic error controlled sparse high dimensional 
  precision matrix estimation
}
\author{Adam B Kashlak\\
Mathematical \& Statistical Sciences\\
University of Alberta\\
Edmonton, Canada,  T6G 2G1}
\begin{document}

\maketitle

\begin{abstract}
  Estimation of a high dimensional precision matrix 
  is a critical problem to many areas of statistics
  including Gaussian graphical models and inference on 
  high dimensional data.
  Working under the structural assumption of sparsity, we propose 
  a novel methodology for estimating such matrices 
  while controlling the false positive rate--percentage of 
  matrix entries incorrectly chosen to be non-zero.
  We specifically focus on false positive rates 
  tending towards zero with finite sample guarantees.
  This methodology is distribution free, but
  is particularly applicable to the problem of Gaussian network 
  recovery.  We also consider applications to 
  constructing gene networks in genomics data.
\end{abstract}

\section{Introduction}

Attempting to estimate the graphical structure of a high dimensional 
network is problemsome when edges are rare but the number of nodes
is large.  
In standard statistical classification problems, 
we fix a palatable false positive
rate and aim to recover as many true positives as possible.
Thus, we treat support recovery as a binary classification
problem.
For $p$ nodes, the number of potential edges to consider is on 
the order of $p^2$.  With a sample size $n<p$, there are too 
many parameters to accurately estimate.
However, when trying to classify edges as
significant or not under the assumption of sparsity, 
the assumption that most edges are not significant,  
if we have even a small false positive rate, then
this will result in many erroneous connections potentially obscuring 
follow up research.
For example, a genomics study considering the conditional correlation
structure of, say, 2000 genes will have to consider over two million 
potential edges.  A standard 1\% false positive rate will result in
tens of thousands of erroneous connections.

To address this issue, we consider the extreme estimation setting
of recovering the support of a precision matrix $\Omega$ in the high
dimensional setting, $p\gg n$, assuming $\Omega$ belongs to a 
class of sparse positive definite matrices, and for false positive
rates $\alpha\rightarrow0^+$.  Making use of debiased 
estimators \citep{BUHLMANN2013,JAVANMARD2014,VANDEGEER2014,JANKOVA2015},
the non-asymptotic results of \cite{KASHLAK_SPA2017}, 
and a clever subsampling methodology, we can achieve 
finite sample guarantees in this extreme setting.

There has been much past work on covariance and precision
matrix estimation much of which is summarized in the 
survey \cite{FAN2016}.  Most estimators for high dimensional
precision matrices are based on $\ell^1$ penalization 
including the graphical lasso \citep{GLASSO},
the CLIME and ACLIME estimators 
\citep{CAILIULUO2011,CAILIUZHOU2016}, and 
the debiased estimator of \cite{JANKOVA2015} used for 
constructing confidence sets and running hypothesis tests.
These articles all rely on high dimensional asymptotics, 
which is that estimation is successful in the limit as 
$p$ and $n$ grow to infinity together generally such that
$(\log p) /n$, or some variant thereof, is $o(1)$.
Our work differs as it considers guaranteed results for
controlling the support recovery of $\Omega$ as the 
false positive rate $\alpha\rightarrow0$ for
fixed finite $p\gg n$, which is asymptotic in $\alpha$, 
a controllable tuning parameter, instead of in $n$ and $p$,
which are generally fixed in the real world experimental
setting.

Our method can be deconstructed into two steps.  The first is to 
find a bias-corrected initial estimate $\hat{\Omega}_0$ for the 
precision matrix $\Omega$.  The second is to construct 
a ball in the operator norm topology corresponding to the target 
false positive rate and then
search this ball for a sparse estimator.
In actuality, given a false positive rate of $\alpha$,
we step down in a binary fashion constructing successive
false positive rates of $\gamma^{-1},\,\gamma^{-2},\ldots$ 
and a sequence of estimators tending towards one with the
desired $\alpha$.
Depending on the sample size, we can 
randomly partition our data set into smaller sets,
apply this two-step procedure to each subsample in parallel,
and combine the results for improved performance.
In step two, we use a binary search procedure which 
converges rapidly minimizing the computational burden. 

Our search methodology is effectively a method for controlled
thesholding of the initial debiased estimator.  The literature
on thresholding for precision matrices is quite light when 
compared to thresholding covariance matrices 
\citep{BICKELLEVINA2008,BICKELLEVINA2008A,ROTHMAN2009,
CAILUI2011,KASHLAK_SPA2017}.  The challenge is that unlike
the unbiased empirical covariance estimator, there is no 
unbiased empirical precision estimator.  Debiasing such 
precision matrix estimators allows for thresholding as 
mentioned in \cite{JANKOVA2015}.  Furthermore, 
\cite{JANKOVA2015} proposed an entrywise thresholding method
given some false positive rate $\alpha$, which while 
working well in some high dimensional asymptotic sense,
proved to not achieve the target false positive rate
on simulated data.

\section{Methodology}

\subsection{Notation}
\label{sec:notation}

In this article, we primarily make use of the family of 
Schatten norms.  For $\Omega\in\real^{p\times p}$ with 
$i,j$th entry $\Omega_{i,j}$ and eigenvalues $\lmb_1,\ldots,\lmb_p$, 
we define the trace norm 
$\norm{\Omega}_1 = \sum_{i=1}^p \abs{\lmb_i}$, 
the Hilbert-Schmidt norm
$ 
  \norm{\Omega}_2 = (\sum_{i=1}^p \abs{\lmb_i}^2)^{1/2} = 
  (\sum_{i,j=1}^p \Omega_{i,j}^2)^{1/2},
$ 
and the operator norm 
$\norm{\Omega}_\infty = \max_{i=1,\ldots,p}\abs{\lmb_i}$.
We will also use the standard $\ell^p$ norms applied to vectors
in $\real^p$
and entrywise to matrices in $\real^{p\times p}$.

The algorithm in Section~\ref{sec:algorithm} makes use of 
hard thresholding.  For $\Omega\in\real^{p\times p}$, we
define the hard thresholding operator 
$\varphi:\real^{p\times p}\times \real^+ \rightarrow \real^{p\times p}$,
which returns a matrix with $i,j$th entry
$$
  \varphi(\Omega; t)_{i,j} = 
  \left\{
  \begin{array}{ll}
    0 & \text{if }\abs{\Omega_{i,j}}< t\\
    \Omega_{i,j} & \text{if } \abs{\Omega_{i,j}}\ge t
  \end{array}
  \right.
$$
which simply removes entries from $\Omega$ with magnitude less 
than $t$.

\subsection{Initial Estimator}
\label{sec:initEst}

Let $X_1,\ldots,X_n\in\real^p$ be an \iid collection of mean zero
random vectors with common unknown positive definite 
covariance matrix $\Sigma$ and corresponding precision matrix 
$\Omega = \Sigma^{-1}$.  Of course, we require the covariance
and its inverse 
to exist, but put no further assumptions on the distribution 
of the $X_i$ at the moment except to assume that $\Omega$ is sparse.  
Specifically, 
\begin{multline*}
  \Omega \in \mathcal{U}(\kappa,\delta) = \left\{ 
    A \in \real^{p\times p} \,:\,
    \right.\\\left. 
    A=\TT{A},~
    \sum_{i=1}^p {\bf1}(A_{i,j}\ne0)\le \kappa ~\text{for }j=1,\ldots,p,
    ~\text{if }A_{i,j}\ne0\text{ then }\abs{A_{i,j}}\ge\delta 
  \right\}
\end{multline*}
where $\mathcal{U}(\kappa,\delta)$ is the class of sparse 
$\real^{p\times p}$
matrices with no more than $\kappa$ entries in each row or column
non-zero and with non-zero entries bounded away from zero
allowing them to be detectable.  
In practice, we attempt to recover the support of the normalized 
matrix $\Omega$ with diagonal entries of 1 to avoid scale issues.

Our approach is similar to \cite{KASHLAK_SPA2017} who attempt to 
recover the support of a covariance matrix $\Sigma$ by starting 
with an initial estimator, constructing a confidence ball, and
searching this ball for a sparser estimator.  However,
whereas this preceding work can use the empirical estimator,
$\hat{\Sigma}=n^{-1}\sum_{i=1}^nX_i\TT{X_i}$, as an unbiased
initial estimator for controlled thresholding, 
we cannot construct an unbiased estimator
for the precision matrix $\Omega$ in the $p>n$ setting. 

A standard estimator for sparse precision matrices is the
graphical lasso \citep{GLASSO}, which is based on an $\ell^1$
penalized maximum likelihood under the Gaussian distribution:
$$
  \hat{\Omega}^\text{GL} = 
  \argmin{ \Theta\in\real^{p\times p} }\left\{
    \tr{\hat{\Sigma}\Theta} - \log\det(\Theta)
    + \lmb\sum_{i=1}^p\sum_{j\ne i}\abs{\Theta_{i,j}}
  \right\}
$$
for some tuning parameter $\lmb>0$.  The graphical lasso is 
debiased in \cite{JANKOVA2015}, which uses a correction
factor based on the subgradient of the above optimization
extending the work of \cite{VANDEGEER2014} that debiases 
the classic lasso estimator for linear regression.  
The resulting debiased estimator is
$$
  \hat{\Omega}^\text{deGL} = 
  2\hat{\Omega}^\text{GL} - 
  \hat{\Omega}^\text{GL}\hat{\Sigma}\hat{\Omega}^\text{GL}.
$$

Alternatively, the CLIME method \citep{CAILIULUO2011} solves
a constrained $\ell^1$ optimization problem to find $\hat{\Omega}^{CL}$.
$$
  \min_{\Theta\in\real^{p\times p}} 
  \norm{\Theta}_{\ell^1}
  \text{ such that }
  \max_{i,j}\abs{ \hat{\Sigma}\Theta - I_p }_{i,j} \le \lmb
$$
with a more sophisticated version ACLIME \citep{CAILIUZHOU2016} adapting
to individual entries.

A different regularized estimator is the ridge estimator, 
$$
  \hat{\Omega}^\text{RD} =
  \argmin{\Theta\in\real^{p\times p}}\left(
    \norm{ \hat{\Sigma}\Theta - I }_2^2 + \lmb\norm{\Theta}_2^2
  \right),
$$
whose closed form solution is 
$\hat{\Omega}^{RD}=(\Sigma+\lmb I_p)^{-1}$.
Consider the singular value decomposition $X = UD\TT{V}$ 
where $U\in\real^{n\times r}$, $D\in\real^{r\times r}$, and
$V\in\real^{p\times r}$ for $r=\min( n,p )$.
In \cite{BUHLMANN2013}, a bias corrected estimator for 
ridge regression is proposed.  This is achieved by finding
some other estimator, which is used to correct for the
projection bias.  We can apply the same method to the ridge 
precision matrix estimator to get
$$
  \hat{\Omega}^\text{deRD} = 
  \hat{\Omega}^\text{RD} + P_0\tilde{\Omega}
$$
where $\tilde{\Omega}$ is some other estimate for $\Omega$
and $P_0 = V\TT{V} - \text{diag}(V\TT{V})$ where
$V\TT{V}$ is
the projection in $\real^p$ onto an $n$ dimensional subspace
spanned by the columns of $X$, and $P_0$ is that projection 
with the diagonal entries set to zero.  
\cite{BUHLMANN2013} uses the graphical lasso 
estimator for
$\tilde{\Omega}$.

\subsection{Binary Search}
\label{sec:algorithm}

Given an initial estimator from the previous setting
henceforth denoted $\hat{\Omega}_0$
for simplicity of notation and some $\gamma>1$ such that
$\alpha = \gamma^{-m}$ for some positive integer $m$ corresponding
to the number of iterates of the below algorithm, 
we aim to construct 
an estimator
$\hat{\Omega}_m$ with false positive rate $\alpha=\gamma^{-m}$ 
by carefully thresholding the 
entries in $\hat{\Omega}_0$ so that 
$$
  \frac{
    \abs*{\left\{
      (i,j) \,:\, 
      (\hat{\Omega}_m)_{i,j}\ne0\text{ and }
      (\Omega)_{i,j}=0,~i\ne j 
    \right\}}
  }{
    \abs*{\left\{
      (i,j) \,:\, (\Omega)_{i,j}=0,~i\ne j
    \right\}}
  } \approx \alpha,
$$
which is that the desired false positive rate is achieved.
The two extreme estimators are the initial estimator 
$\hat{\Omega}_0$ and the diagonal matrix 
$\hat{\Omega}_\infty$ with diagonal entries coinciding with 
$\hat{\Omega}_0$ and off-diagonal entries set to zero.
These correspond to the 100\% and 0\% false positive 
cases, respectively.  For the remainder, we normalize
$\hat{\Omega}_0$ to have unit diagonal thus making 
$\hat{\Omega}_\infty=I_p$.

For values of $\alpha$ tending towards zero, we consider the 
operator norm balls centred at $I_p$ being
$
  B_\alpha = \{
    \Pi\in\real^{p\times p}\,:\,
    \norm{ \Pi - I_p }_\infty \le
    \norm{ \hat{\Omega}_m - I_p }_\infty 
  \}
$ 
where $\norm{\cdot}_\infty$ refers to the operator norm 
for $\ell^2(\real^p)\rightarrow\ell^2(\real^p)$, which is
the principal eigenvalue.  This motivates the following 
algorithm:

Constructing an error controlled estimator 
\begin{tabbing}
  \qquad \enspace Begin with an initial estimator from Section~\ref{sec:initEst}
         denoted $\hat{\Omega}_0$.\\ 
  \qquad \enspace Given the $s$th iterated estimator $\hat{\Omega}_{s}$,
    we construct the $(s+1)$th estimator\\
    \qquad\qquad Compute $r_s = \norm{\hat{\Omega}_{s}-I_p}_\infty$.\\
    \qquad\qquad Compute $r_s'= r_s \gamma^{-1/2}$.\\
    \qquad\qquad 
          Find $t_{s+1}=\min(t)$ such that 
          $\norm{\varphi(\hat{\Omega}_{s};t)-I_p}_\infty\le r_s'$\\
    \qquad\qquad ~
          where $\varphi$ is the hard threshold operator.\\
    \qquad\qquad Set 
      $\hat{\Omega}_{s+1} = \varphi(\hat{\Omega}_{s};t_{s+1})$\\
  \qquad \enspace Repeat step 2 with $\hat{\Omega}_{s+1}$ until
                  $s=m$\\
\end{tabbing}

In this algorithm, we quickly locate the densest estimator close 
to $I_p$, alternatively being the sparsest estimator close to 
$\hat{\Omega}_{0}$, by using a binary search procedure.
Given the $s$th iterated matrix as a starting point,
we set the smallest half of the non-zero entries in magnitude in 
$\hat{\Omega}_{s}$ to zero, compute the distance to $I_p$
and then if the distance is greater than $r_s'$, we remove half 
of the remaining entries whereas if the distance is less than $r_s'$,
we reintroduce half of the removed entries.

\begin{remark}
  \label{rk:initChoice}
  For choice of $\gamma>1$ and number of steps $m$, values of 
  $\gamma$ closer to 1 result in smaller steps adding stability
  but requiring a larger $m$ to achieve a small false positive rate.
  In the simulations of Section~\ref{sec:simData}, we use $\gamma=2$. 
  
  For the choice of initial estimator from Section~\ref{sec:initEst}, 
  the best performance in
  simulated data experiments on multivariate normal data
  was achieved by
  beginning the above procedure with the debiased
  graphical lasso estimator of \cite{JANKOVA2015}. 
  This is mainly because their estimator follows the requirements
  of the theorems in the subsequent section as long as the 
  data under analysis is sub-Gaussian.
  However, good performance is still observed when the 
  data is sub-exponential as can be seen in the supplementary
  material.
\end{remark}

\begin{remark}
  A similar search algorithm is proposed in \cite{KASHLAK_SPA2017}.
  The main differences are that in the cited work, the operator
  norm balls are confidence sets centred about the empirical
  covariance estimator rather than centred about the identity
  matrix $I_p$.  Furthermore, the radii are reduced by a factor
  of $\gamma^{-1}$ in that previous work due to the low rank structure 
  of the empirical covariance estimator.  Specifically, 
  $\xv\norm{\hat{\Sigma}} = O(p)$ rather than $O(p^{1/2})$.
\end{remark}

\subsection{Theoretical Guarantees}

The reason for shrinking the radius by $\gamma^{-1/2}$   
in the above algorithm comes from applying tools from random matrix
theory \citep{TAO2012} to sparse matrices---see the supplementary
material for proofs.
The following result states that 
when $\Omega$ is sufficiently sparse and $p\gg n$, we can reduce 
the false positives by $\gamma^{-1}$ by shrinking the radius of an operator
norm ball centred around the $p$-dimensional identity matrix $I_p$
by $\gamma^{-1/2}$.

\begin{theorem}[Controlled False Positives]
  \label{thm:falsepos}
  Let $\Omega\in\mathcal{U}(\kappa,\delta)$  
  with $\kappa=O(p^{\nu})$ for $\nu<1/2$, and
  $\norm{\Omega}_\infty=o(p^{1/2})$. 
  For some false positive rate $\alpha=\gamma^{-s}$ with $s\in\integer^+$, 
  let the bias of the initial estimator be
  $
    \norm{\text{bias}(\hat{\Omega})}_\infty =
    \norm{\xv\hat{\Omega}-\Omega}_\infty = o(\gamma^{s}p^{1/2}). 
  $
  Then, 
  \begin{align*}
    \mathrm{(a)}~~&
    { \frac{ 
      \xv\norm{\hat{\Omega}_{s+1}-\Omega}_\infty 
    }{
      \xv\norm{\hat{\Omega}_{s}-\Omega}_\infty
    }} = \frac{1 + o(1)}{\gamma^{-1/2} + o(1)}
    &\mathrm{(b)}~~&
    {\frac{ 
	  \xv\norm{\hat{\Omega}_{{s+1}}-I_p}_\infty 
  	}{
 	  \xv\norm{\hat{\Omega}_{{s}}-I_p}_\infty
    }} = \frac{1 + o(1)}{\gamma^{-1/2} + o(1)}
  \end{align*}
\end{theorem}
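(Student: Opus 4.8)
The plan is to reduce both ratios to the behaviour of a single object---the operator norm of the false-positive block of each iterate---and then to show that the algorithm forces this block to contract by $\gamma^{-1/2}$ per step while the sparsity and bias hypotheses confine every other contribution to lower order. Let $T=\{(i,j):i\ne j,\ \Omega_{i,j}\ne0\}$ be the off-diagonal support of $\Omega$, and split each normalised iterate as $\hat{\Omega}_s=I_p+A_s+N_s$, where $A_s$ carries the entries on $T$ (the signal block, approximating $\Omega-I_p$) and $N_s$ carries the false positives on $T^c$. Then $\hat{\Omega}_s-I_p=A_s+N_s$ while $\hat{\Omega}_s-\Omega=(A_s-(\Omega-I_p))+N_s$, so the same matrix $N_s$ sits inside both differences.

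First I would isolate the leading term to reduce (a) to (b). Since $A_s$ and $\Omega-I_p$ have at most $\kappa=O(p^{\nu})$ nonzeros per row with entries of order one, Gershgorin gives $\norm{A_s}_\infty,\norm{\Omega-I_p}_\infty=O(p^{\nu})=o(p^{1/2})$, the last step using $\nu<1/2$ (and, for $\Omega$, the hypothesis $\norm{\Omega}_\infty=o(p^{1/2})$); the bias hypothesis $\norm{\xv\hat{\Omega}-\Omega}_\infty=o(\gamma^{s}p^{1/2})$ plays the analogous role for the expected on-support error. By contrast, since $\kappa=o(p)$ the block $N_s$ has $\sim\gamma^{-s}p\to\infty$ nonzeros per row, and the supplementary sparse random-matrix bound gives $\norm{N_s}_\infty\asymp t_s\sqrt{\gamma^{-s}p}$, which is of strictly larger order. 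Hence $\norm{\hat{\Omega}_s-I_p}_\infty=\norm{N_s}_\infty(1+o(1))$ and $\norm{\hat{\Omega}_s-\Omega}_\infty=\norm{N_s}_\infty(1+o(1))$, so the two differences agree to leading order and it suffices to track $\xv\norm{N_s}_\infty$, i.e.\ to prove (b).

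The upper half of (b) is immediate from the algorithm: by construction $\norm{\hat{\Omega}_{s+1}-I_p}_\infty\le r_s'=\gamma^{-1/2}\norm{\hat{\Omega}_s-I_p}_\infty$ for every realisation, and taking expectations bounds the numerator by $\gamma^{-1/2}$ times the denominator. The matching lower bound is the crux. Because $t_{s+1}$ is the \emph{minimal} admissible threshold, I would argue it cannot undershoot $r_s'$ by more than a lower-order amount: with $\gamma^{-s}p\to\infty$ the spectrum of $N_s$ obeys a semicircle-type law rather than being pinned by a handful of atypically large entries, so its operator norm varies stably (no macroscopic jumps) as entries are peeled off and decreases continuously in the threshold. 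Removing the final increment at level $t_{s+1}$ therefore lands the norm within relative error $o(1)$ of $r_s'$, and the concentration of the operator norm from the same lemma lets us pass to expectations.

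Combining the two directions shows that $\xv\norm{\hat{\Omega}_{s+1}-I_p}_\infty$ equals $\gamma^{-1/2}(1+o(1))$ times $\xv\norm{\hat{\Omega}_s-I_p}_\infty$, which is (b); by the reduction of (a) to (b) above, the identical contraction factor governs (a) once the common scale $\xv\norm{N_s}_\infty$ cancels. The main obstacle is precisely the tightness argument of the previous paragraph---ruling out a macroscopic spectral jump when the critical near-threshold entries are removed. Everything else is either deterministic (the radius rule $r_s'=\gamma^{-1/2}r_s$) or a routine order comparison driven by $\nu<1/2$, $\norm{\Omega}_\infty=o(p^{1/2})$, and the bias hypothesis; the genuine work lies in transferring the sparse-matrix semicircle behaviour into the statement that the greedy minimal-threshold step neither overshoots nor undershoots the target radius $\gamma^{-1/2}r_s$.
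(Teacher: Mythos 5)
There is a genuine gap, and it sits exactly where you locate it yourself: the lower bound. The upper half of your argument, $\norm{\hat{\Omega}_{s+1}-I_p}_\infty\le r_s'=\gamma^{-1/2}\norm{\hat{\Omega}_s-I_p}_\infty$, holds by construction of the algorithm and carries no information about false positives; the entire content of the theorem lies in the matching lower bound, and what you offer there --- that the spectrum of $N_s$ follows a ``semicircle-type law'', so the operator norm ``varies stably'' and the minimal threshold lands within relative error $o(1)$ of $r_s'$ --- is a heuristic, not a proof. Worse, the tool you would need to make it rigorous is not available in the form you invoke it. After hard thresholding, the surviving off-support entries of $\hat{\Omega}_s$ are exactly those whose magnitudes exceed the (data-dependent) thresholds $t_1,\ldots,t_s$: the zero/nonzero pattern is a deterministic function of the entry values, not a mask independent of them, and the surviving entries are conditioned to be large. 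Lemma~\ref{lem:lazyhoef} and Theorem~\ref{thm:matConc} both require the Bernoulli variables to be independent of the matrix entries, so your claim $\norm{N_s}_\infty\asymp t_s(\gamma^{-s}p)^{1/2}$ --- two-sided, and with the step-$s$ false-positive pattern treated as iid Bernoulli$(\gamma^{-s})$ --- is assumed rather than derived. Since both your reduction of (a) to (b) and your lower bound rest on this, the proposal does not close.

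The paper takes a route that avoids the thresholding analysis altogether. It models the iterate with false positive rate $\gamma^{-s}$ as $\hat{\Omega}_s=B_s\circ\hat{\Omega}$, with $B_s$ an iid Bernoulli$(\gamma^{-s})$ mask independent of the data. This makes the bias computable in closed form, $\text{bias}(\hat{\Omega}_s)=\gamma^{-s}\,\text{bias}(\hat{\Omega})+(1-\gamma^{-s})\Omega=o(p^{1/2})$ --- the step that consumes both hypotheses $\norm{\text{bias}(\hat{\Omega})}_\infty=o(\gamma^{s}p^{1/2})$ and $\norm{\Omega}_\infty=o(p^{1/2})$ --- and it makes the centered matrix $\hat{\Omega}_s-\Omega-\text{bias}(\hat{\Omega}_s)$ a legitimate sparse mean-zero ensemble to which Theorem~\ref{thm:matConc} and the Bai--Yin theorem apply, yielding a two-sided sandwich $\xv\norm{\hat{\Omega}_s-\Omega}_\infty=(2+o(1))(\gamma^{-s}p)^{1/2}\pm o(p^{1/2})$ via the triangle inequality. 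Part (a) follows by taking ratios, and (b) follows from (a) using $K=\norm{\Omega-I_p}_\infty\le\kappa=o(p^{1/2})$ (the reverse direction of your reduction; that difference is cosmetic). In other words, the theorem as proved is a statement about Bernoulli-masked estimators --- it links a false positive rate to an operator-norm radius --- rather than about the greedy thresholding step itself, and under that reading no ``no-undershoot'' tightness argument is needed; that is precisely the step your version requires and cannot supply.
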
 

\begin{remark}
  To control the false positive rate, we require a few assumptions
  in the theorem statement.  Namely, the number of non-zero entries
  per row $\kappa$ and the operator norm of $\Omega$ cannot grow
  at a rate faster than $p^{1/2}$ as $p$ increases.  In the 
  simulations of Section~\ref{sec:simData}, we have the much 
  nicer setting where these quantities remain bounded as $p$ increases.
\end{remark}
\begin{remark}
  The accuracy of the conclusion of Theorem~\ref{thm:falsepos}
  is improved as $p$ increases for a fixed sample size $n$.
  This fact motivates the subsampling methodology in 
  Section~\ref{sec:subsamp}.  
  Secondly, as the false positive rate decreases, the
  bias is allowed to be larger.  Hence, we can choose $\gamma^{-s}$
  based on $p$.  This is because as $s$ increases, we threshold
  more aggressively effectively fighting against increases in the
  bias.
\end{remark}

As noted in Remark~\ref{rk:initChoice}, we chose the method of 
\cite{JANKOVA2015} for our simulations in Section~\ref{sec:simData}.
It is shown in their work that this debiased estimator has the following 
maximal entrywise bias,
$b_\text{max} = 
\max_{i,j}\{\text{bias}(\hat{\Omega})\} = 
O(\kappa\log(p)n^{-1/2})$
or $=O(\kappa^{3/2}\log(p)n^{-1/2})$ depending on specific assumptions
on the sub-Gaussian nature of the data.
Hence, 
$\norm{\text{bias}(\hat{\Omega})}_\infty=o(b_\text{max} p^{1/2}\gamma^{-s})$.
As we have control over the false positive rate $\gamma^{-s}$, we can 
choose this to satisfy the conditions of the above theorem.
Namely, considering false positive rates less than $p^{-1/2}$
or $p^{-3/4}$, which are generally of more interest than large
false positive rates.

Further considering the debiased estimator described in 
\cite{JANKOVA2015} and given the assumptions made in that
article, we can recover the support of the matrix asymptotically
as $n,p$ increase and $\alpha\rightarrow0$.
Indeed, Lemma~9 from \cite{JANKOVA2015} and equivalently
Theorem~1 from \cite{RAVIKUMAR2011} assume an 
{irrepresentability condition} common in the lasso literature
and get convergence rates of the graphical lasso estimator 
depending on the tail behaviour of the random vectors $X_1,\ldots,X_n$.
Thus, the graphical lasso estimator asymptotically recovers the support.
Further assuming sub-Gaussian tails for the $X_i$, 
\cite{JANKOVA2015} shows that the remainder term in the
debiased graphical lasso estimator is asymptotically negligible.
Thus, thresholding the debiased estimator will in turn re-recover
the support.  
Similarly, the literature on sparse covariance matrix estimation 
generally requires sub-Gaussian tails for asymptotic support 
recovery 
\citep{BICKELLEVINA2008,BICKELLEVINA2008A,ROTHMAN2009,
CAILUI2011,KASHLAK_SPA2017}.

Beyond distributional assumptions, a quick calculation can
demonstrate that the true positive probability is 
necessarily greater than the false positive probability.
For some threshold $t_s\in[0,1]$ corresponding to a false
positive rate of $\alpha = \gamma^{-s}$, we have assuming
symmetry of the distribution of $\hat{\Omega}_{i,j}$ that
\begin{align*}
  \mathrm{P}(\text{true positive}) 
  &= 
  \mathrm{P}\left(
    \abs{\hat{\Omega}_{i,j}} > t_s\,\mid\,
    \Omega_{i,j}\ne0
  \right) 
  \\&=
  2\mathrm{P}\left(
    {\hat{\Omega}_{i,j}}-\Omega_{i,j} > t_s-\Omega_{i,j}\,\mid\,
    \Omega_{i,j}\ne0
  \right) 
  \\&=
  \mathrm{P}\left(
    \abs{\hat{\Pi}_{i,j}} > t_s'
  \right)
  = \mathrm{P}(\text{false positive for }t_s')
\end{align*}
which is that the probability of a true positive 
at threshold $t_s$
corresponds to the probability of a false positive
at threshold $t_s'>t_s$.  Thus, this method is 
guaranteed to return at least as many true positives
proportionally as false positives and generally,
as will be seen in Section~\ref{sec:simData},
performs much better.

\subsection{Subsampling}
\label{sec:subsamp}

As Theorem~\ref{thm:falsepos}, becomes more accurate for 
large $p$, we can run the above methodology
in parallel by randomly partitioning the sample of size $n$
into subsamples of size $n/k$.
As a result, we run our method in parallel $k$ times returning
estimators $\hat{\Omega}^{(1)},\ldots,\hat{\Omega}^{(k)}$.
For a false positive rate of $0<\alpha\ll1$ and the assumption 
from Theorem~\ref{thm:falsepos}
that $\Omega\in\mathcal{U}(\kappa,\delta)$ with 
$\kappa=O(p^\nu)$ and $\nu<1/2$, then the expected number of 
false positive recoveries is 
$\alpha[p(p+1)/2-Cp^{1+\nu}] = O(\alpha p^2)$.  
After splitting the sample into $k$ disjoint pieces, 
we can run the above algorithm in parallel to construct
$k$ independent estimators $\hat{\Omega}^{(1)},\ldots,\hat{\Omega}^{(k)}$.
The probability
of recovering the same false positive entry in $d$ or more of the 
$\hat{\Omega}^{(i)}$ is a binomial tail sum: 
$\sum_{i=d}^k {k\choose i}\alpha^i(1-\alpha)^{n-i}$.
When $\alpha < 0.001$, then choosing $d=2$ is 
generally sufficient.
Indeed, we can bound this binomial tail sum by 
the Kullback-Leibler divergence. 
For $B\dist\distBinom{k}{\alpha}$, 
\begin{multline*}
  \mathrm{P}\left({B\ge d}\right) 
  \le \exp\left\{
      -{d} \log\left( \frac{d/k}{\alpha} \right)  
      -({k-d})\log\left(\frac{1-d/k}{1-\alpha}\right) 
  \right\}\\
  \le k^k\left(\frac{\alpha}{d}\right)^d
       \left(\frac{1-\alpha}{k-d}\right)^{k-d}
  \le \frac{k^k}{d^d(k-d)^{k-d}}\alpha^d
  \le 2^k\alpha^d.
\end{multline*}
If we want a target false positive rate of $2^{-10}$, 
we can choose $\alpha$ such that 
$2^k\alpha^d = 2^{-10}$ or $\alpha = 2^{-(10+k)/d}$
to achieve the same false positive rate as when not
subsampling.  For example, if we take $d=2$ and keep $k<10$, then we 
can relax the false positive rate for each individual 
subsampled estimator.

This addition of subsampling to the methodology allows for 
faster runtimes, as the $k$ estimators can be computed in 
parallel with fewer iterations, and also increases accuracy
as the target false positive rate decreases below $p^{-1}$
as will be seen in Section~\ref{sec:simData}.

\section{Applications}

\subsection{Simulated Data}
\label{sec:simData}

We test our methodology for three different graphical structures.
In the first case,
$\Omega$ tridiagonal--i.e. $\Omega_{i,i}=1$, $\Omega_{i,j}=1/3$
if $\abs{i-j}=1$, $\Omega_{i,j}=0$ if $\abs{i-j}>1$.  This is 
the sparsest connected graph structure with $p-1$ edges on $p$
nodes.
In the second case,
$\Omega$ represents a binary tree graph with depth $d$
resulting in $d(d+1)/2$ nodes and $d(d-1)$ edges, which is 
roughly $2p$ edges for $p$ nodes.  Similar to case one,
diagonal entries are set to $1$ and non-zero off-diagonal entries
are set to $1/3$. 
In the third case,
$\Omega$ represents a disjoint collection of $k$ complete graphs
on $d$ nodes.  Thus, there are $kd$ nodes and $kd(d+1)/2$ edges,
and $\Omega$ is a block diagonal matrix with blocks 
$\frac{1}{3}{\bf1}_{d} + \frac{2}{3}I_{d}$ with ${\bf1}_d$ being
the $d\times d$ matrix of all $1$'s and $I_d$, the ${d\times d}$
identity matrix.

For simulations, the sample size was fixed at $n=50$.
The dimensions considered are $p=496,1128,2080$ as they
correspond to binary trees of depth $31,47,64$, respectively,
and correspond to block diagonal matrices with 
$62,141,260$ $(8\times8)$-blocks, respectively. In all nine 
cases considered, the non-zero off-diagonal entries are all 
set to $1/3$ with the diagonal entries set to $1$.

In Figure~\ref{fig:fptpPlot}, we plot the empirical 
true and false positive rates on the vertical axis against
the target false positive rate on the horizontal axis.  The
target false positive rate acts as a tunable parameter.
Three types of matrices, three dimensions, and three values
of the graphical lasso regularization parameter $\lmb = 0.5,1,2$ are 
considered.
Ideally, the observed false positive rates will be close
to the solid black line, which is where the observed and 
target rates coincide.  This occurs as the dimension
increases in all three cases.  For the sparser models--tridiagonal
and binary tree structures--we are able to maintain a true
positive rate above 20\% as the false positive rate is taken to zero.
Performance is much worse when trying to recover the 
block diagonal matrix.  For the lowest dimension, $p=496$,
setting the penalization parameter $\lmb=0.5$ gives the best
results whereas in the higher dimensions $\lmb=1$ gives the 
best performance albeit only slightly.  Hence, it appears that
for choosing an initial estimator, the graphical lasso penalization parameter 
can be set to $1$ regardless of dimension and true $\Omega$ 
for this methodology.
On this note, we also considered $\lmb=0.25$
but the performance was terrible and thus it is not included
in the figures.  For an alternative look at the simulations,
receiver operating characteristic curves are included in the 
supplementary material.
The same simulations were also run on multivariate Laplace data
with details in the supplementary material.  In short, 
choosing $\lmb=1$ allowed for the empirical false positive 
rate to stay close to the target rate whereas 
$\lmb=2$ did not perform as well.

In Figure~\ref{fig:subs}, we consider the tridiagonal and 
binary tree models estimated by the standard approach as before
and by subsampling as in Section~\ref{sec:subsamp} with $k=2$.
By subsampling, we can extend the false positive rate from 
around $\alpha=p^{-1}$, where the standard method begins to 
breakdown, all the way to around $\alpha=p^{-2}$.  Hence, 
we can have a controlled reduction in the empirical false positive 
rate to effectively zero, as the precision matrix only has $p^2$ 
entries, using the subsampling method.

In Table~\ref{tab:glsclm}, we tabulate empirically achieved true
and false positive rates from the cross validated graphical lasso 
\citep{GLASSOPACK} and 
ACLIME estimators \citep{FASTCLIME}. These estimators were 
computed by splitting the $n=50$ sample in half and optimizing
the tuning parameters with respect to the operator norm distance.
The graphical lasso estimator's achieved false positive was roughly 
halved as the dimension doubled.
The ACLIME estimator much more aggressively penalized the 
matrix entries.  
The threshold method of \cite{JANKOVA2015} was also tested, which
for a false positive rate $\alpha$, sets the individual entries
in the debiased estimator $\hat{\Omega}$ to zero if
$$
  \hat{\Omega}_{i,j} < 
  \Phi^{-1}\left(1-\frac{\alpha}{p(p-1)}\right)
  \frac{\hat{\Sigma}_{i,j}}{\sqrt{n}}
$$
where $\hat{\Sigma}$ is the empirical covariance matrix and $\Phi(\cdot)$ is
the cumulative distribution function for the standard normal distribution.  
A figure in line with Figure~\ref{fig:fptpPlot} for this method 
is included in the supplementary 
material.  In short, this approach should work asymptotically as 
$n,p\rightarrow\infty$.  In
our simulations, the empirically achieved false positive rate was not 
close to the target false positive rate.

\begin{figure}
  \begin{center}
  \includegraphics[width=0.32\textwidth]{\PICDIR/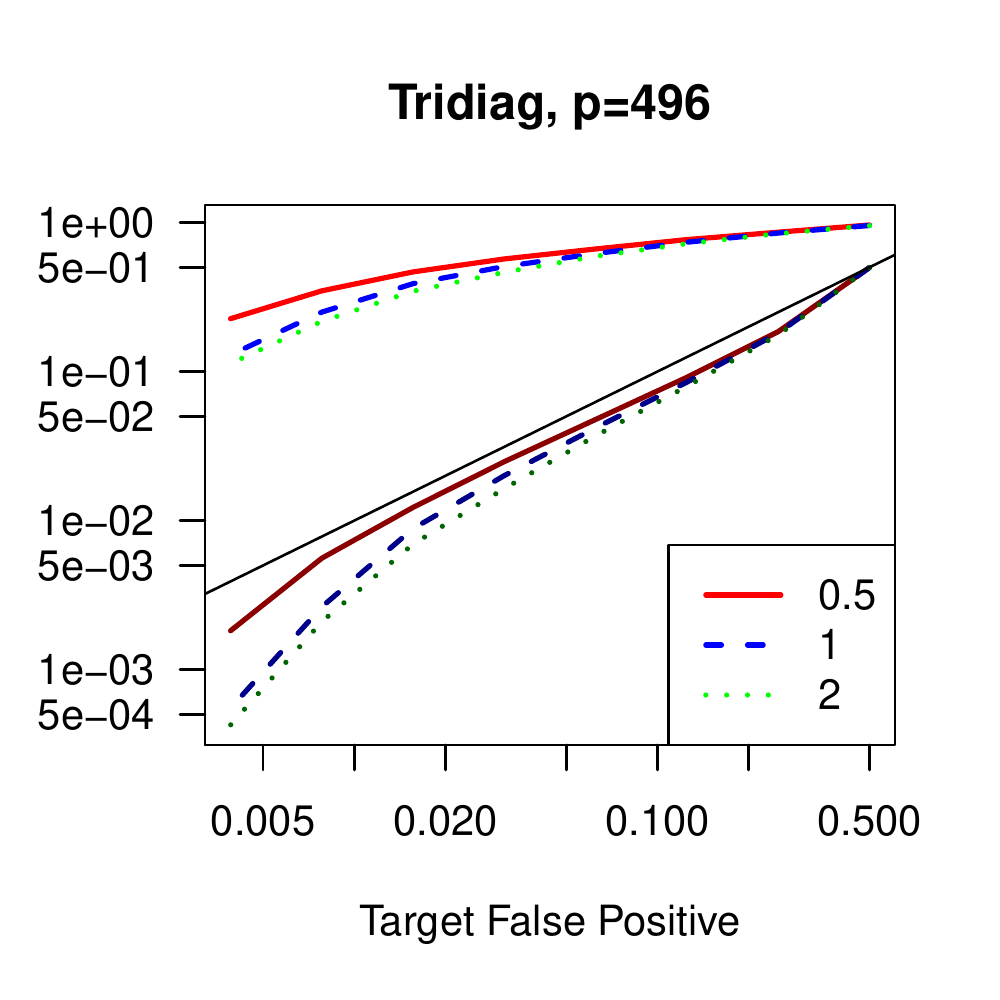}
  \includegraphics[width=0.32\textwidth]{\PICDIR/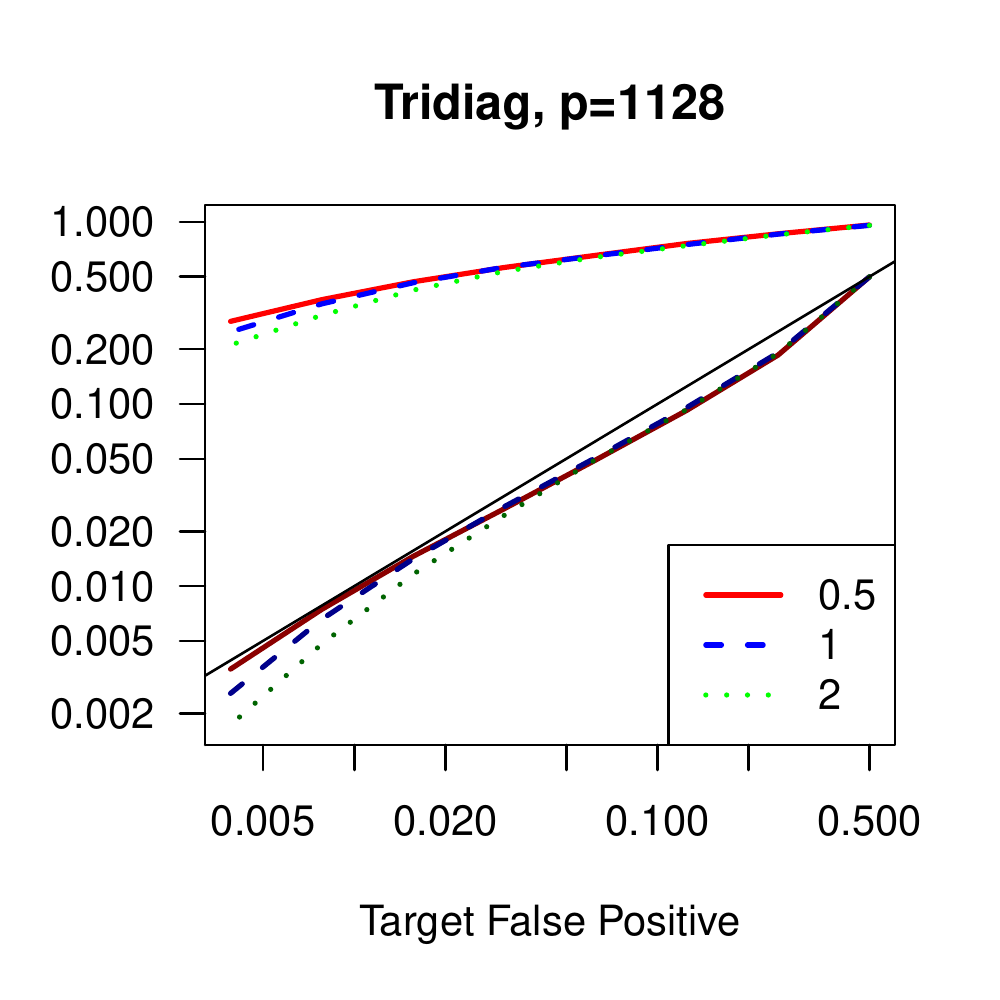}
  \includegraphics[width=0.32\textwidth]{\PICDIR/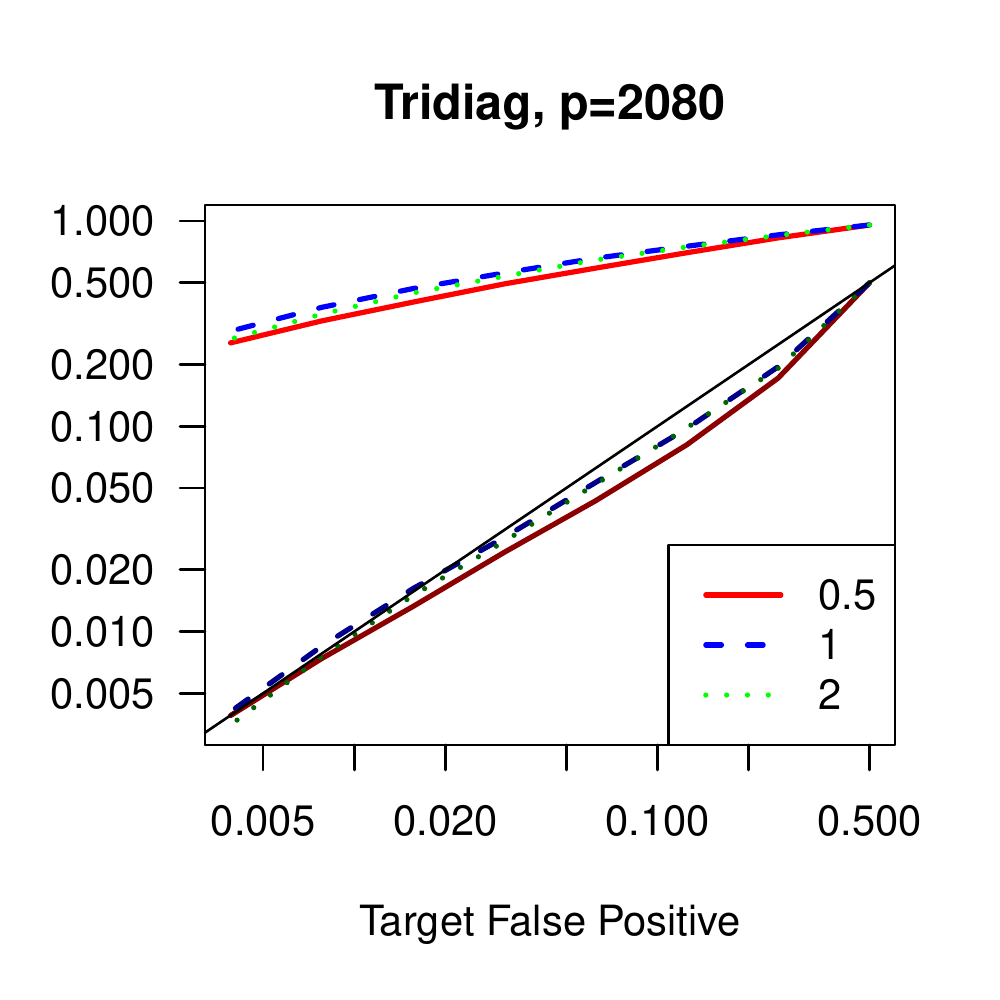}
  \includegraphics[width=0.32\textwidth]{\PICDIR/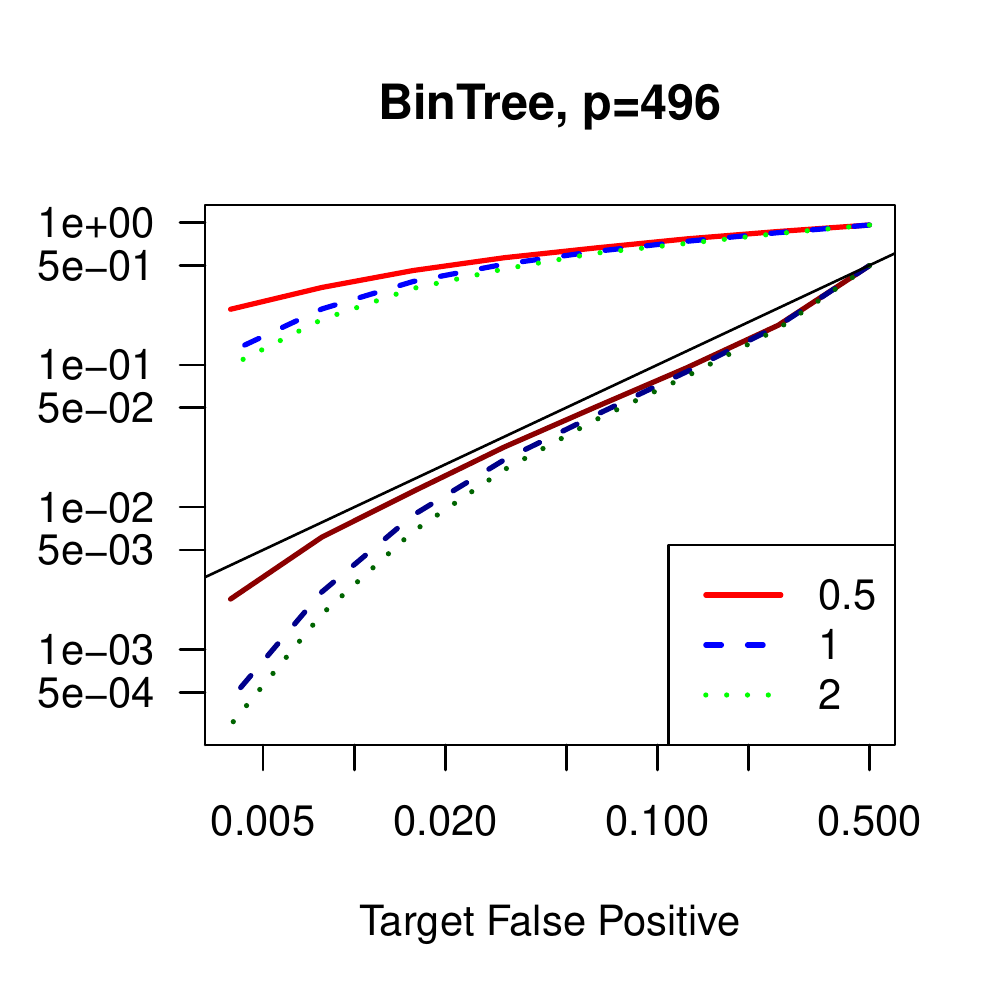}
  \includegraphics[width=0.32\textwidth]{\PICDIR/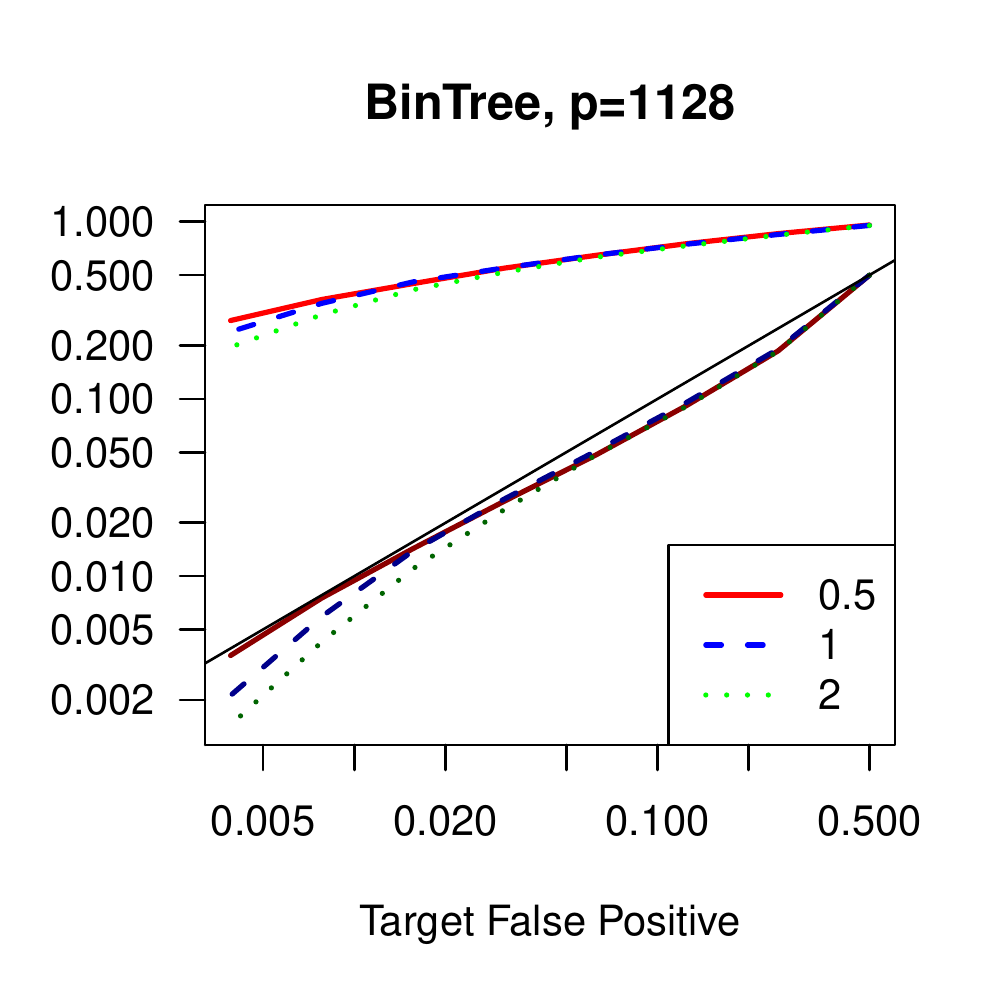}
  \includegraphics[width=0.32\textwidth]{\PICDIR/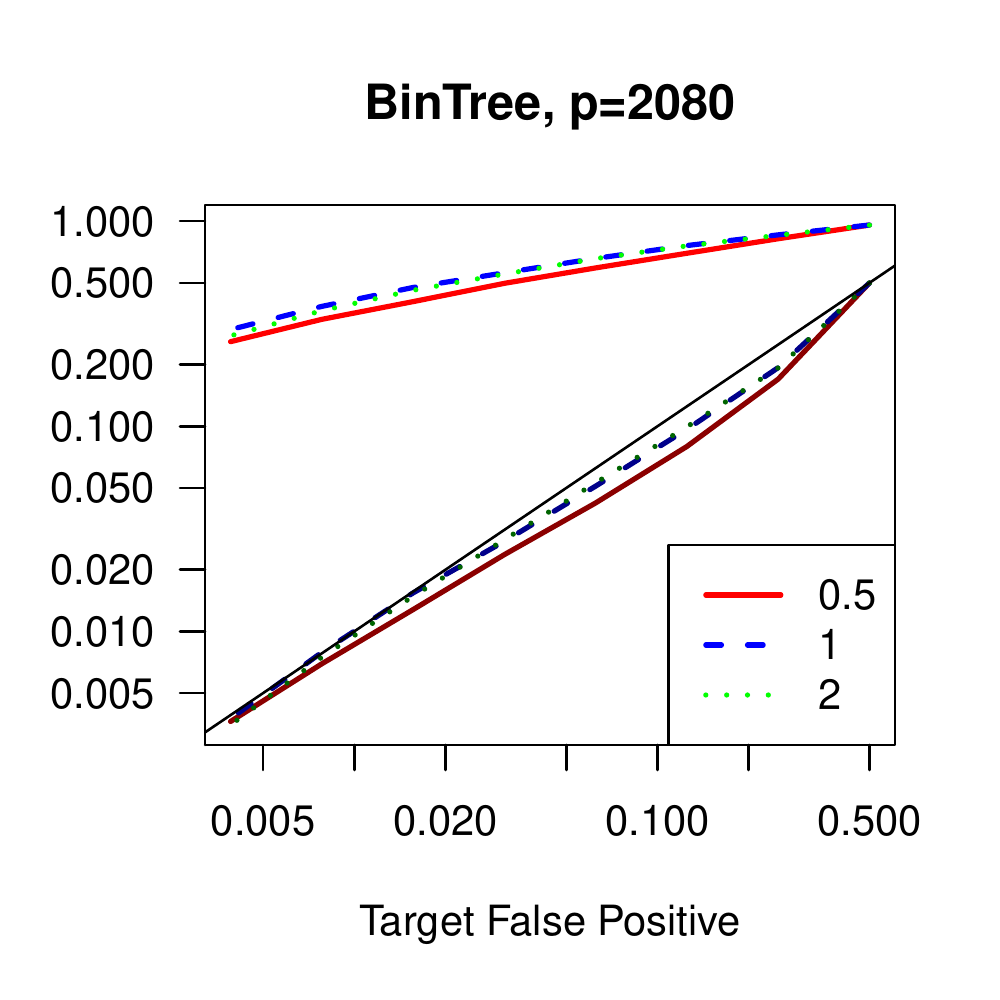}
  \includegraphics[width=0.32\textwidth]{\PICDIR/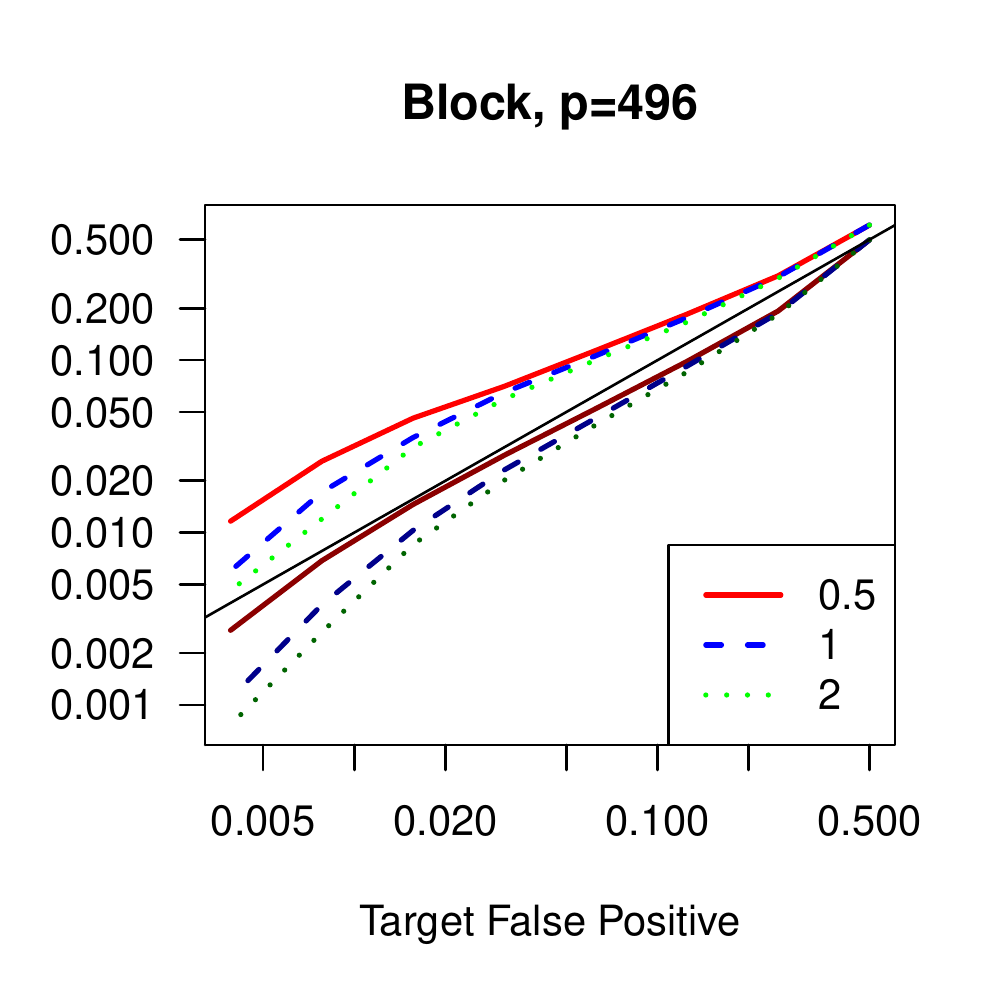}
  \includegraphics[width=0.32\textwidth]{\PICDIR/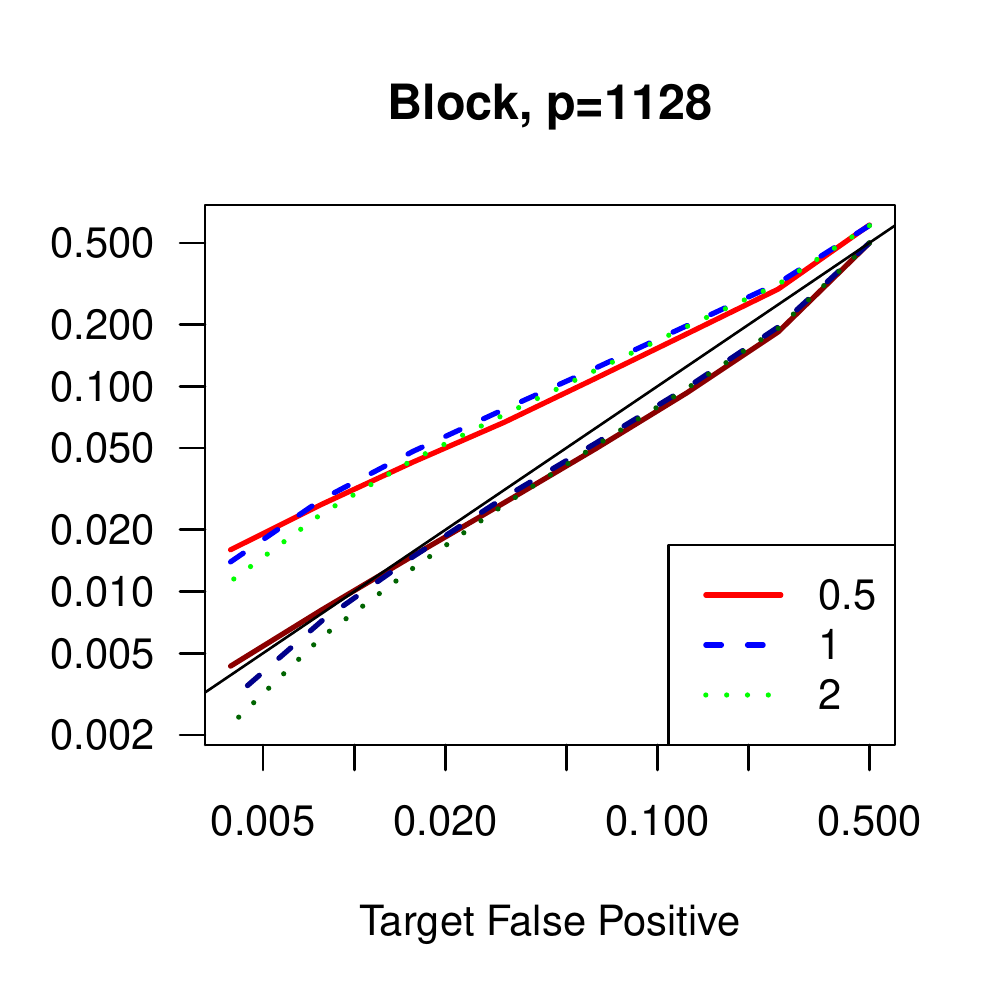}
  \includegraphics[width=0.32\textwidth]{\PICDIR/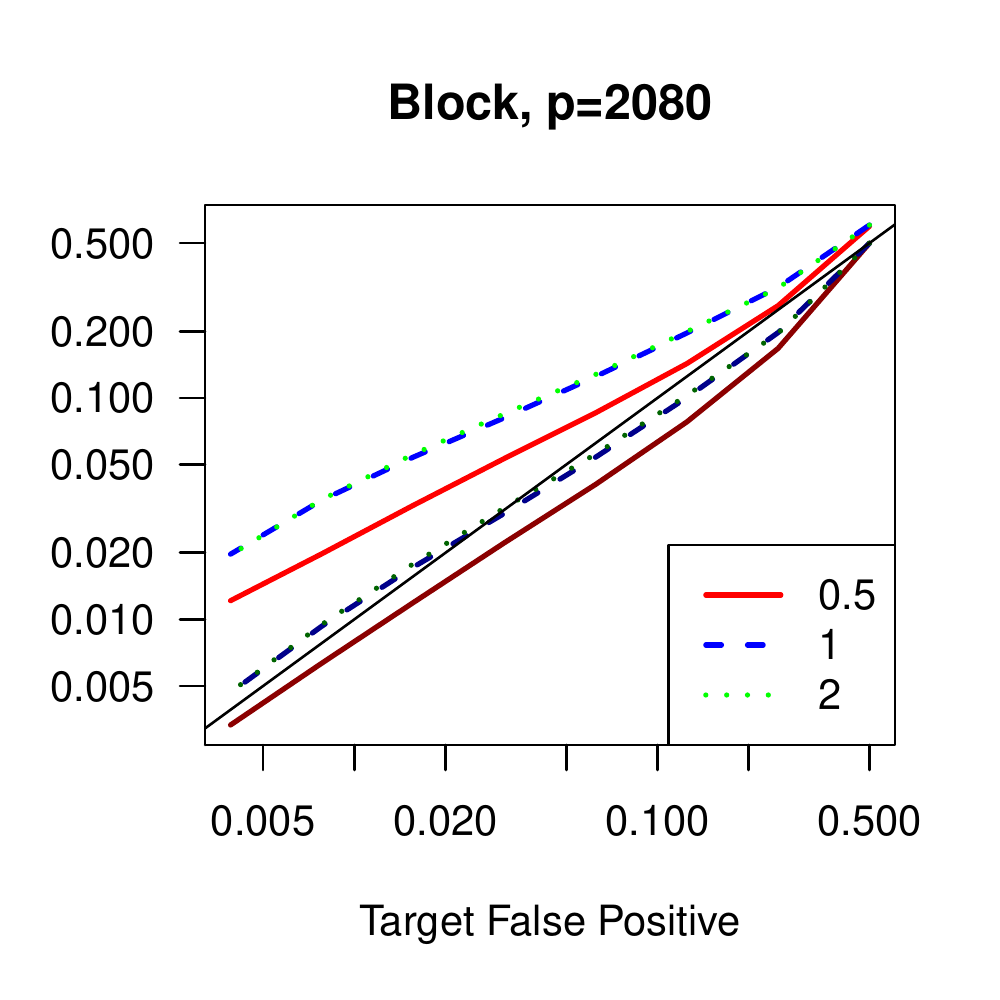}
  \end{center}
  \capt{
    \label{fig:fptpPlot}
    The achieved true and false positives plotted against the
    target false positive rate displayed on the log-log scale
    for multivariate Gaussian data.
    The rows from top to bottom correspond to the tridiagonal, 
    binary tree, and block diagonal matrices.  The columns from
    left to right correspond to dimensions 496, 1128, and 2080.
  }
\end{figure}

\begin{figure}
	\begin{center}
		\includegraphics[width=0.32\textwidth]{\PICDIR/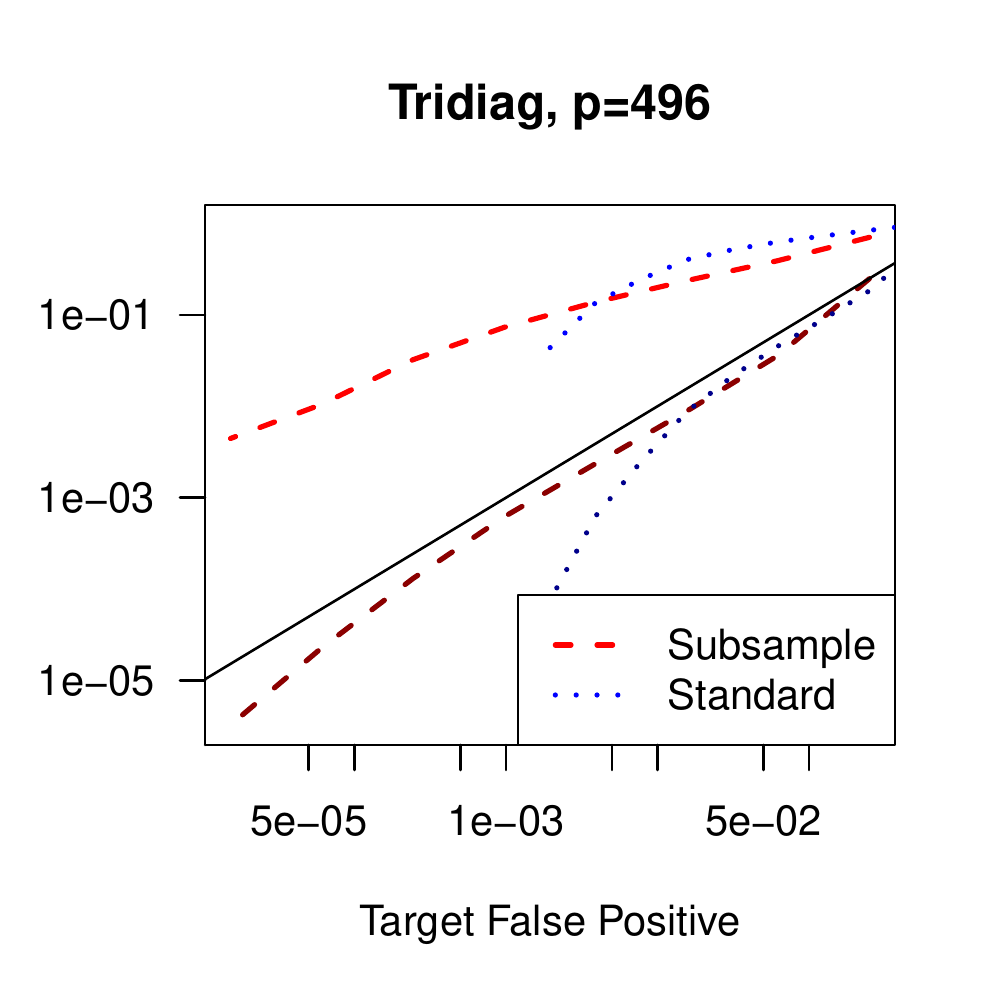}
		\includegraphics[width=0.32\textwidth]{\PICDIR/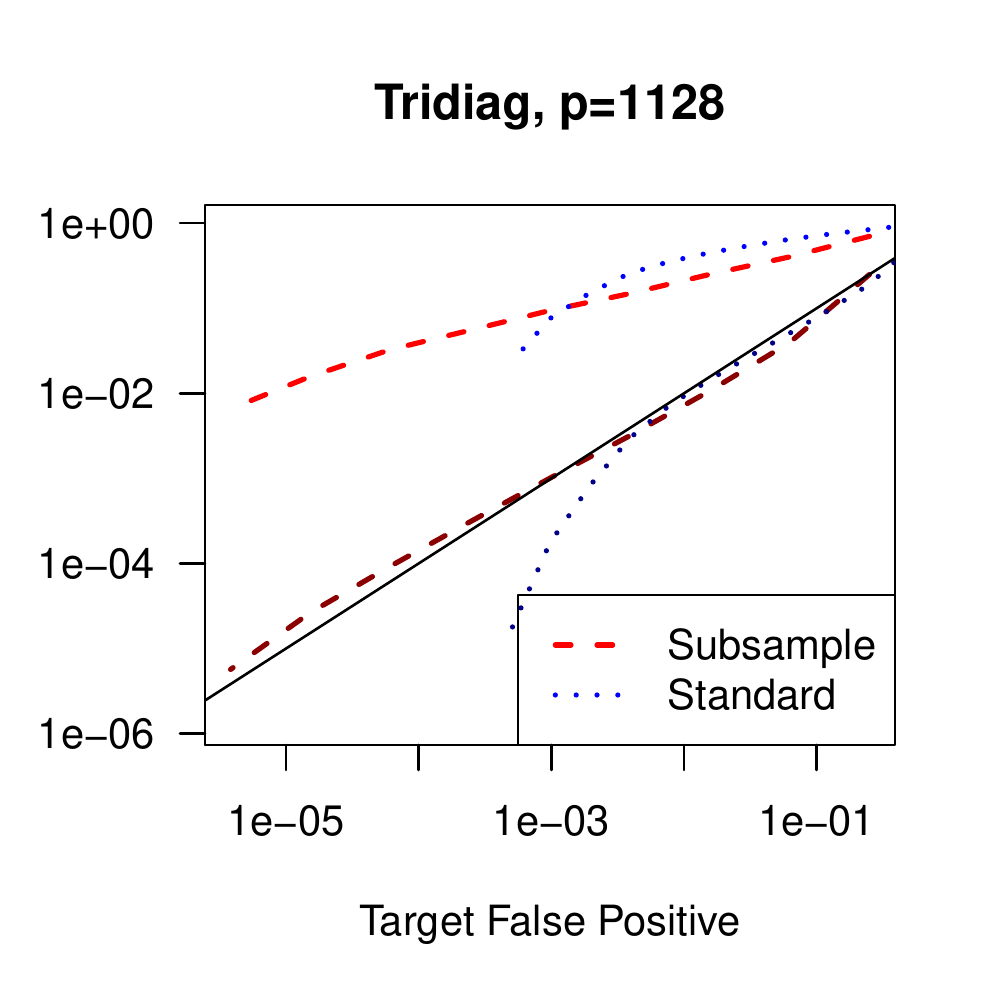}
		\includegraphics[width=0.32\textwidth]{\PICDIR/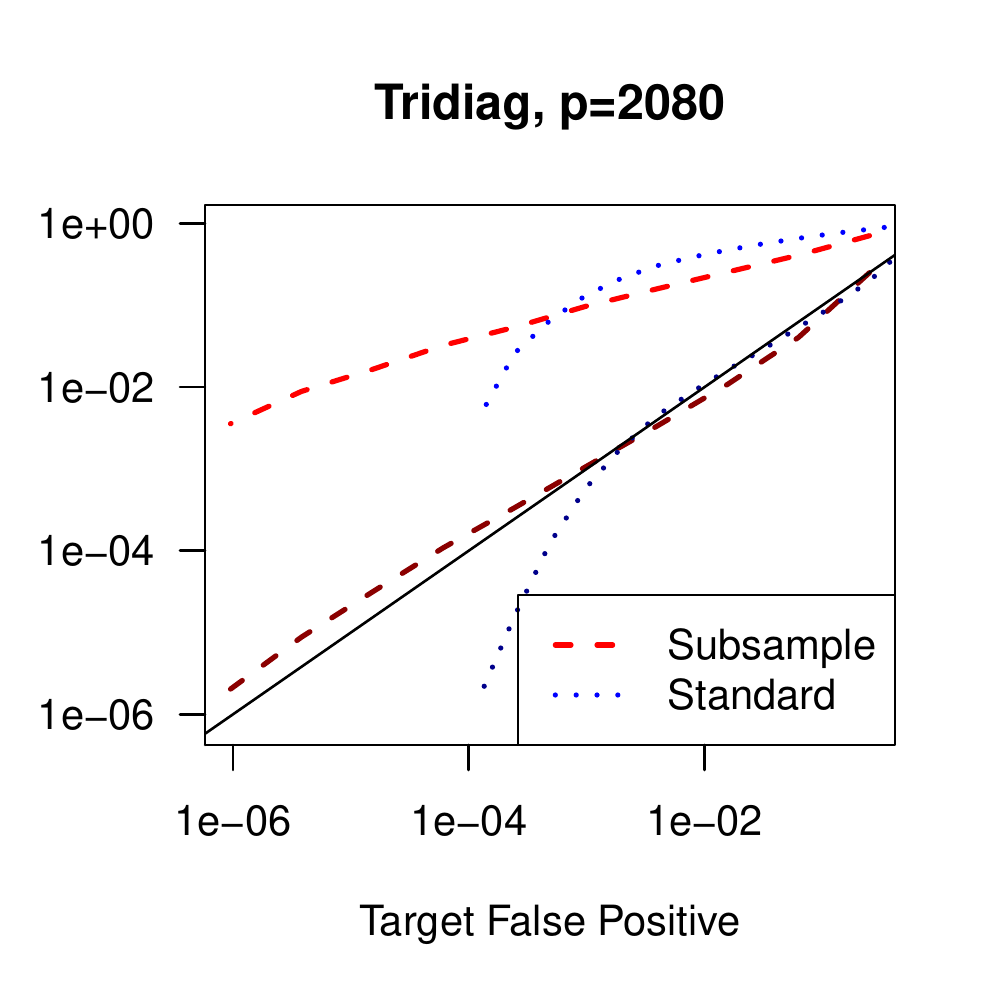}
		\includegraphics[width=0.32\textwidth]{\PICDIR/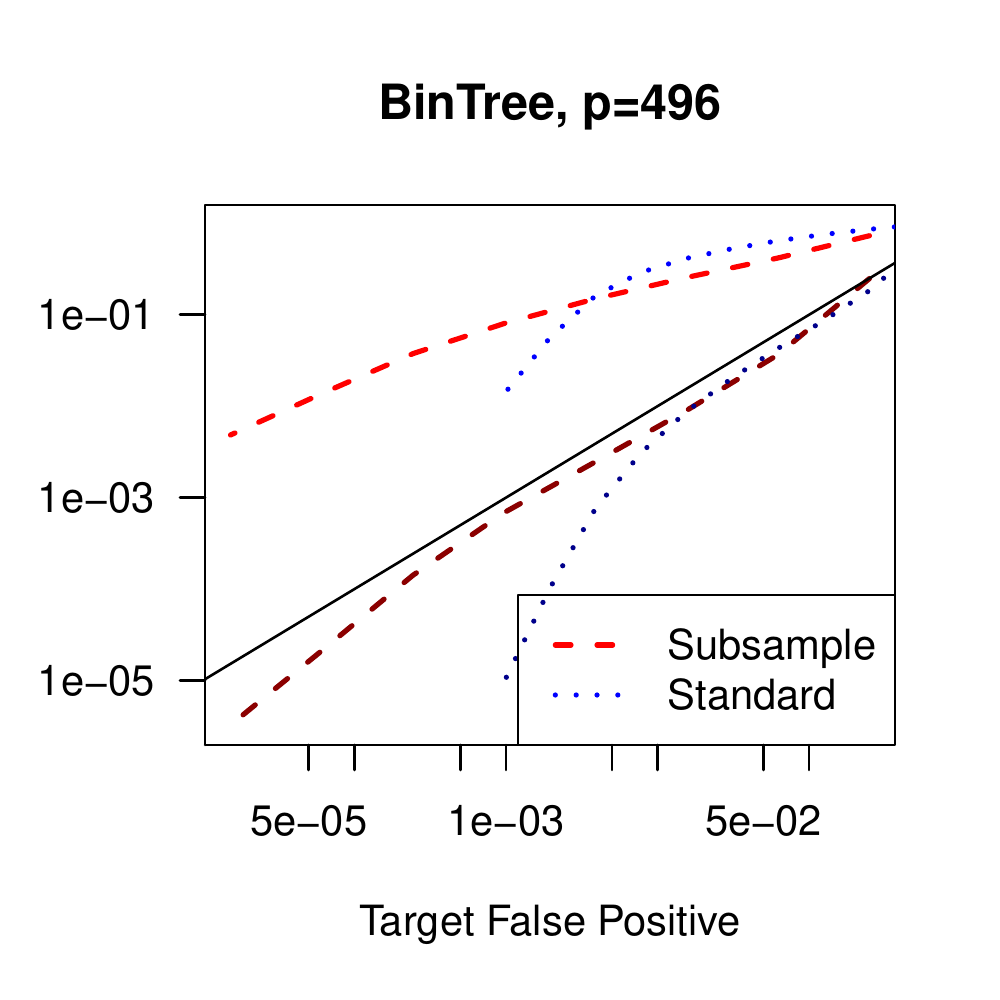}
		\includegraphics[width=0.32\textwidth]{\PICDIR/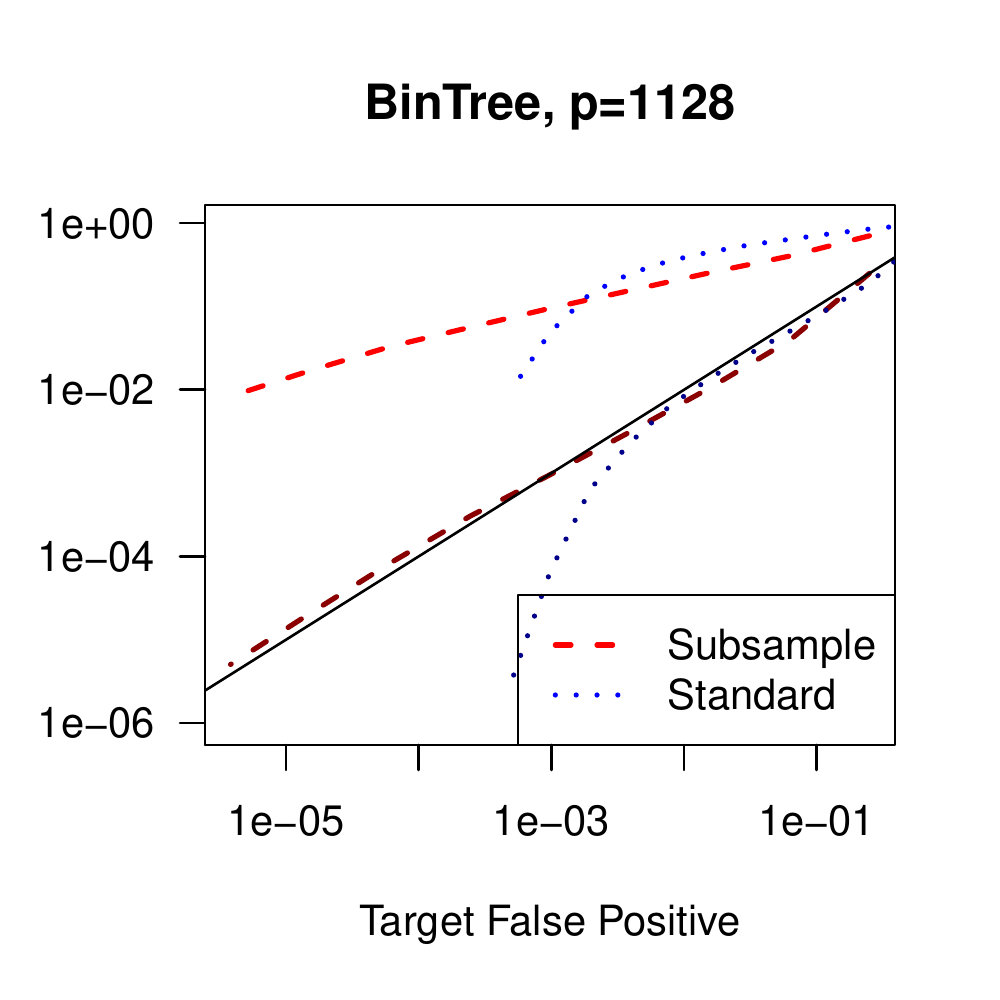}
		\includegraphics[width=0.32\textwidth]{\PICDIR/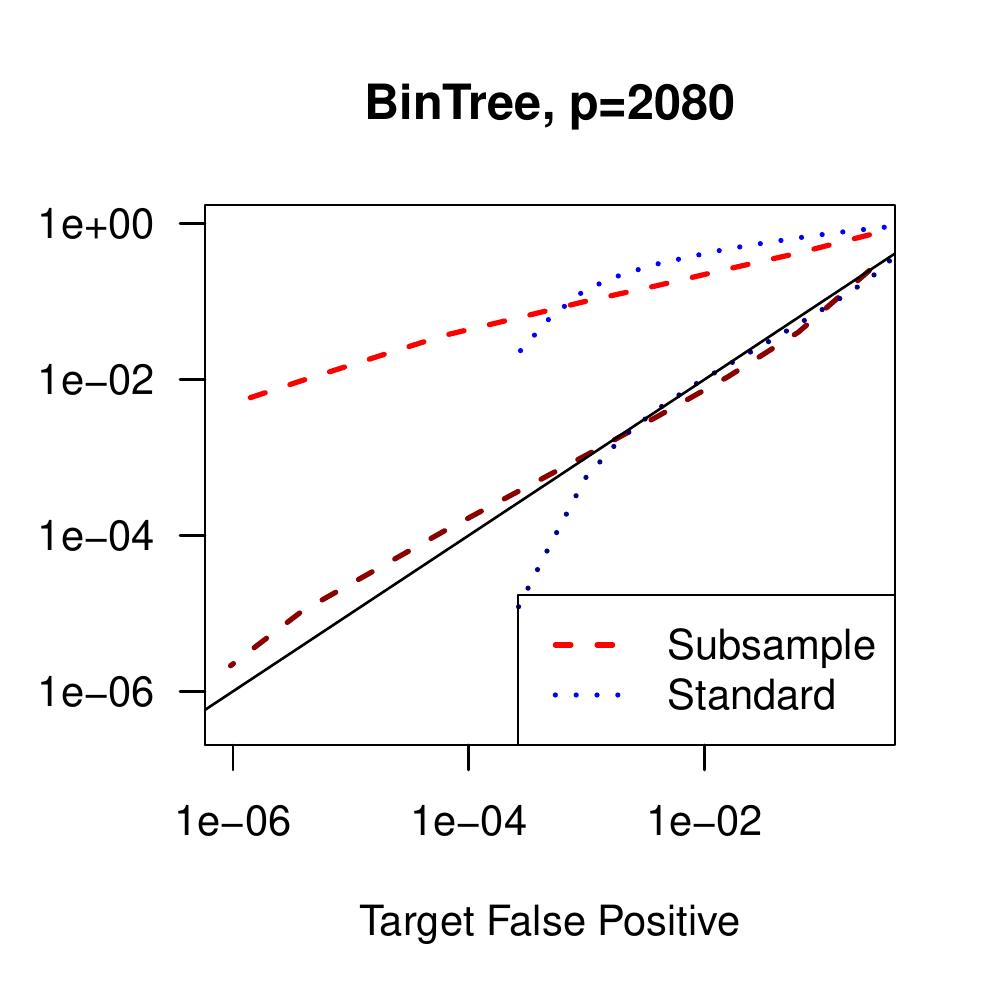}
	\end{center}
	\capt{
		\label{fig:subs}
		The achieved true and false positives plotted against the
		target false positive rate displayed on the log-log scale
		as in Figure~\ref{fig:fptpPlot}. The red lines are for the
		standard approach while blue lines are for subsampling by
		dividing the set in half.
	}
\end{figure}

\begin{table}
  \centering
  \fbox{
  \begin{tabular}{c|rrr||rrr}
  	& \multicolumn{6}{c}{Cross Validated GLASSO} \\
    & \multicolumn{3}{c}{False Pos} & \multicolumn{3}{c}{True Pos}\\\hline
    $p=$ &      496 & 1128 & 2080 & 496 & 1128 & 2080 \\
    TriDiag &  9.53 & 5.27 & 3.15 & 48.2& 36.3 & 29.9 \\
    BinTree &  9.71 & 5.22 & 3.15 & 47.3& 36.9 & 29.8 \\
    BlkDiag &  10.11& 5.37 & 3.20 & 14.1&  8.6 &  5.5 \\\hline\hline
      	& \multicolumn{6}{c}{Cross Validated ACLIME} \\
    & \multicolumn{3}{c}{False Pos} & \multicolumn{3}{c}{True Pos}\\\hline
    $p=$ &      496 & 1128 & 2080 & 496  & 1128 & 2080 \\
    TriDiag &  1.27 & 0.20 & 0.12 & 1.78 & 0.42 & 0.17 \\
    BinTree &  1.17 & 0.23 & 0.12 & 1.92 & 0.29 & 0.03 \\
    BlkDiag &  0.96 & 0.16 & 0.11 & 1.05 & 0.16 & 0.05 \\
  \end{tabular}}
  \capt{
  	\label{tab:glsclm}
  	The true and false positive percentages out of 100
        achieved by cross validated
  	graphical lasso (top) and ACLIME (bottom) estimators with respect to 
  	the operator norm distance.
  }
\end{table}

\subsection{Geonomics data}

We apply our methodology to the dataset of gene expressions
for a small round blue cell tumours mircoarray experiment from 
\cite{KHAN2001}, which is also analyzed in other works 
\citep{ROTHMAN2009,CAILUI2011}.
The data set consists of
a training set of 64 vectors containing 2308 gene expressions.
The data contains four types of tumours.
Considering this $X\in\real^{64\times2308}$, we construct
error controlled precision matrix estimators using the 
algorithm from Section~\ref{sec:algorithm} choosing the
debiased estimator of \cite{JANKOVA2015} as the initial 
estimator and for graphical lasso penalization parameter
$\lmb=0.5,1,2$.

In Figure~\ref{fig:khanNZ}, we plot the cardinality of the 
support of the estimators for the three penalization 
parameters as the false positive rate ranges from 
$2^{-1}$ to $2^{-10}$.  For $\lmb=0.5$, the method
plateaus quickly returning a small support of 74 nonzero
off-diagonal entries.  The other two lines continue to decay
towards zero off-diagonal entries.

In Figure~\ref{fig:khanRows}, we see kernel density plots of
the number of non-zero entries per row aggregated over all 
2308 rows.  In these plots, 
$\lmb = 0.5,1,2$ and $\alpha=0.5,0.001$ are considered.
Comparing the top and bottom row of plots, we see that 
there are many rows with approx 500 nonzero entries 
when $\alpha=0.5$, which quickly drops to fewer than 
10 nonzero entries per row once $\alpha=0.001$.

\begin{figure}
  \begin{center}
  \includegraphics[width=0.5\textwidth]{\PICDIR/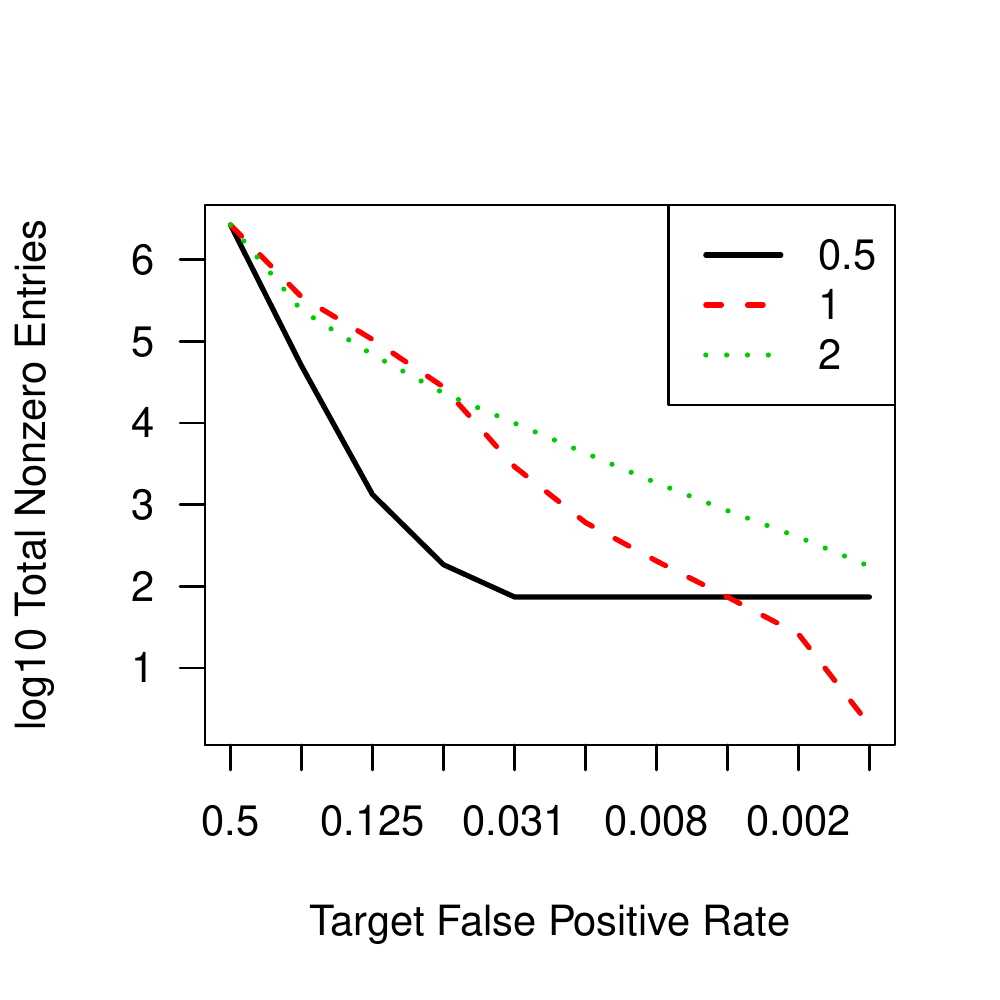}
  \end{center}
  \capt{
    \label{fig:khanNZ}
    The $\log_{10}$ number of non-zero entries in the $2308\times2308$
    precision estimator for graphical lasso penalizations of $0.5,1,2$
    and for false positive rates from $2^{-1}$ to $2^{-10}$.
  }
\end{figure}
\begin{figure}
  \begin{center}
  \includegraphics[width=0.32\textwidth]{\PICDIR/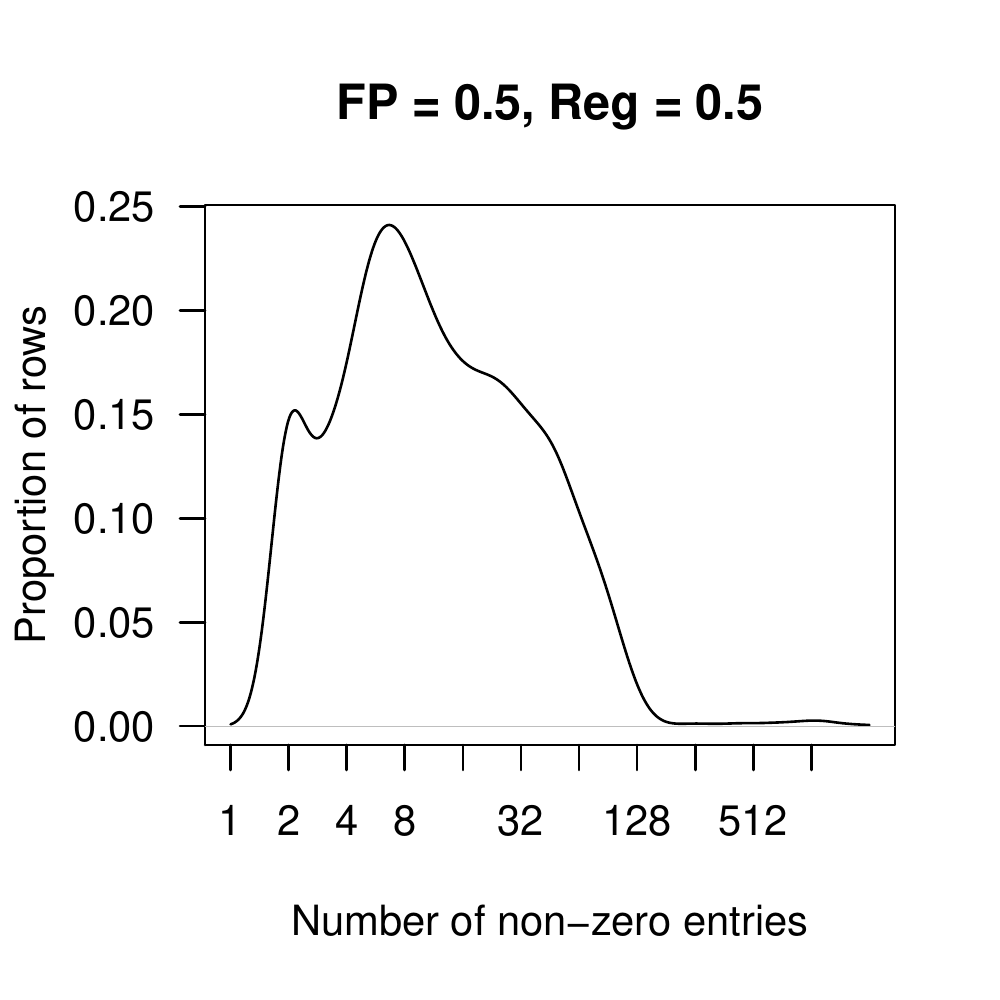}
  \includegraphics[width=0.32\textwidth]{\PICDIR/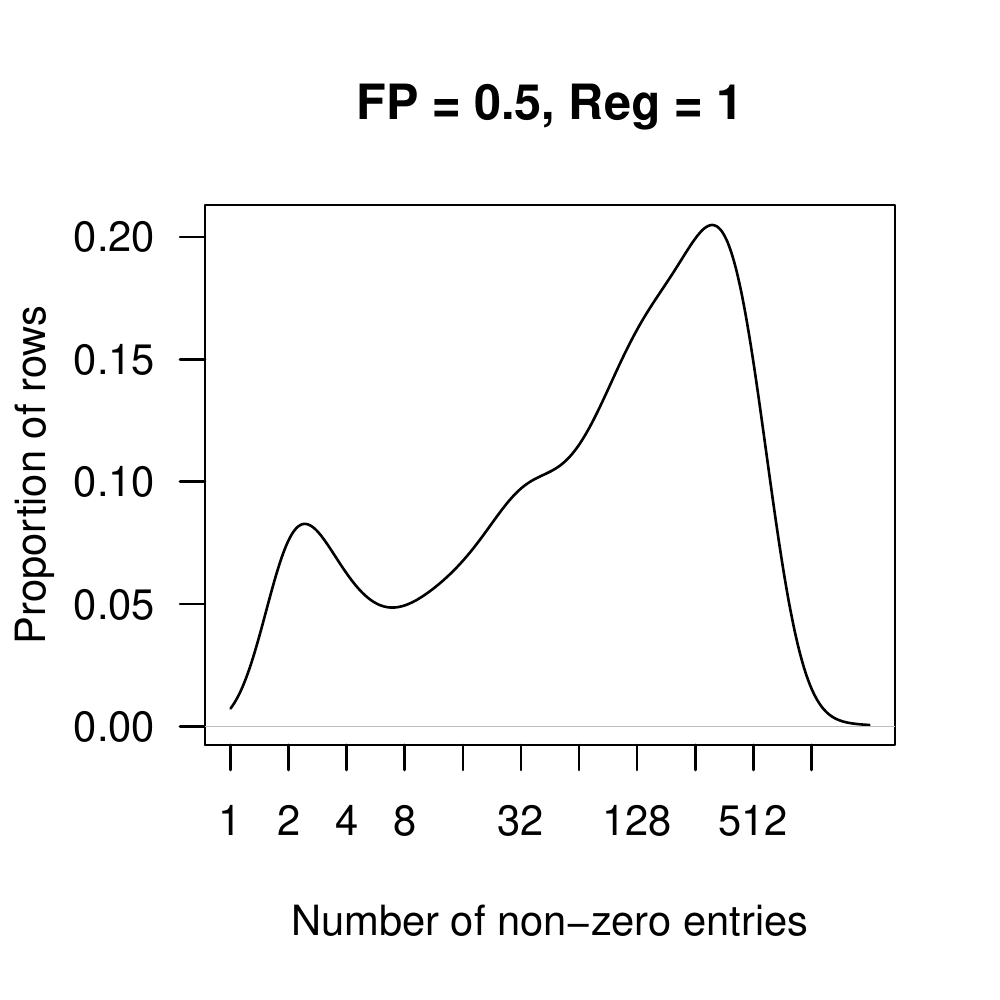}
  \includegraphics[width=0.32\textwidth]{\PICDIR/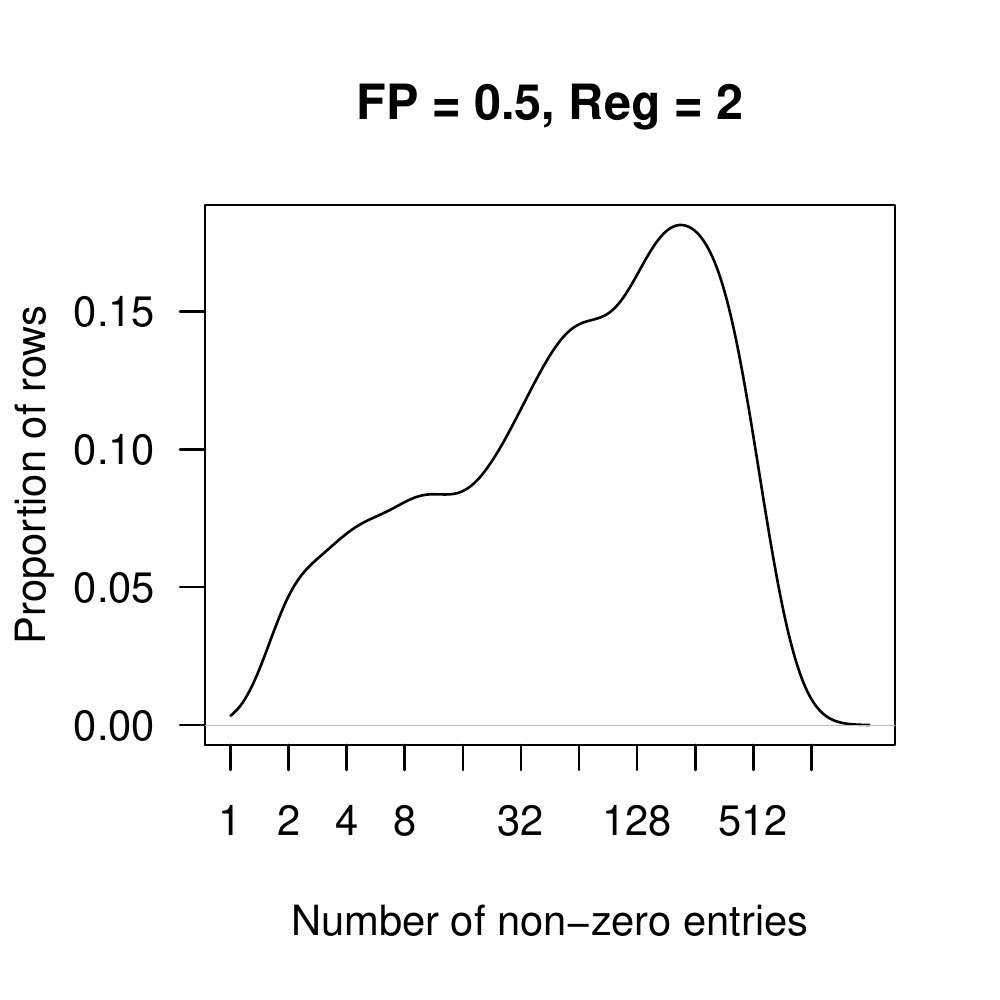}
  \includegraphics[width=0.32\textwidth]{\PICDIR/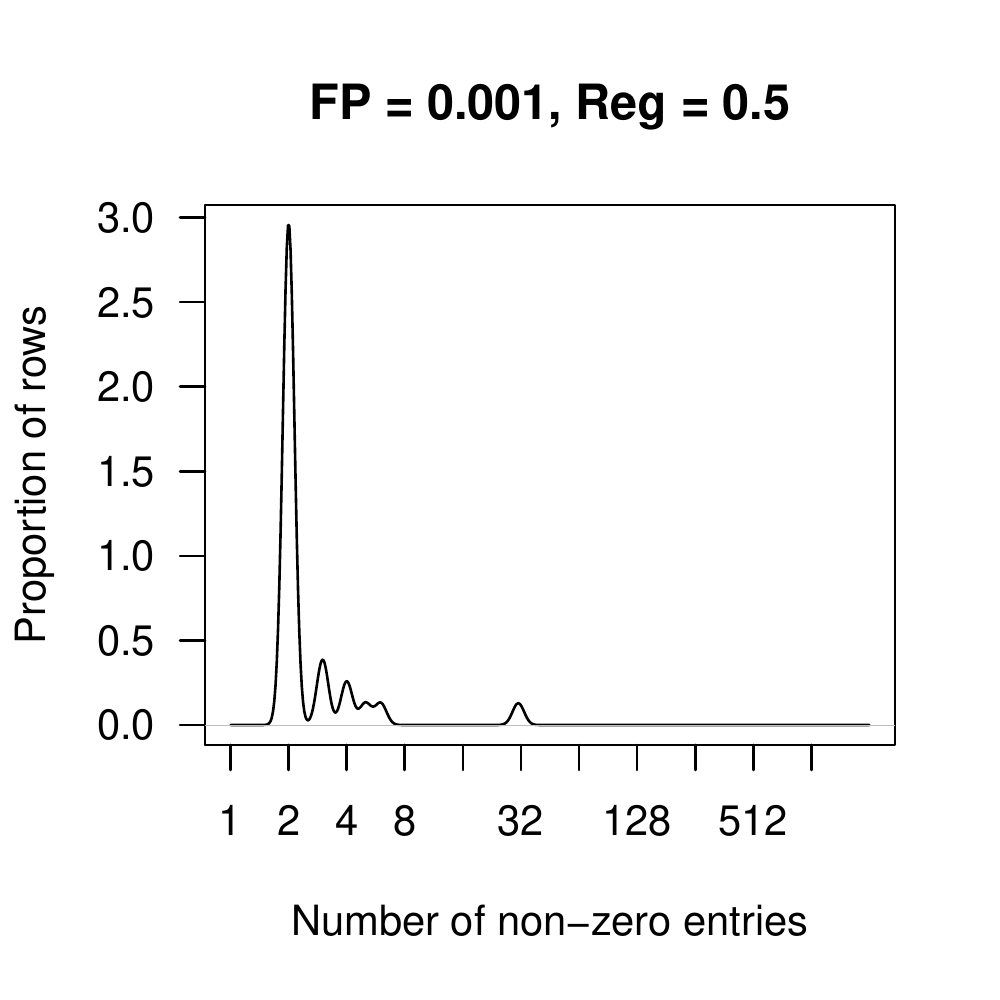}
  \includegraphics[width=0.32\textwidth]{\PICDIR/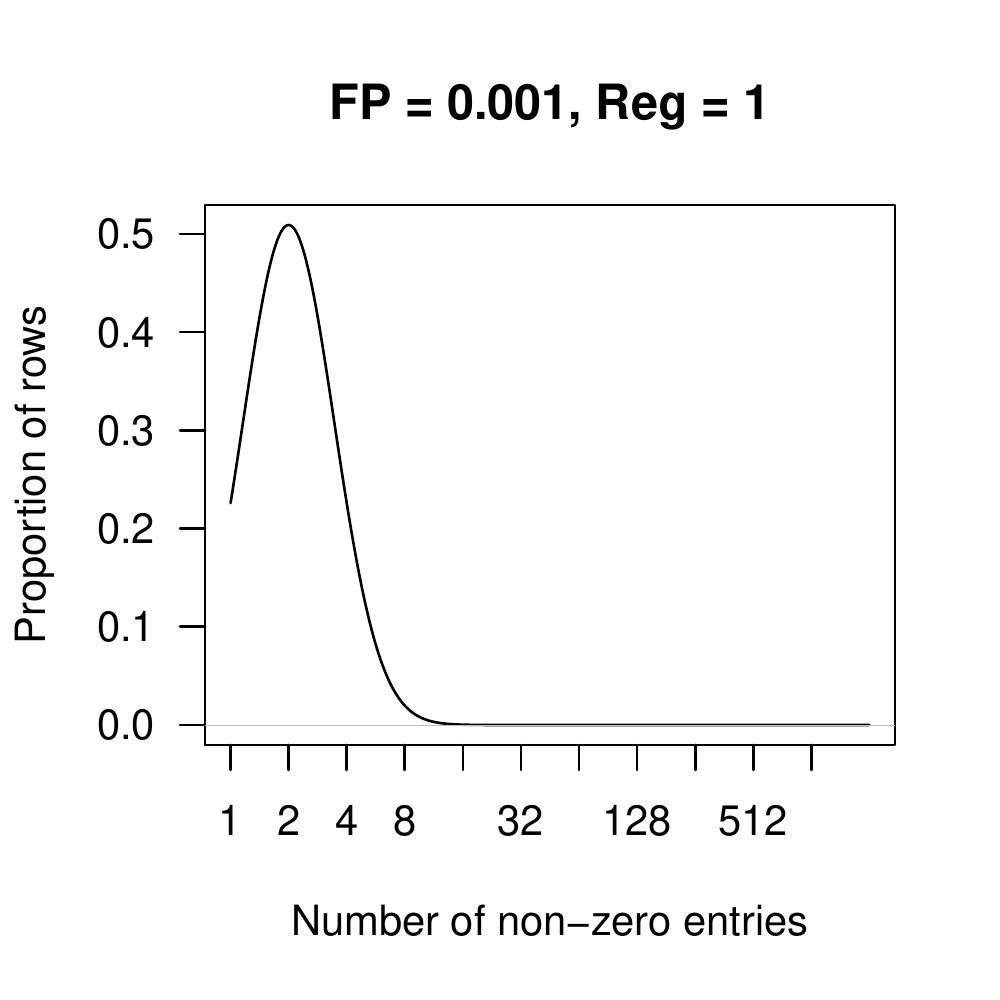}
  \includegraphics[width=0.32\textwidth]{\PICDIR/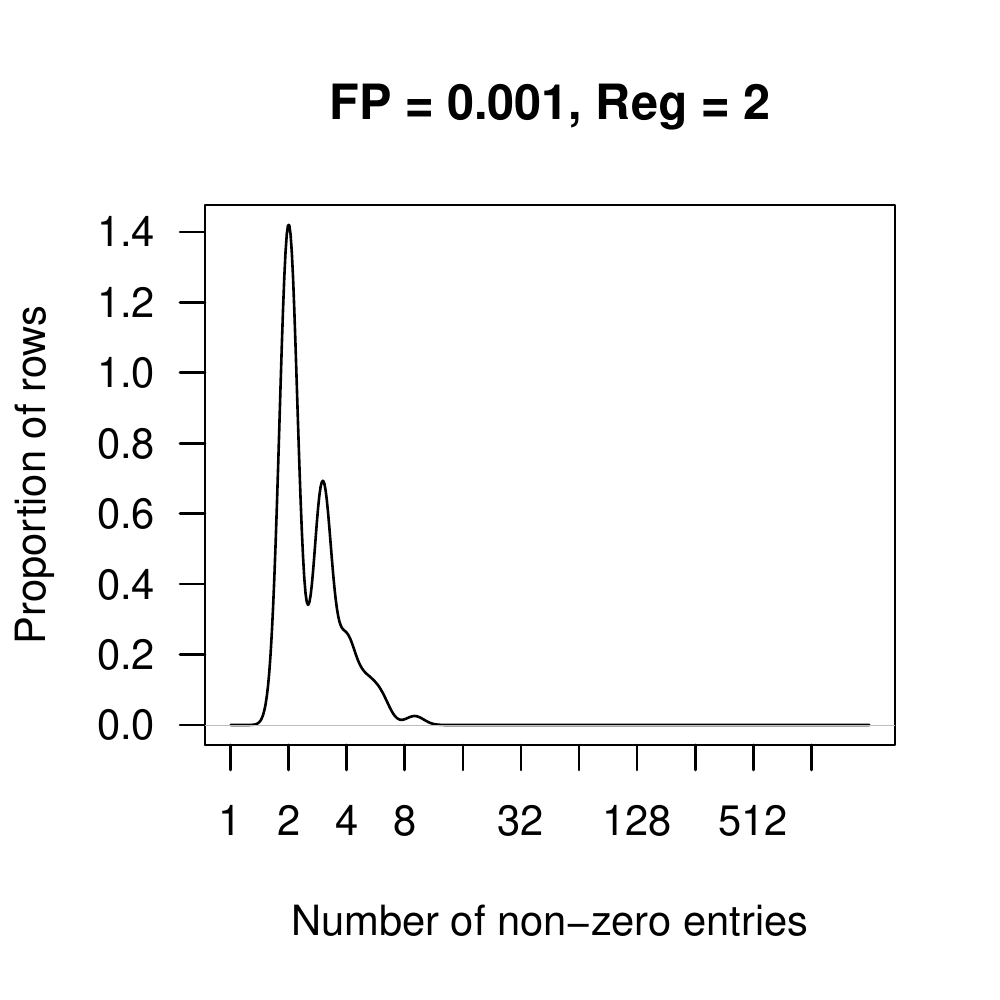}
  \end{center}
  \capt{
    \label{fig:khanRows}
    Density plots of the number of nonzero entries per row
    for precision estimators with, from left to right, 
    graphical lasso penalizations of $0.5,1,2$ and with 
    false positive rates 0.5, top row, and 0.001, bottom row.
  }
\end{figure}

\section{Supplementary Material}

Supplementary material available at Biometrika online includes
proofs for the main results and auxiliary results. 
Receiver operating characteristic curves for the data in 
Figure~\ref{fig:fptpPlot} are included, and
additional simulations for exponential data and the method from 
\cite{JANKOVA2015} are also included.


\bibliographystyle{biometrika}
\bibliography{\BIBDIR/kasharticle,\BIBDIR/kashbook,\BIBDIR/kashself,\BIBDIR/kashpack}

\section{Proofs}

The theory behind the methodology in this article comes from 
redoing the results from random matrix theory, which generally
achieves an operator norm of $O(p^{1/2})$,  under the assumption
that only $\alpha\in(0,1)$ of the entries are non-zero.  This
gives similar results but with an operator norm 
of $O(\alpha^{1/2}p^{1/2})$.

We first prove a lazy version of Hoeffding's concentration
inequality.
\begin{lemma}
  \label{lem:lazyhoef}
  For some $M>0$,
  let $Z_1,\ldots,Z_p\in[-M,M]$ be independent mean zero random 
  variables not necessarily equally distributed,
  and let $b_1,\ldots,b_p\distiid\distBern{\alpha}$ also independent
  of the $Z_i$.  Then,
  $$
    \mathrm{P}\left({
      \abs{\sum_{i=1}^p b_iZ_i} \ge tM(\alpha p)^{1/2}
    }\right) \le 2\ee^{-t^2/4}.
  $$
\end{lemma} 
\begin{proof}
  The proof is more or less the same as that of the standard
  Hoeffding inequality.  Without loss of generality, 
  let $M=1$.
  For any $i=1,\ldots,p$, we can bound the variance of $b_iZ_i$
  by 
  $$
    \var{b_iZ_i} = \xval{b_i^2}\xval{Z_i^2} \le \alpha.
  $$
  Thus, we have a bound on the cumulant generating function
  of $b_iZ_i$ that
  $
    \log\xv\ee^{\lmb b_iZ_i} \le \alpha\lmb^2
  $ and 
  $
    \log\xv\ee^{\lmb \sum_{i=1}^p b_iZ_i} \le \alpha p\lmb^2.
  $
  Therefore, via Chernoff's inequality, we have for any $\lmb>0$ that
  $$
    \mathrm{P}\left({\sum_{i=1}^p b_iZ_i \ge t}\right)
    \le
    \mathrm{P}\left({
      \ee^{\lmb\sum_{i=1}^p b_iZ_i} \ge 
      \ee^{\lmb t}
    }\right)
    \le \ee^{-\lmb t} \xv \exp\left(
      \lmb\sum_{i=1}^p b_iZ_i
    \right)
    \le \exp\left(
      -\lmb t + p\alpha\lmb^2
    \right)
  $$
  Minimizing the righthand side 
  over $\lmb$ gives $\lmb = t/(2p\alpha)$ and finally
  that 
  $
    \mathrm{P}\left({\sum_{i=1}^p b_iZ_i \ge t}\right) \le
    \ee^{ -t^2/(4\alpha p) }.
  $
  Running the proof for $\sum_{i=1}^p b_iZ_i \le -t$
  gives the reverse inequality.  Combining those gives
  the desired result.
  Lastly, to adjust for $M\ne1$, replace $Z_i$ with $Z_i/M$
  in the above result.
\end{proof}

Using the lazy version of Hoeffding's inequality, we can 
get concentration of the operator norm for a 
sparse random matrix using a modification of the 
standard $\veps$-net argument as presented in \cite{TAO2012} Section~2.3.

\begin{theorem}
  \label{thm:matConc}
  Let $A\in\real^{p\times p}$ be a symmetric random matrix
  with entries $A_{i,i}=0$, $\abs{A_{i,j}}<1$, and 
  $A_{i,j}=A_{j,i}$.  Let $B\in\{0,1\}^{p\times p}$ be 
  a symmetric Bernoulli random matrix with iid entries
  such that $\mathrm{P}({B_{i,j}=1})=\alpha\in[0,1]$.  If the lower 
  triangular entries of $A$ are independent then,
  $$
    \xv\norm{A\circ B}_\infty = O\{ (\alpha p)^{1/2} \}
  $$
  where $A\circ B$ denotes the entrywise or Hadamard product
  of $A$ and $B$.
\end{theorem}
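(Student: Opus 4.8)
The plan is to reduce the operator norm to a quadratic form over the unit sphere, discretise with an $\veps$-net, and control each net point using (the weighted form of) Lemma~\ref{lem:lazyhoef}. Since $A\circ B$ is symmetric, $\norm{A\circ B}_\infty = \sup_{\norm{x}_2=1}\abs{\TT{x}(A\circ B)x}$. Fixing a $1/4$-net $\Sigma$ of the sphere $S^{p-1}$, the standard volumetric bound gives $\abs{\Sigma}\le 9^p$, and the usual net-transference inequality (as in \cite{TAO2012}) yields $\norm{A\circ B}_\infty \le 2\max_{y\in\Sigma}\abs{\TT{y}(A\circ B)y}$. It then suffices to obtain a sub-Gaussian tail for the quadratic form at a single fixed $y$ and union bound over the deterministic set $\Sigma$.

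For fixed unit $y$, using $A_{i,i}=0$ and symmetry, $\TT{y}(A\circ B)y = 2\sum_{i<j}y_iy_jA_{i,j}B_{i,j}$, a sum of independent terms indexed by the lower-triangular pairs; here I use that the $A_{i,j}$ with $i<j$ are independent, that $B$ has \iid entries independent of $A$, and --- so that each summand is mean zero --- that the $A_{i,j}$ are centred, as they are in the intended application. The crux is the scale. Although there are $\sim p^2/2$ summands, the $\ell^2$ mass of the weights is controlled by $\sum_{i<j}(2y_iy_j)^2 \le 2(\sum_i y_i^2)^2 = 2$, so that $\sum_{i<j}\var{2y_iy_jA_{i,j}B_{i,j}} \le 2\alpha$. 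Running the cumulant computation from the proof of Lemma~\ref{lem:lazyhoef} with these individual weighted variances, rather than a uniform per-entry bound, gives $\log\xv\exp(\lmb\,\TT{y}(A\circ B)y) \le C\alpha\lmb^2$ and hence $\mathrm{P}(\abs{\TT{y}(A\circ B)y}\ge t) \le 2\ee^{-ct^2/\alpha}$. Thus each fixed quadratic form concentrates at scale $\alpha^{1/2}$, not $(\alpha p^2)^{1/2}$.

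A union bound over $\Sigma$ then gives $\mathrm{P}(\norm{A\circ B}_\infty \ge 2t) \le 2\cdot 9^p\ee^{-ct^2/\alpha}$, which becomes informative precisely once $t\gtrsim(\alpha p)^{1/2}$; this is where $p$ enters and upgrades the per-point scale $\alpha^{1/2}$ to the claimed $(\alpha p)^{1/2}$. To pass to the expectation, I would integrate the tail, $\xv\norm{A\circ B}_\infty = \int_0^\infty\mathrm{P}(\norm{A\circ B}_\infty\ge v)\,dv$, bounding the integrand by $1$ up to $v_0 = C(\alpha p)^{1/2}$ and by the exponential tail beyond $v_0$; the remaining Gaussian-type integral contributes only a lower-order term, leaving $\xv\norm{A\circ B}_\infty = O\{(\alpha p)^{1/2}\}$.

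The covering-number estimate and the net-transference step are routine. The main obstacle, and the whole reason $\alpha$ sits under the square root, is the per-point variance accounting: one must resist invoking Lemma~\ref{lem:lazyhoef} verbatim with a uniform entry bound, which would count all $\sim p^2$ pairs and inflate the scale to $(\alpha p^2)^{1/2}$, and instead exploit the weighted variance $\sum_{i<j}(2y_iy_j)^2\le 2$ so that each net point has scale $\alpha^{1/2}$ and the union bound over the $\ee^{O(p)}$ net costs only a further factor $p^{1/2}$. A secondary technical point to flag is the centring of $A$, without which the nonzero mean $\alpha\,\xv A$ would already have operator norm of order $\alpha p$.
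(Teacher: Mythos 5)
Your proposal is correct at the same level of rigor as the paper's own argument and shares its skeleton (a net discretization of the sphere plus the weighted cumulant computation from Lemma~\ref{lem:lazyhoef}), but your per-point reduction is genuinely different, and in two places cleaner. The paper first treats $A\circ B$ as an iid asymmetric ensemble, bounds $\norm{(A\circ B)v}_2$ at each net point via the step $\log\xv\ee^{\lmb\norm{(A\circ B)v}_2}=\sum_i\log\xv\ee^{\lmb\sum_j A_{i,j}B_{i,j}v_j}$ --- which is not actually an identity, since the exponential of a Euclidean norm does not factor over coordinates --- and then restores symmetry at the end by splitting $\Pi$ into its lower and upper triangular parts and union-bounding. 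You instead exploit symmetry from the outset: writing $\norm{A\circ B}_\infty=\sup_{\norm{x}_2=1}\abs{\TT{x}(A\circ B)x}$, the quadratic form $2\sum_{i<j}y_iy_jA_{i,j}B_{i,j}$ at a net point is a genuine sum of independent bounded mean-zero terms, so the one-dimensional cumulant bound applies verbatim with the weighted variance $\sum_{i<j}(2y_iy_j)^2\le2$; this sidesteps both the dubious factorization and the triangular splitting. You also carry out the tail-to-expectation integration explicitly, which the paper omits (it stops at ``for $t$ large enough, the righthand side becomes negligible''). Finally, you are right to flag the centring of $A$ as a needed hypothesis: the theorem statement omits it, Lemma~\ref{lem:lazyhoef} requires it, and without it $\xv(A\circ B)=\alpha\,\xv A$ can already have operator norm of order $\alpha p$; the paper uses this silently, and it does hold in the application to $\hat{\Omega}_s-\Omega-\mathrm{bias}$. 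One caveat you inherit from the paper rather than introduce: the bound $\log\xv\ee^{\lmb cbZ}\le\alpha c^2\lmb^2$ is only valid for $\lmb c$ bounded, whereas the Chernoff optimization at $t\asymp(\alpha p)^{1/2}$ takes $\lmb$ of order $(p/\alpha)^{1/2}$; for very sparse $\alpha$ (say $\alpha p = O(1)$) the true tails are Poisson-like, and both proofs would need a Bernstein-type correction, or a restriction such as $\alpha p\gtrsim\log p$, to be fully rigorous.
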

\begin{proof}
  For now, we consider $A$ and $B$ as iid ensembles---i.e. remove the 
  $A_{i,j}=A_{j,i}$ and $B_{i,j}=B_{j,i}$ condition---and adjust for the
  symmetry at the end of the proof.
  Let $\Pi = A\circ B$.
  For any vector $v\in\real^p$ such that $\norm{v}_2=1$,
  we have that 
  $
    \norm{\Pi v}_2 = 
    (\sum_{i=1}^p(\sum_{j=1}^pA_{i,j}B_{i,j}v_j)^2)^{1/2}.
  $
  Therefore, from the proof of Lemma~\ref{lem:lazyhoef}, independence 
  of entries, and $\norm{v}_2=1$, 
  $$
    \log\xv\ee^{\lmb\norm{\Pi v}_2} =  
    \sum_{i=1}^p\log\xv\ee^{\lmb \sum_{j=1}^p A_{i,j}B_{i,j}v_j} \le 
    \alpha p\lmb^2.
  $$
  Thus, for any arbitrary unit vector $v$, we can 
  apply Lemma~\ref{lem:lazyhoef} to get
  $$
    \mathrm{P}\left({
      \norm{\Pi v}_2 > t(\alpha p)^{1/2}
    }\right)
    \le 2\ee^{-t^2/4}.
  $$
  To extend to $\norm{\Pi}_\infty$, we construct a maximal 
  $(1/2)$-net of the unit sphere $S = \{v\,:\,\norm{v}_2=1\}$.
  That is, let $V$ be the maximal set of points in $S$ such that 
  for any $u,v\in V$, $\norm{u-v}_2\ge0.5$.  Let $v^*\in S$ be
  the vector such that $\norm{\Pi}_\infty = \norm{\Pi v^*}_2$,
  which exists via compactness.
  Thus, there exists a $v_0\in V$ such that $\norm{v^*-v_0}_2<0.5$
  as otherwise, the set $V$ would not be maximal.
  Furthermore, $\norm{\Pi(v^*-v_0)}_2\le\norm{\Pi}_\infty/2$
  and thus $\norm{\Pi v_0}_2\ge\norm{\Pi}_\infty/2$.
  Therefore, by the union bound,
  $$
    \mathrm{P}\left({
      \norm{\Pi}_\infty > t(\alpha p)^{1/2}
    }\right) \le
    \mathrm{P}\left({
      \bigcup_{v\in V}\left\{
      \norm{\Pi v}_2 > \frac{t}{2}(\alpha p)^{1/2}
      \right\}
    }\right) \le
    2\abs{V}\ee^{-t^2/16}.
  $$
  From Lemma~2.3.4 of \cite{TAO2012}, $\abs{V} = (2C)^p$ for
  some absolute constant $C>0$. Thus, for $t$ large enough,
  the righthand side becomes negligible.
  
  Now, considering $A$ and $B$ symmetric with zero diagonal
  as in the theorem statement, we have that 
  $
    \norm{\Pi}_\infty \le \norm{\Pi_\text{low}}_\infty + 
    \norm{\Pi_\text{up}}_\infty
  $
  for $\Pi_\text{low}$ and $\Pi_\text{up}$ the strict lower and upper 
  triangular portions of $\Pi$.  Hence,
  $$
  \mathrm{P}\left({
  	\norm{\Pi}_\infty > t(\alpha p)^{1/2}
  }\right) \le
  \mathrm{P}\left({
    \norm{\Pi_\text{low}}_\infty > t(\alpha p)^{1/2}/2
  }\right) +
  \mathrm{P}\left({
  	\norm{\Pi_\text{up}}_\infty > t(\alpha p)^{1/2}/2
  }\right)
  \le
  4\abs{V}\ee^{-t^2/32}.
  $$
\end{proof}
\begin{remark}
  The constants in the above proof are not necessarily
  optimal, but sufficient to justify our approach to controlled 
  precision matrix estimation.
\end{remark}

\begin{proof}[of Theorem~1]
  For (a), $\hat{\Omega}_{{s}}-\Omega$
  has zero diagonal and off-diagonal entries bounded by two.
  Further, if 
  $\Omega\in\mathcal{U}(\kappa,\delta)$, then 
  $\norm{\Omega}_\infty \le \max_i\sum_j \abs{\Omega_{i,j}} \le \kappa$.
  Hence, if $\kappa = O(p^\nu)$, then so is $\norm{\Omega}_\infty$.
  For the bias, let $B_{{s}}$ be a Bernoulli random 
  matrix with probability $\gamma^{-s}$ that entry $B_{i,j}$ is 1.
  Then,
  $$ 
    \text{bias}(\hat{\Omega}_{{s}})
    = \xv(\hat{\Omega}_{{s}})-\Omega
    = \xv( B_{{s}}\circ\hat{\Omega})-\Omega
    = \gamma^{-s}\text{bias}(\hat{\Omega}) + 
      (1-\gamma^{-s})\Omega = o(p^{1/2}).
  $$
  Thus, we have by assumption and by applying 
  theorem~\ref{thm:matConc} as well as the Bai-Yin 
  theorems in \cite{TAO2012} Section~2.3 that
  \begin{multline*}
    \xv\norm{\hat{\Omega}_{{s}}-\Omega}_\infty
    \le
    \xv\norm{
      \hat{\Omega}_{{s}}-\Omega-\text{bias}(\hat{\Omega}_{{s}})
    }_\infty + \norm{\text{bias}(\hat{\Omega}_{{s}})}_\infty
    \le\\\le 
    (2+o(1))p^{1/2}\gamma^{(s-1)/2} + o(p^{1/2})
  \end{multline*}
  and
  \begin{multline*}
    \xv\norm{\hat{\Omega}_{{s}}-\Omega}_\infty
    \ge
    \xv\norm{
      \hat{\Omega}_{{s}}-\Omega-\text{bias}(\hat{\Omega}_{{s}})
    }_\infty - \norm{\text{bias}(\hat{\Omega}_{{s}})}_\infty
    \ge\\\ge
    (2+o(1))p^{1/2}\gamma^{(s-1)/2} - o(p^{1/2}).
  \end{multline*}
  Therefore,
  $$
  \frac{ 
  	\xv\norm{\hat{\Omega}_{{s+1}}-\Omega}_\infty 
  }{
  	\xv\norm{\hat{\Omega}_{{s}}-\Omega}_\infty
  } = 
  \frac{ (2+o(1))p^{1/2}\gamma^{(s-1)/2} + o(p^{1/2}) }{ (2+o(1))p^{1/2}\gamma^{s/2} + o(p^{1/2}) }
  = \frac{1 + o(1)}{\gamma^{-1/2} + o(1)}.
  $$  
  
  For (b),
  let $K = \norm{\Omega-I_p}_\infty$ which is 
  $K\le \max_{i}\sum_{j=1}^p\abs{\Omega_{i,j}} \le \kappa = o(p^{1/2})$
  by the sparsity assumption.
  \begin{multline*}
  \frac{ 
  	\xv\norm{\hat{\Omega}_{{s+1}}-I_p}_\infty 
  }{
  	\xv\norm{\hat{\Omega}_{{s}}-I_p}_\infty
  } \le\frac{ 
  	\xv\norm{\hat{\Omega}_{{s+1}}-\Omega}_\infty
  	+ K 
  }{
  	\xv\norm{\hat{\Omega}_{{s}}-\Omega}_\infty
  	- K
  } 
  = \\ =
  \frac{ (2+o(1))\gamma^{(s-1)/2}  + o(1)+ Kp^{-1/2} 
  }{ 
    (2+o(1))\gamma^{s/2} + o(1) - Kp^{-1/2} 
  }
  = \frac{1 + o(1)}{\gamma^{-1/2} + o(1)}.
  \end{multline*}
  and similarly
  \begin{multline*}
  \frac{ 
  	\xv\norm{\hat{\Omega}_{{s+1}}-I_p}_\infty 
  }{
  	\xv\norm{\hat{\Omega}_{{s}}-I_p}_\infty
  } \ge\frac{ 
  	\xv\norm{\hat{\Omega}_{{s+1}}-\Omega}_\infty
  	- K 
  }{
  	\xv\norm{\hat{\Omega}_{{s}}-\Omega}_\infty
  	+ K
  } 
  = \\ =
  \frac{ (2+o(1))\gamma^{(s-1)/2}  + o(1)- Kp^{-1/2} 
  }{ 
    (2+o(1))\gamma^{s/2}  + o(1)+ Kp^{-1/2} 
  }
  = \frac{1 + o(1)}{\gamma^{-1/2} + o(1)}.
  \end{multline*}
\end{proof}

\section{Additional Simulations}

\subsection{Receiver operating characteristic curves}

In Figure~\ref{fig:rocs}, we consider plots of the observed
false positives against the observed true positives--that is,
Receiver operating characteristic curves.  
Similarly to the analysis of Figure~1 from the main article,
We see better performance for the tridiagonal and binary tree 
matrices than for the block diagonal matrix.  Also, the 
graphical lasso penalization parameter does not have a large effect 
on the methodology.  Though, $\lmb=1,2$ perform marginally better
than $\lmb=0.5$.  
\begin{figure}
	\begin{center}
		\includegraphics[width=0.32\textwidth]{\PICDIR/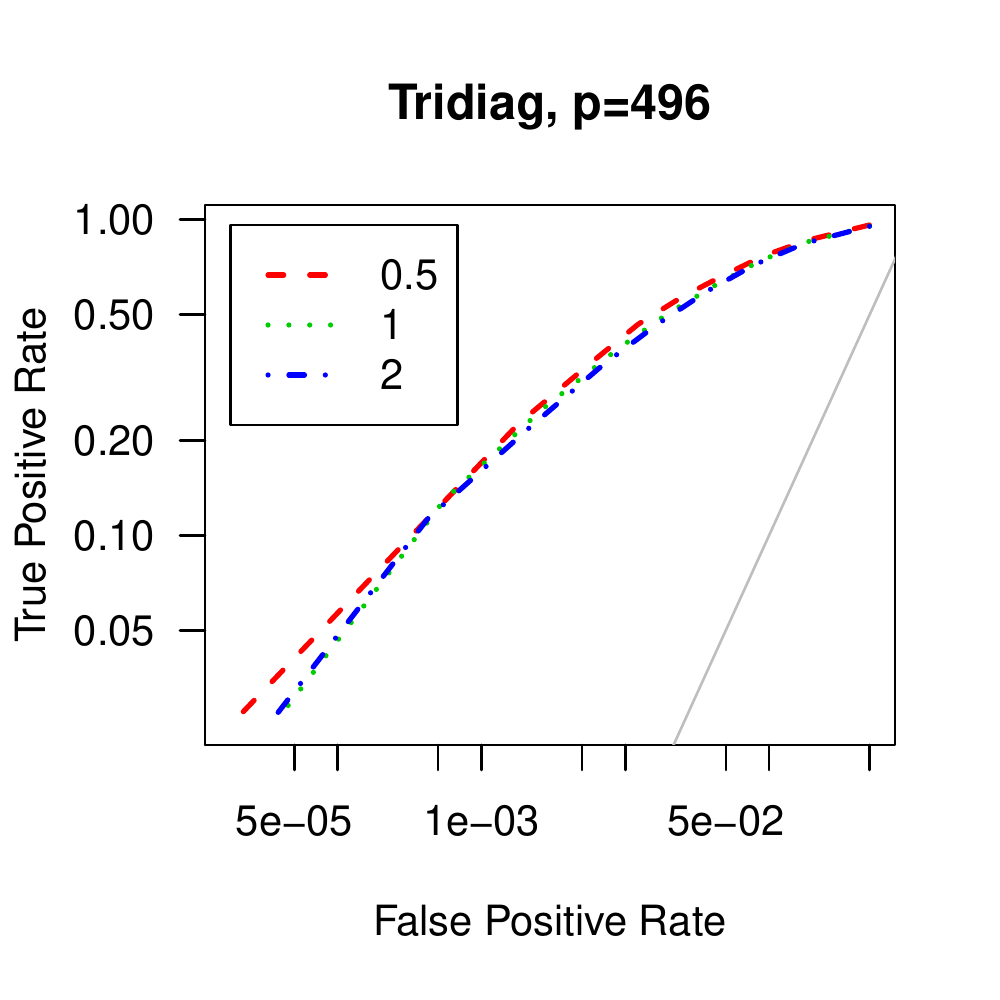}
		\includegraphics[width=0.32\textwidth]{\PICDIR/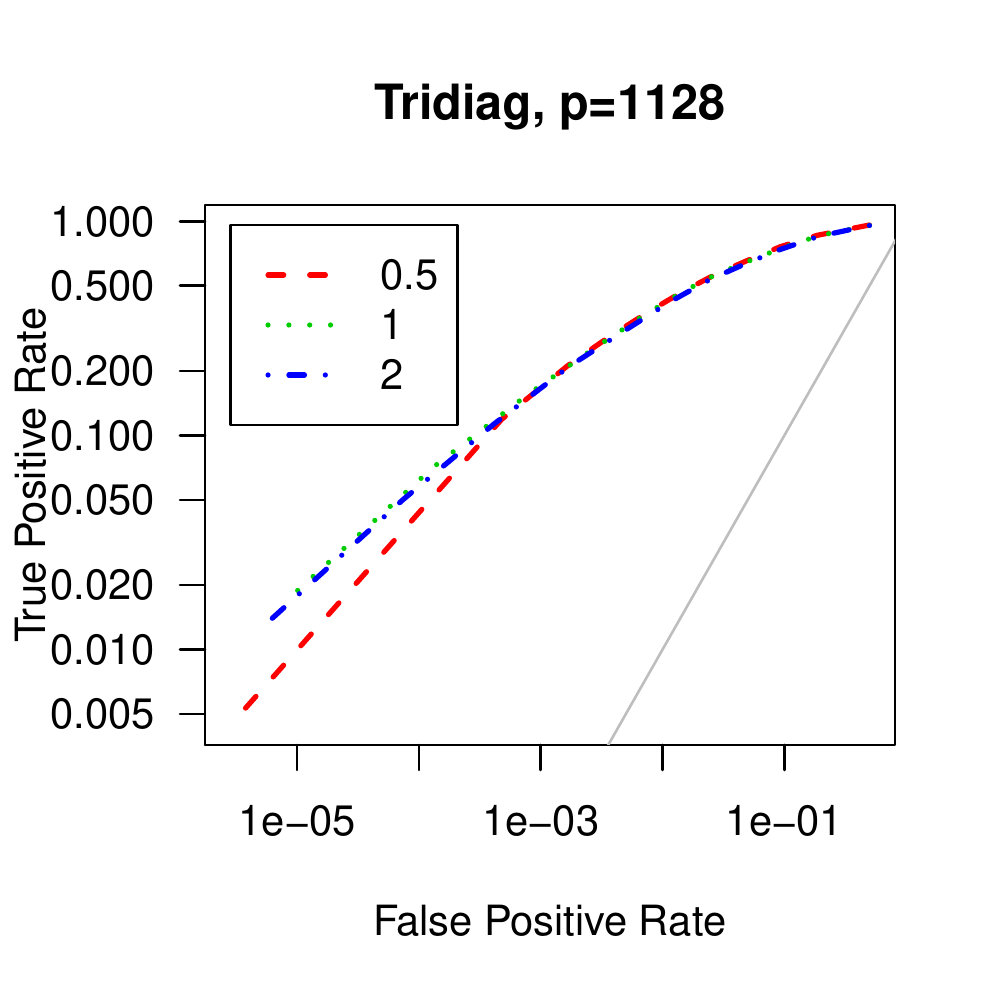}
		\includegraphics[width=0.32\textwidth]{\PICDIR/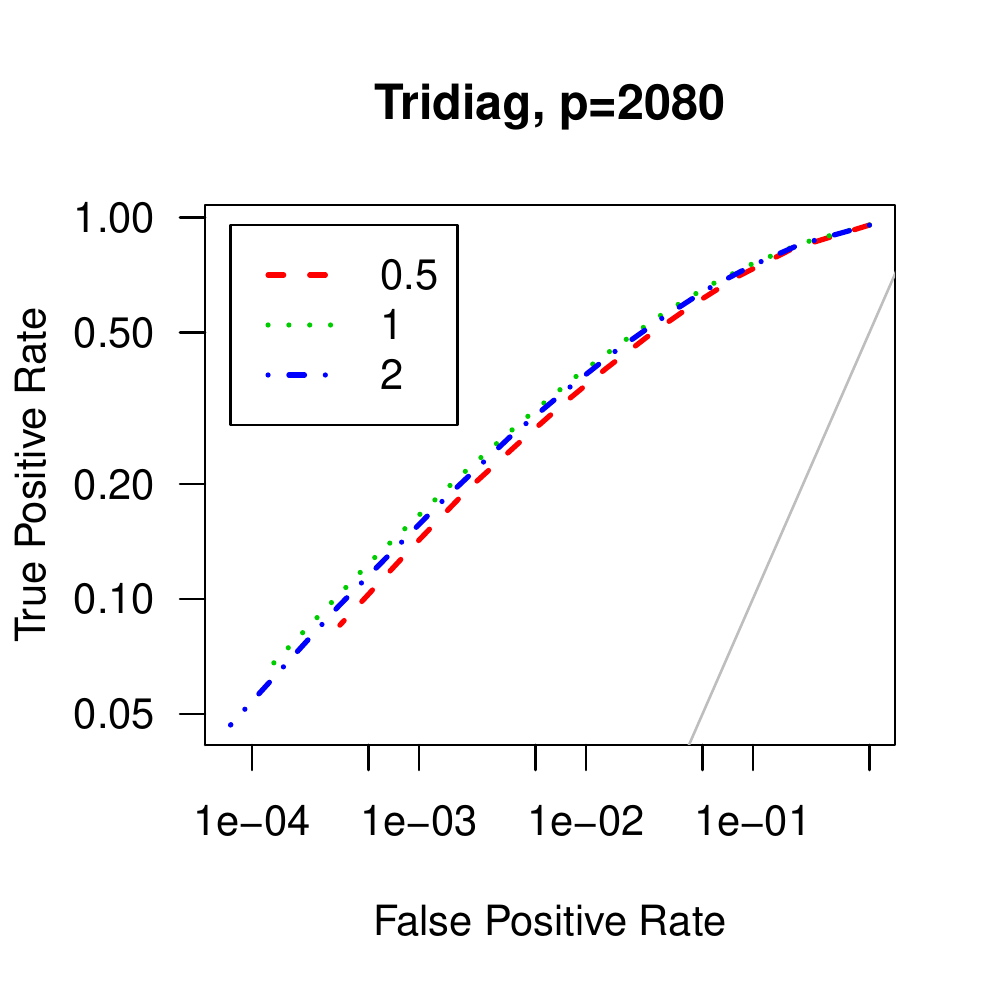}
		\includegraphics[width=0.32\textwidth]{\PICDIR/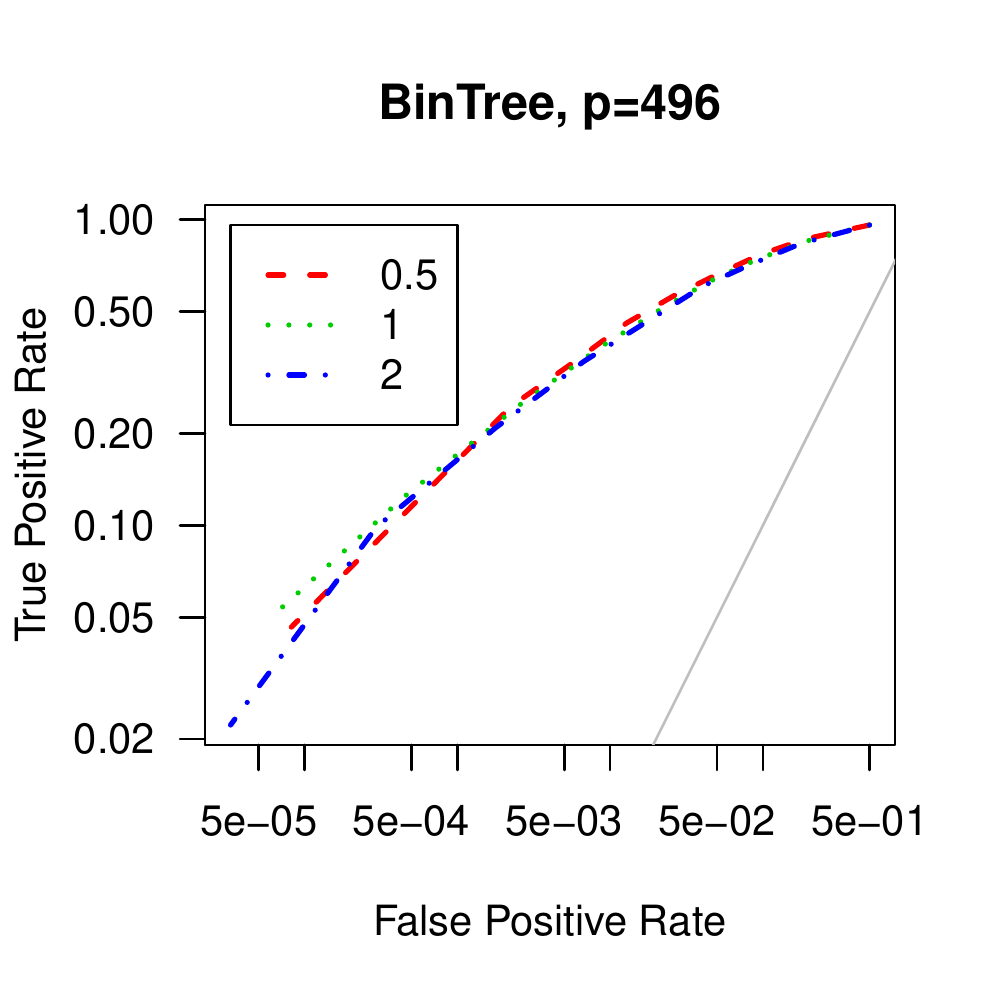}
		\includegraphics[width=0.32\textwidth]{\PICDIR/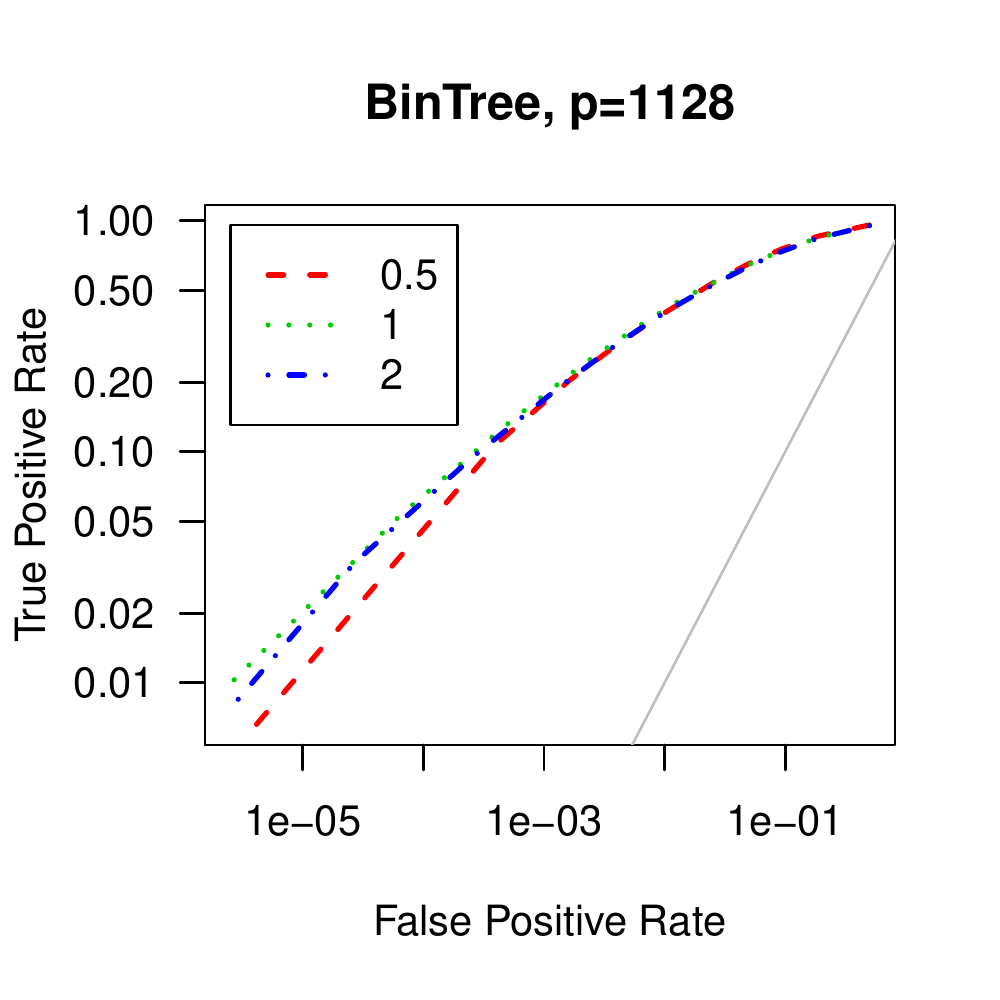}
		\includegraphics[width=0.32\textwidth]{\PICDIR/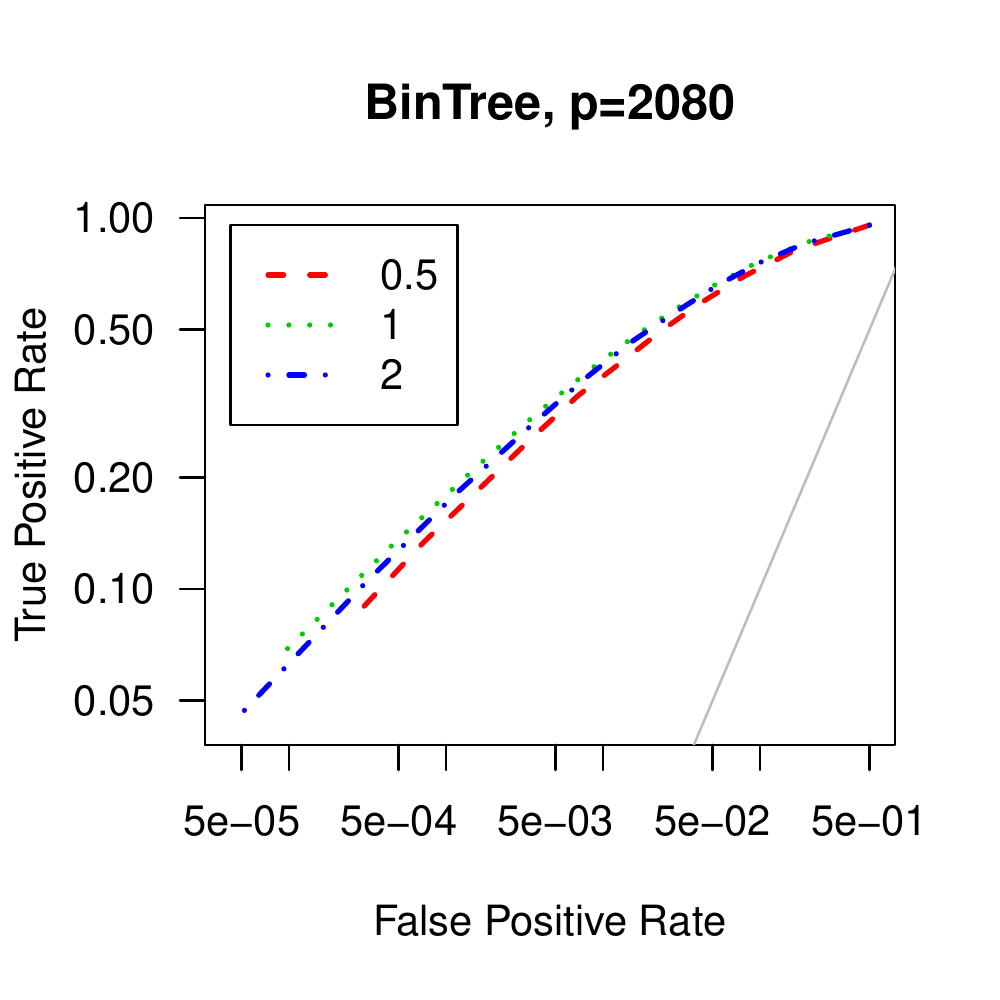}
		\includegraphics[width=0.32\textwidth]{\PICDIR/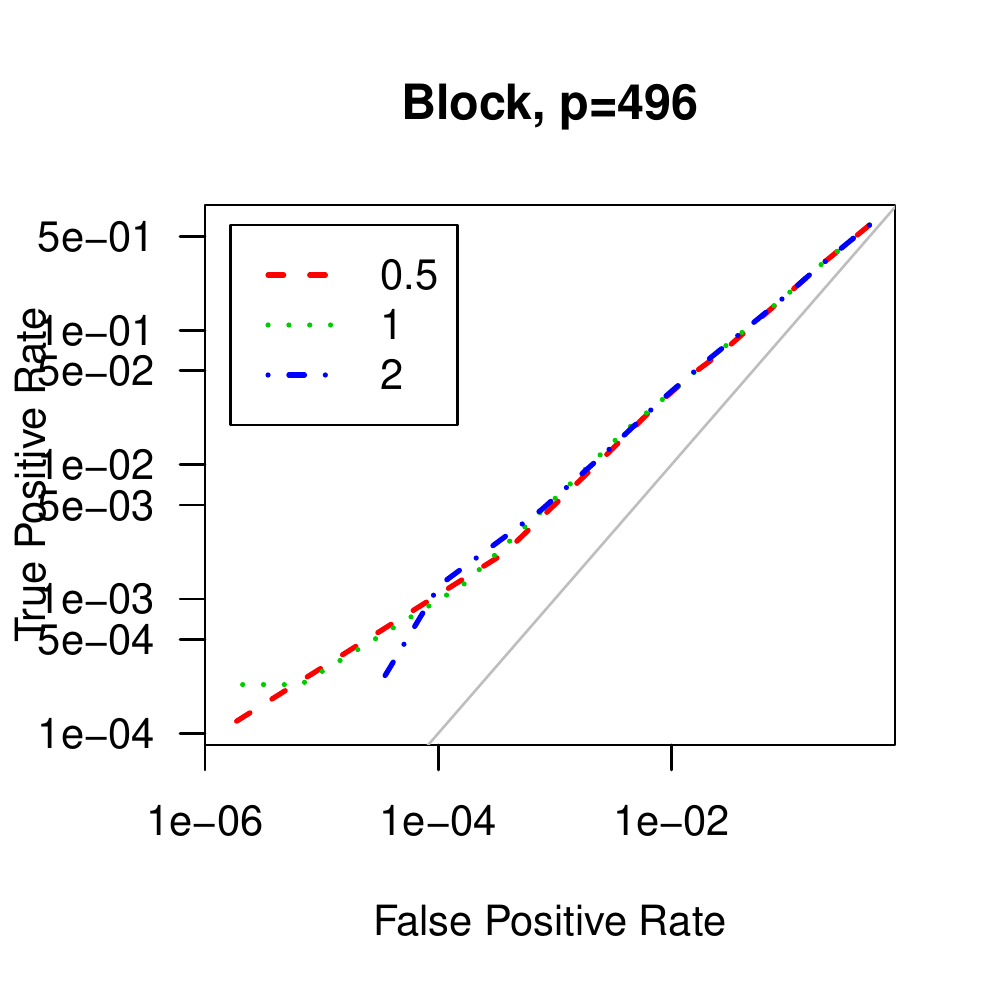}
		\includegraphics[width=0.32\textwidth]{\PICDIR/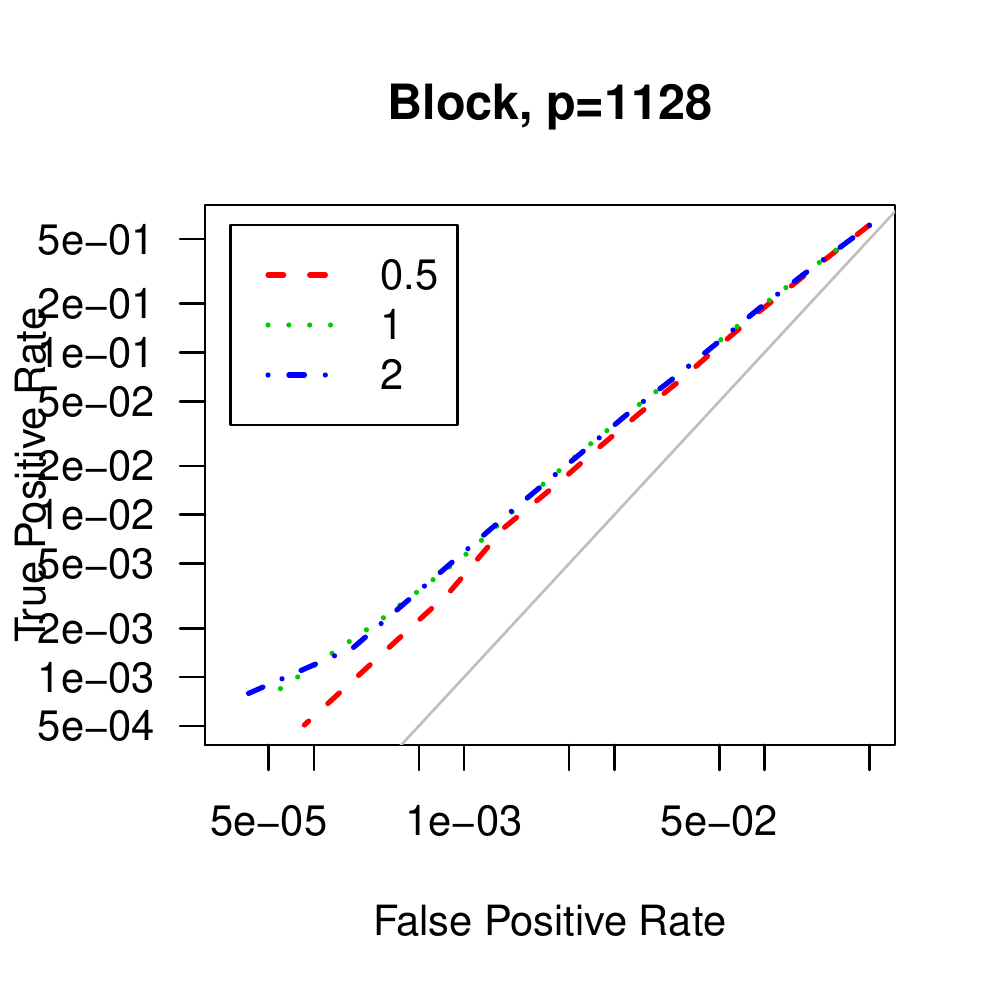}
		\includegraphics[width=0.32\textwidth]{\PICDIR/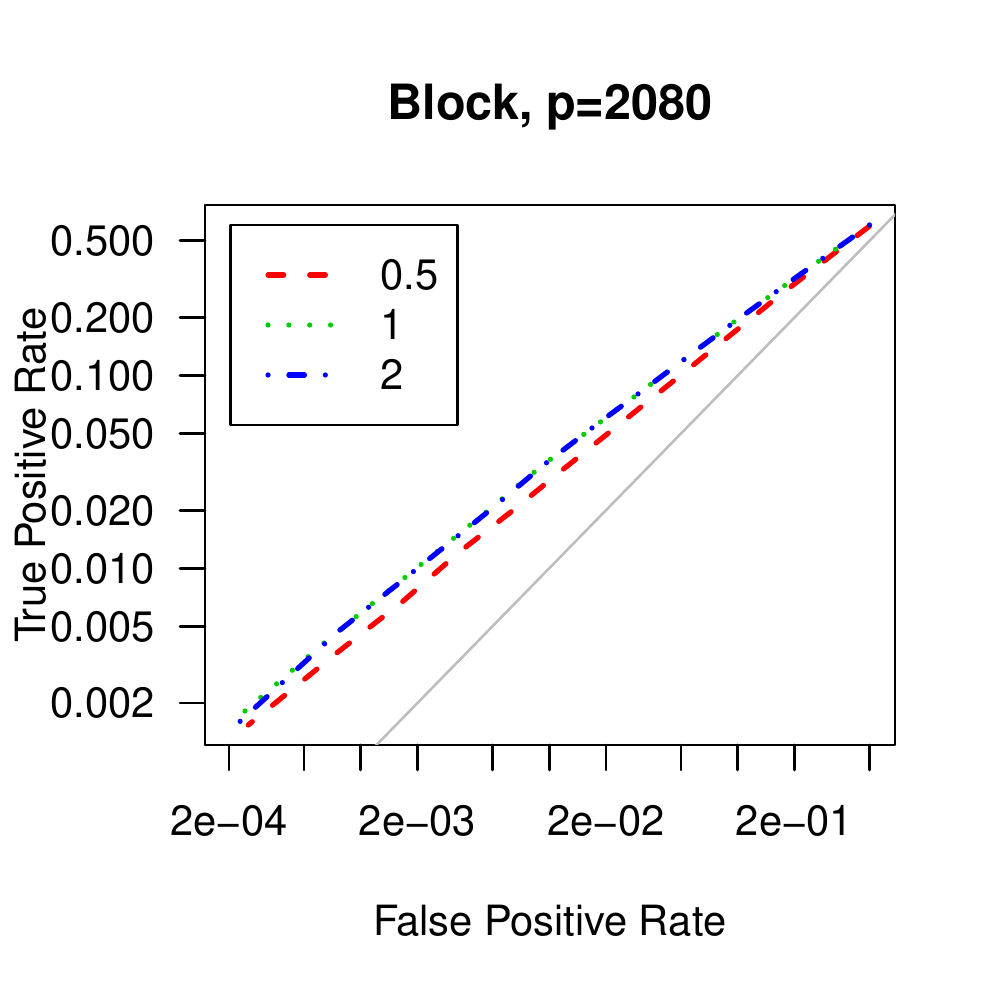}
	\end{center}
	\capt{
		\label{fig:rocs}
		Empirical Receiver operating characteristic
                curves displayed on the log-log scale.
		The rows from top to bottom correspond to the tridiagonal, 
		binary tree, and block diagonal matrices.  The columns from
		left to right correspond to dimensions 496, 1128, and 2080.
	}
\end{figure}

\subsection{Sub-Exponential Data}

The same simulations as in Section~3.1
were 
rerun replacing the multivariate Gaussian distribution with
the multivariate Laplace distribution, and are 
displayed in Figure~\ref{fig:fptpPlotE}.  As expected,
the true positive rate is not as large as in the Gaussian
setting.  When the penalization parameter for the 
graphical lasso is set to $\lmb=2$, we see that our 
methodology does not maintain the desired false positive
rate as $\alpha\rightarrow0$.  However, for $\lmb=1$, the
empirical false positive rate does approximately track 
with the desired false positive rate in the three simulation
settings.

\begin{figure}
  \begin{center}
  \includegraphics[width=0.32\textwidth]{\PICDIR/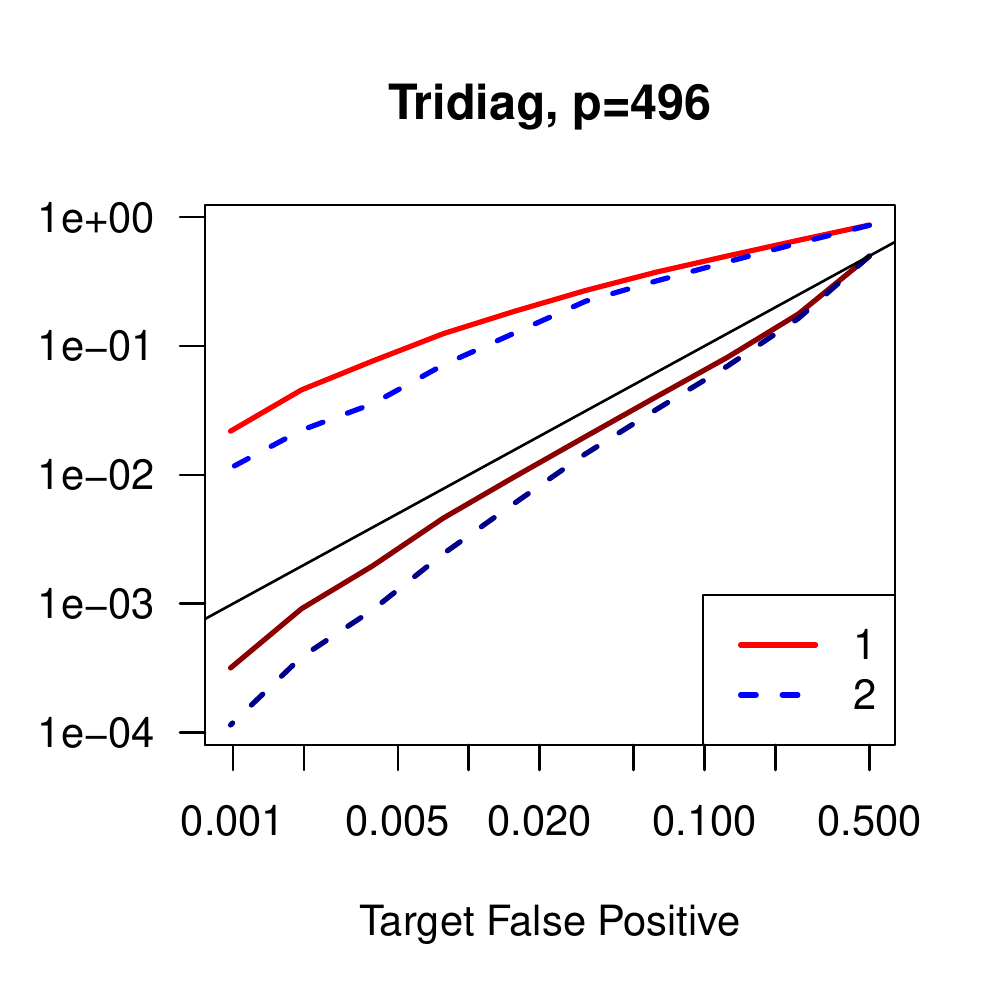}
  \includegraphics[width=0.32\textwidth]{\PICDIR/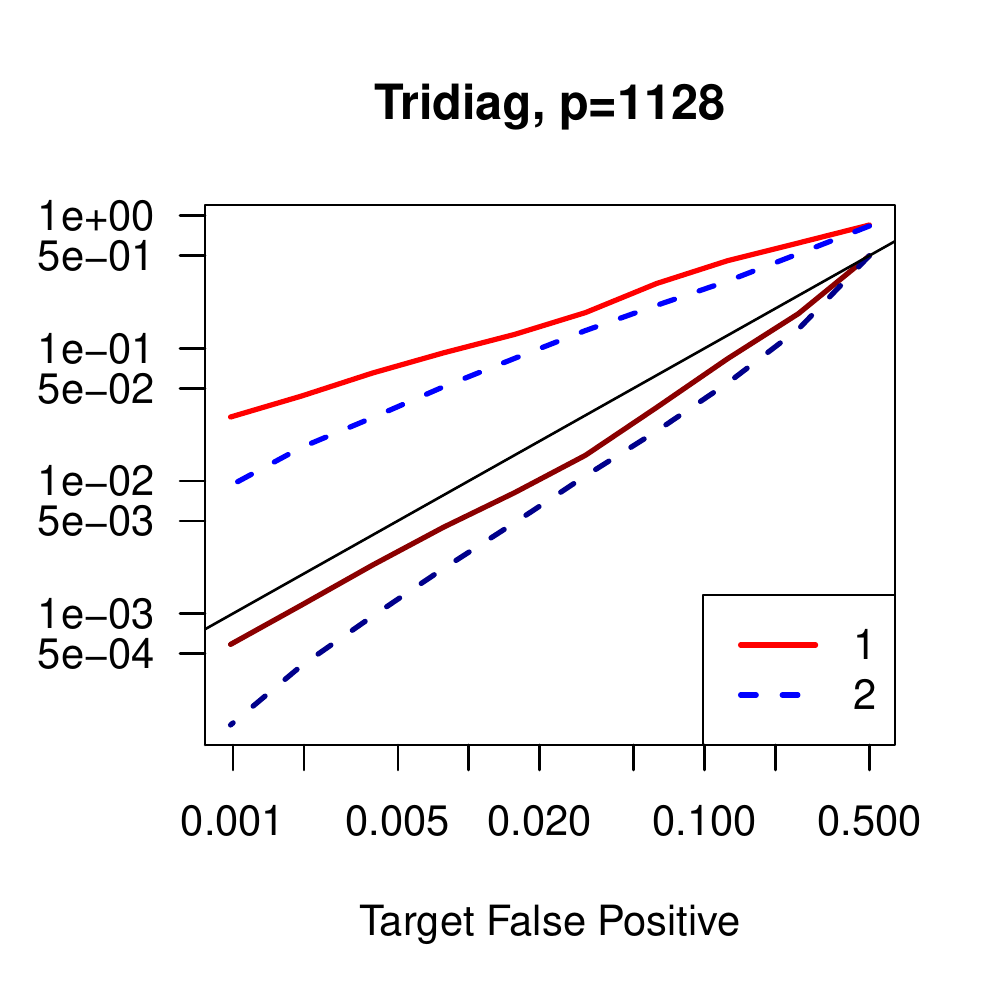}
  \includegraphics[width=0.32\textwidth]{\PICDIR/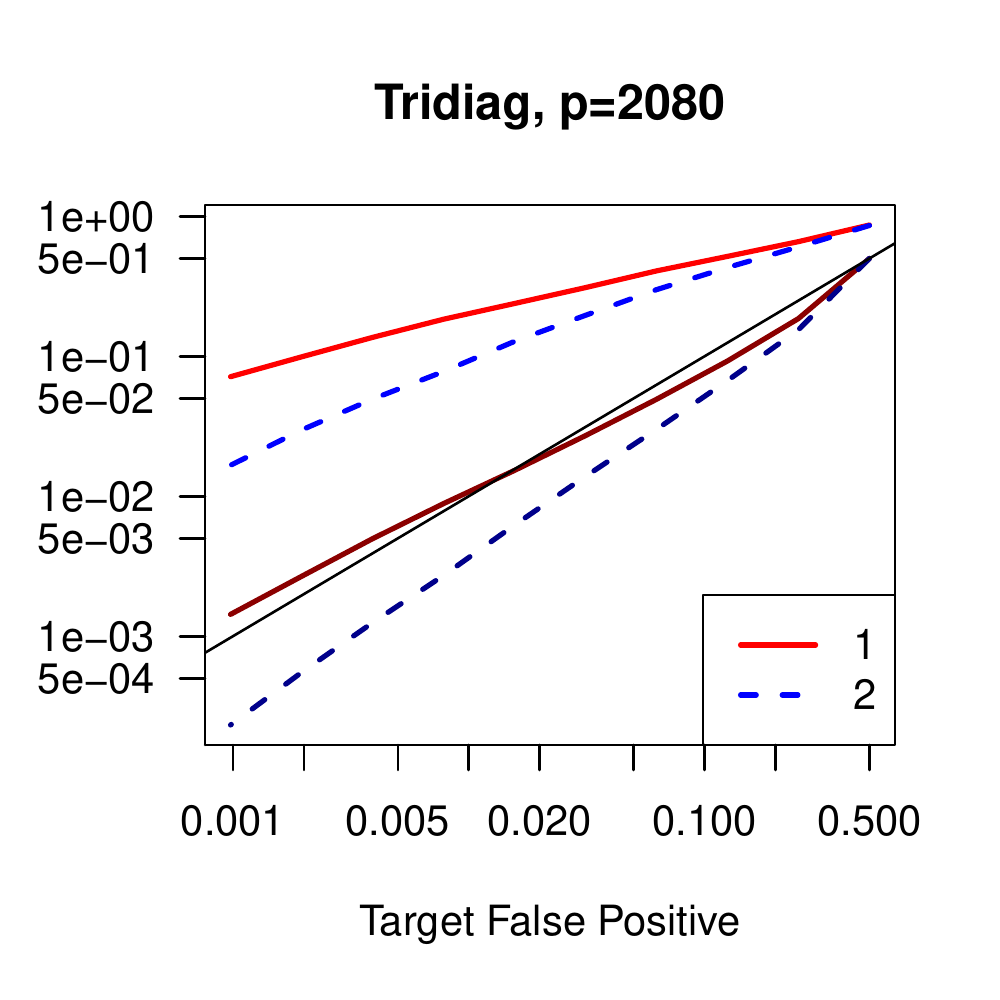}
  \includegraphics[width=0.32\textwidth]{\PICDIR/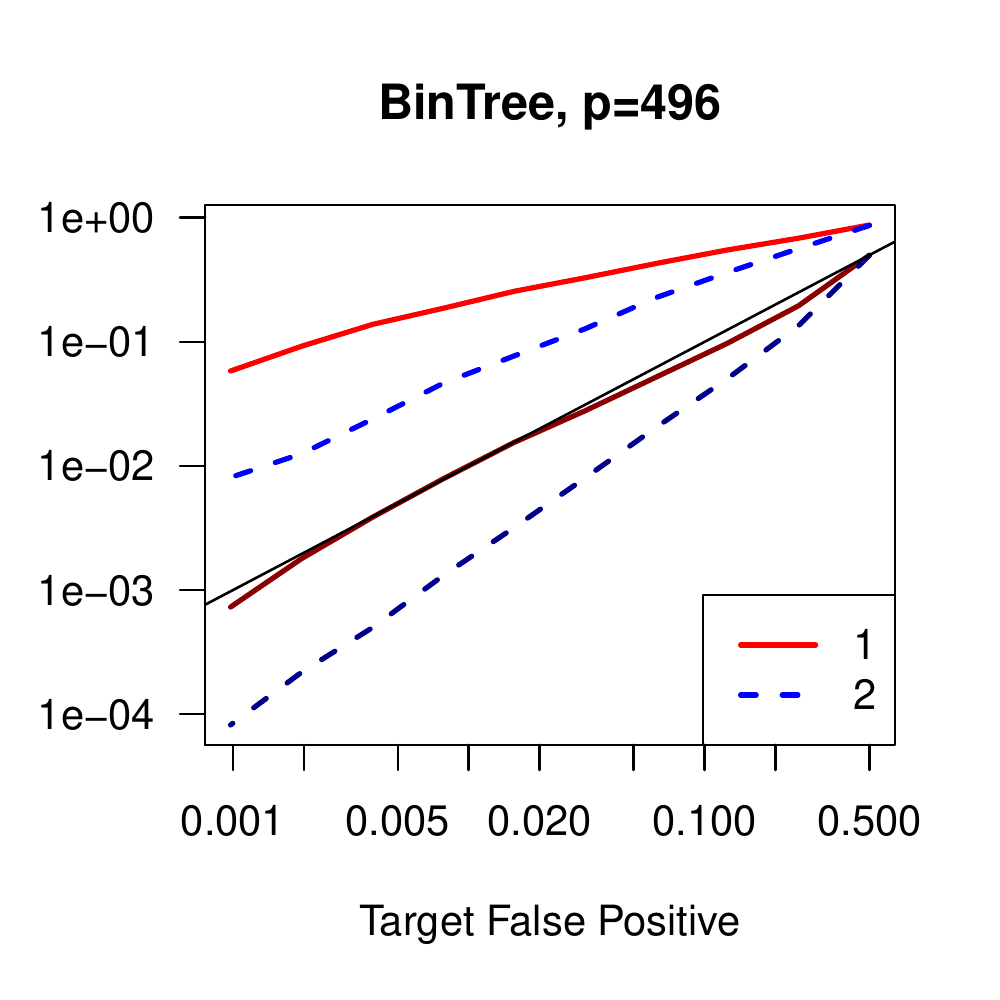}
  \includegraphics[width=0.32\textwidth]{\PICDIR/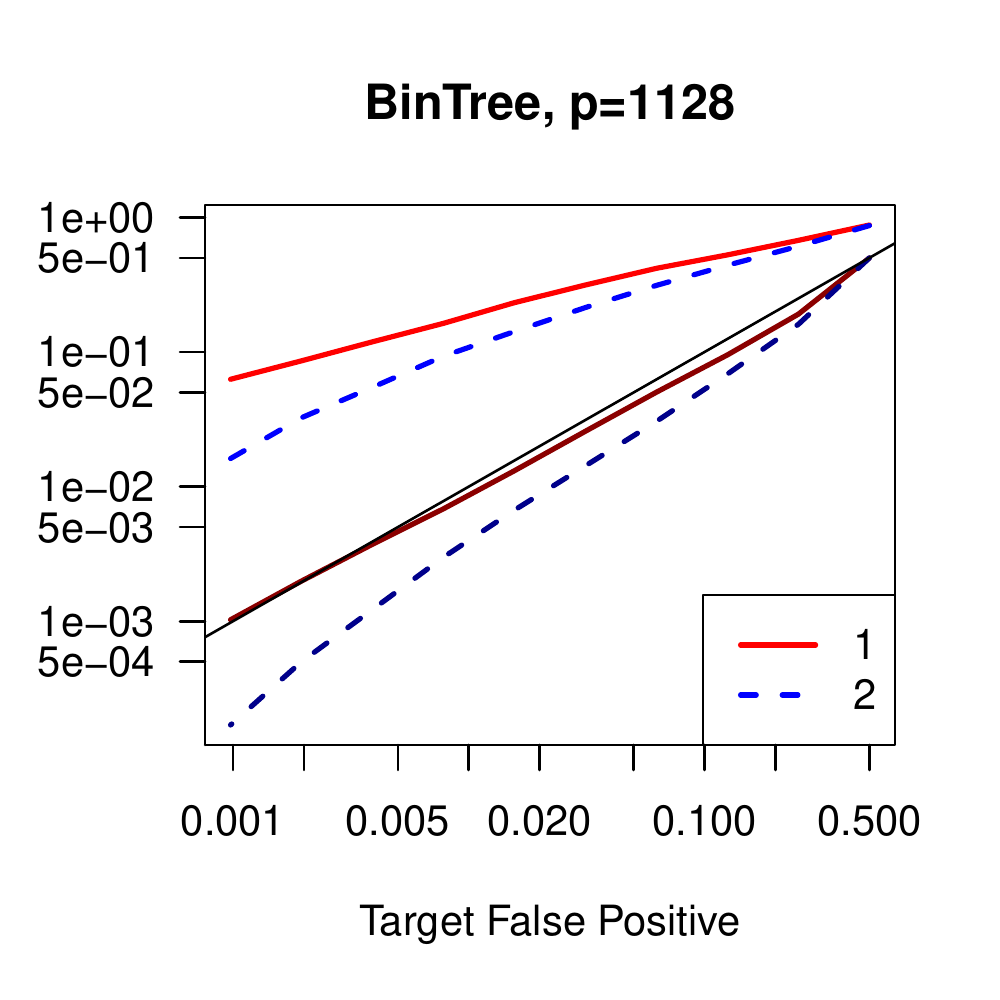}
  \includegraphics[width=0.32\textwidth]{\PICDIR/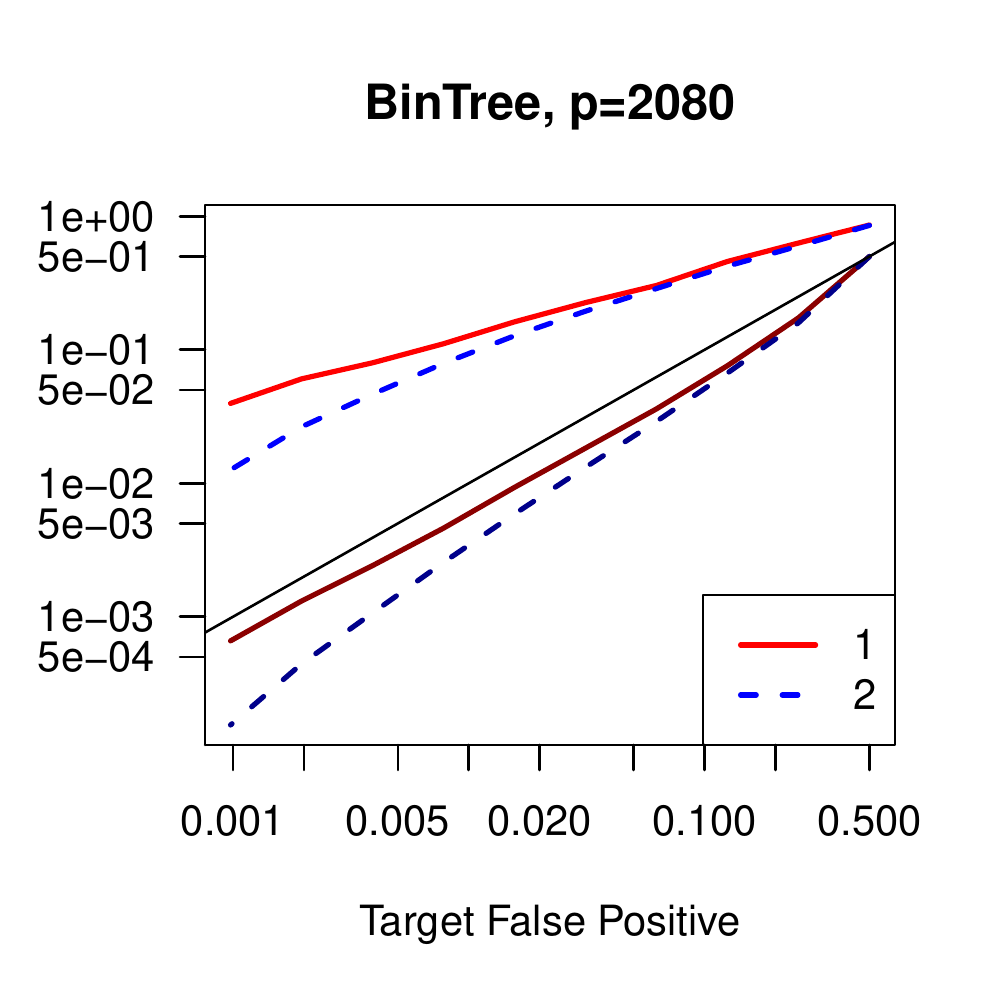}
  \includegraphics[width=0.32\textwidth]{\PICDIR/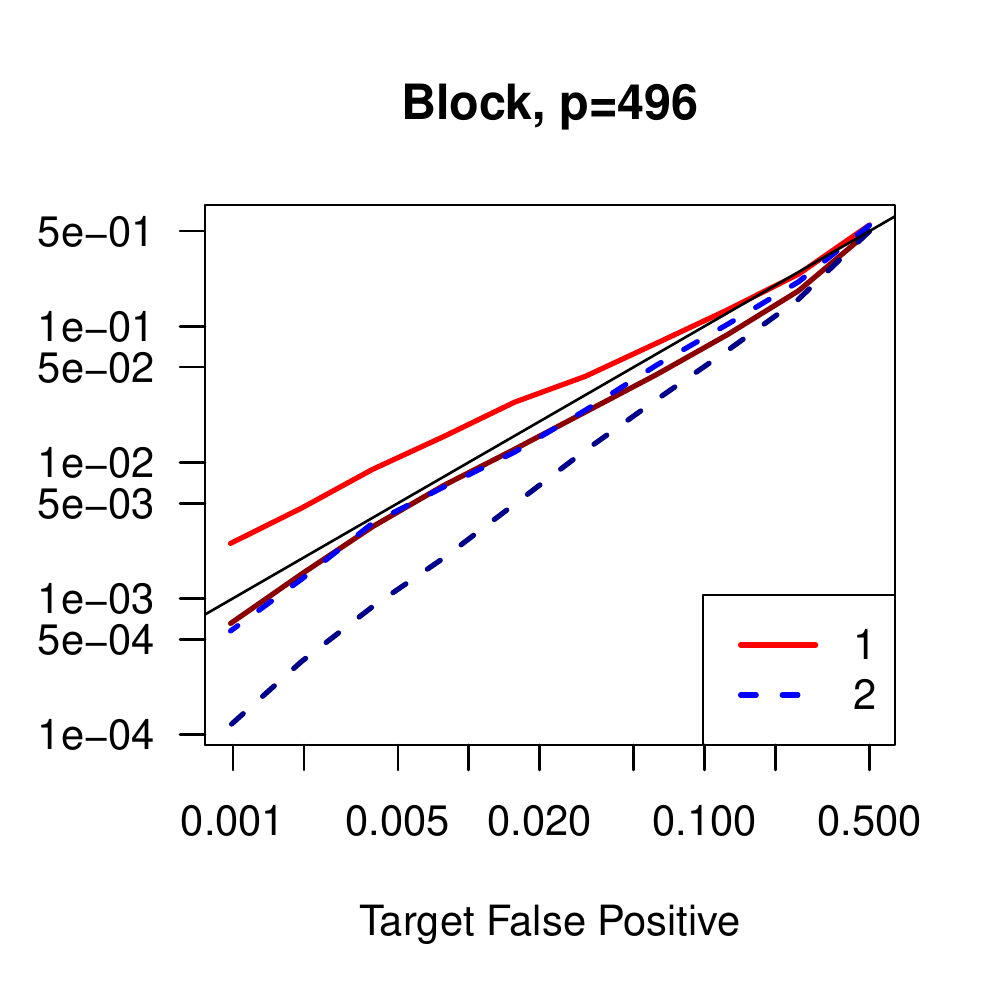}
  \includegraphics[width=0.32\textwidth]{\PICDIR/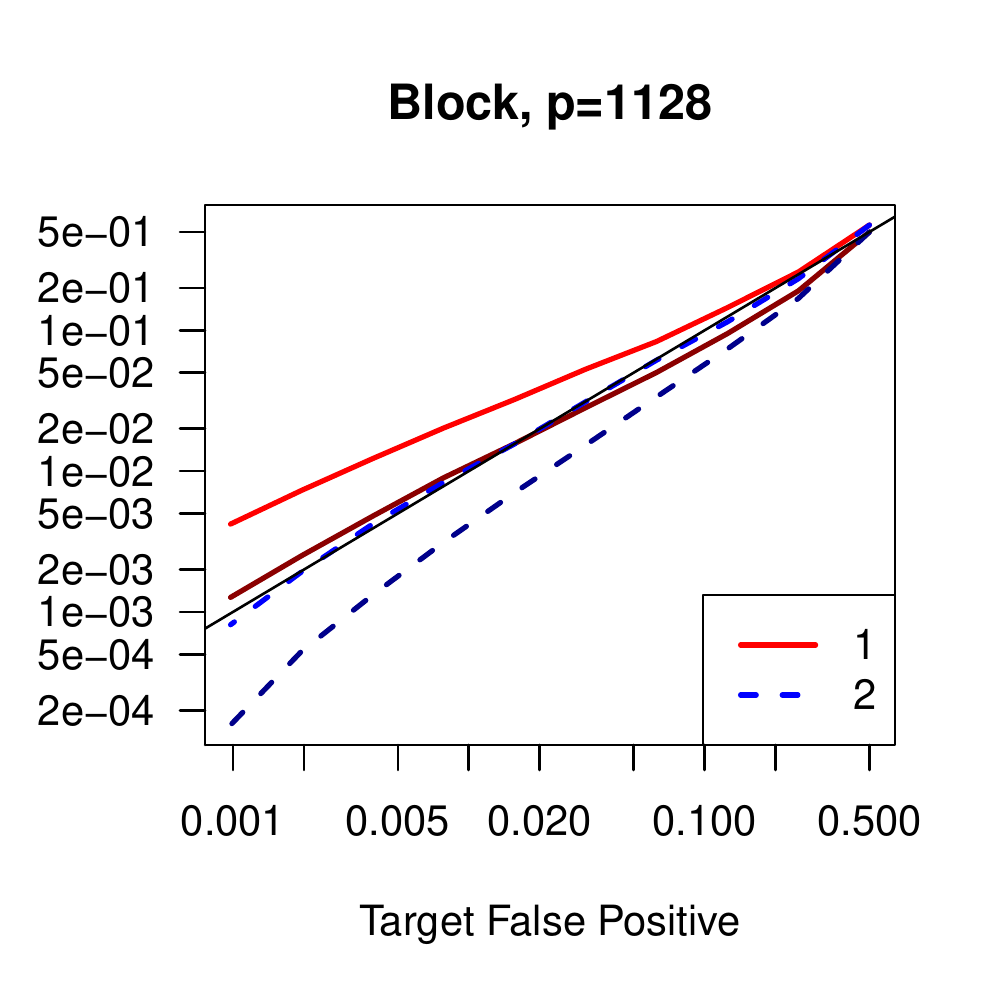}
  \includegraphics[width=0.32\textwidth]{\PICDIR/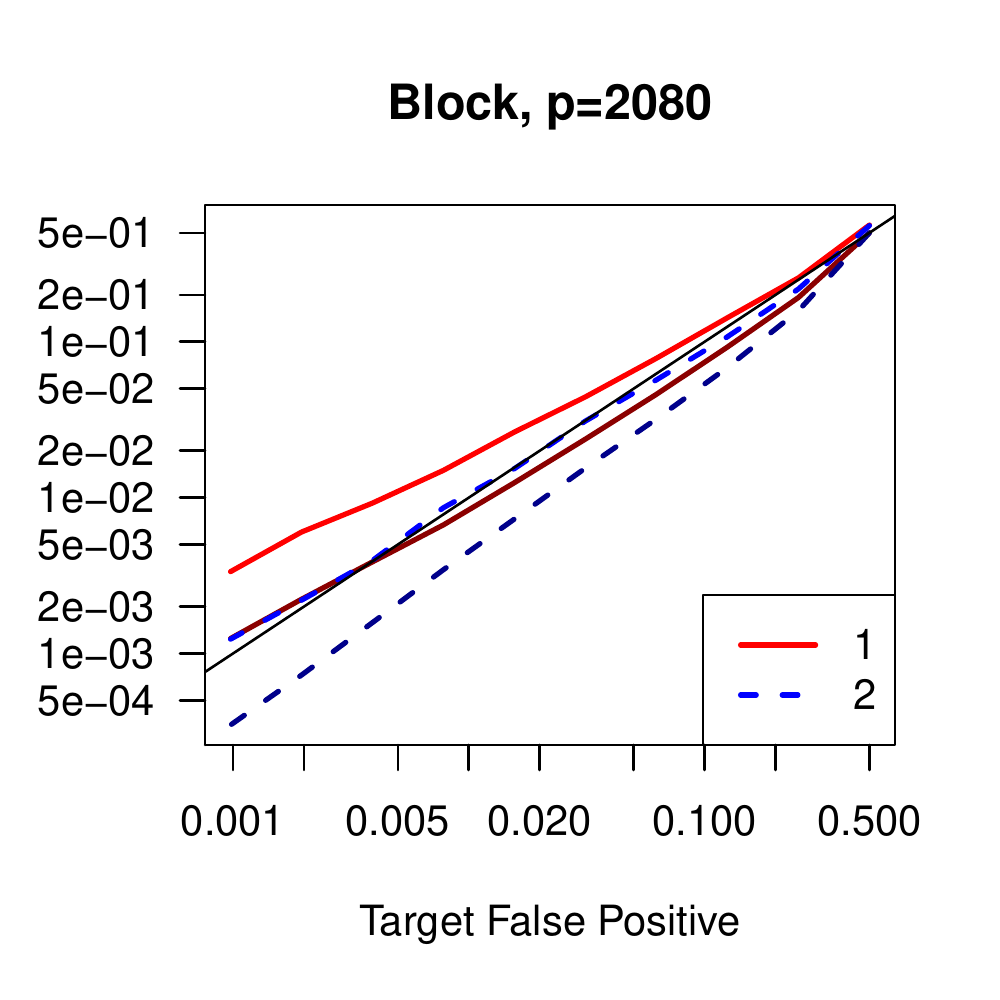}
  \end{center}
  \capt{
    \label{fig:fptpPlotE}
    The achieved true and false positives plotted against the
    target false positive rate displayed on the log-log scale
    for multivariate Laplace data.
    The rows from top to bottom correspond to the tridiagonal, 
    binary tree, and block diagonal matrices.  The columns from
    left to right correspond to dimensions 496, 1128, and 2080.
  }
\end{figure}

\subsection{Asymptotic Threshold}

In the work of \cite{JANKOVA2015}, a thresholding method is proposed 
for the debiased graphical lasso estimator making use of the normal distribution
function.  Namely, 
$$
\hat{\Omega}_{i,j} < 
\Phi^{-1}\left(1-\frac{\alpha}{p(p-1)}\right)
\frac{\hat{\Sigma}_{i,j}}{\sqrt{n}}
$$
where $\hat{\Sigma}$ is the empirical covariance matrix and $\Phi(\cdot)$ is
the cumulative distribution function for the standard normal distribution.  
The same simulations as in Section~3.1
were run on this method
for graphical lasso penalization parameters of $\lmb = 2^{-4},2^{-3},2^{-2},2^{-1},1,2$.
The larger values of $\lmb$ resulted in all of the off-diagonal entries being
set to zero.  Hence, Figure~\ref{fig:jankova} displays the false and true
positive rates for only $\lmb=1/16,1/8$.  This method should work asymptotically
as $n,p\rightarrow\infty$.  However, for our specific choices of $n$ and $p$,
this method failed to achieve anything close to the target false positive rate. 

\begin{figure}
	\begin{center}
		\includegraphics[width=0.32\textwidth]{\PICDIR/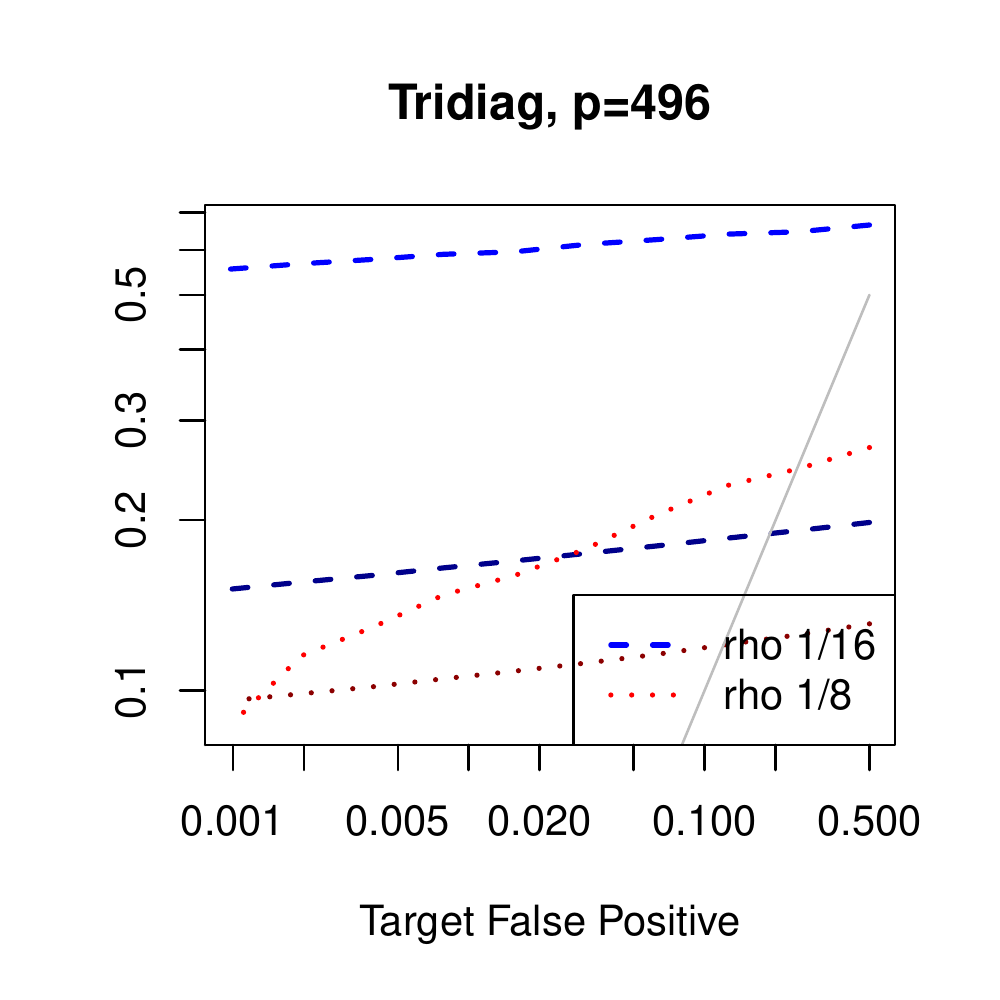}
		\includegraphics[width=0.32\textwidth]{\PICDIR/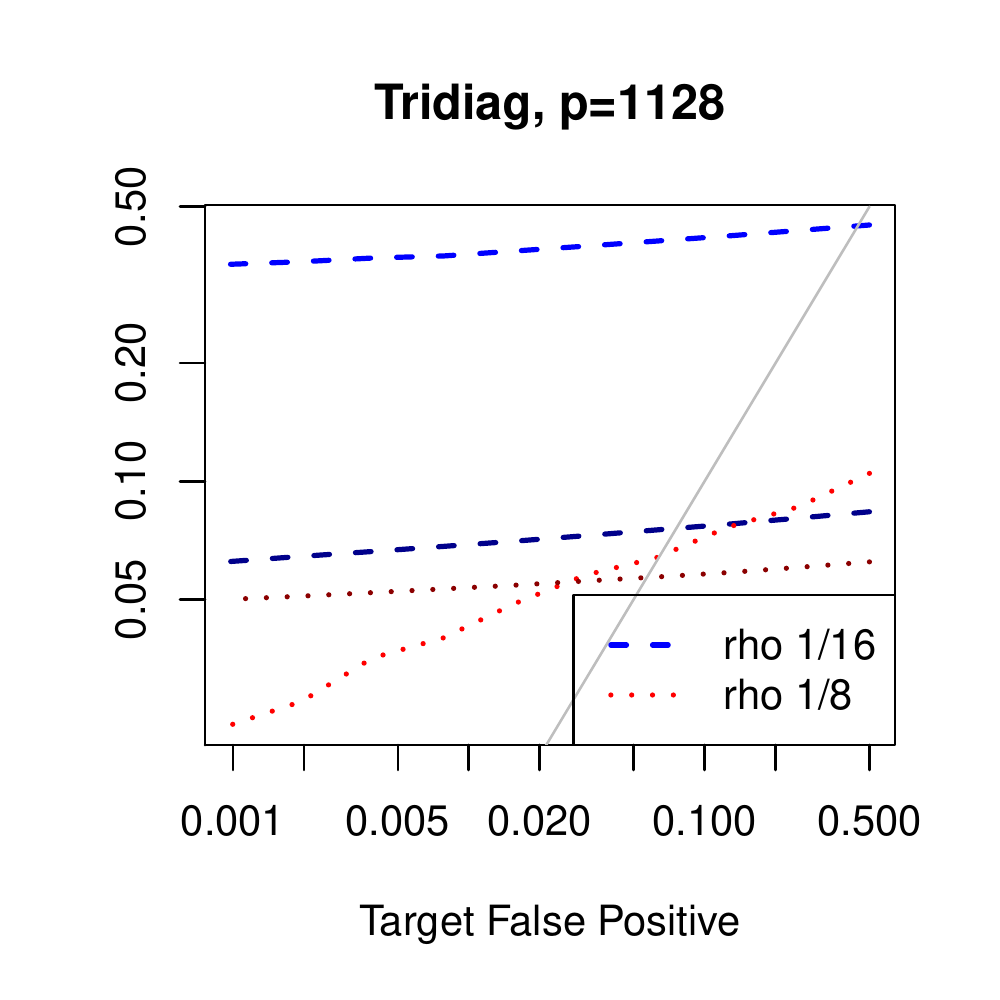}
		\includegraphics[width=0.32\textwidth]{\PICDIR/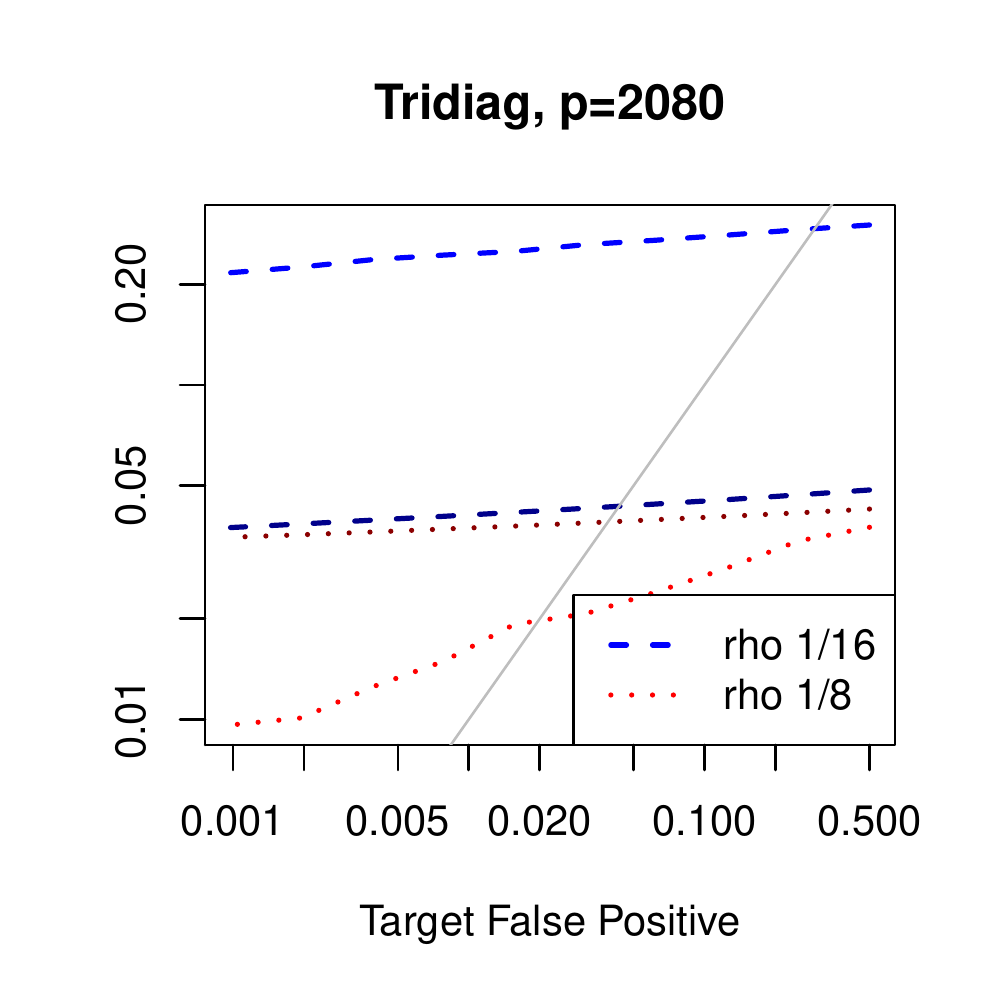}
		\includegraphics[width=0.32\textwidth]{\PICDIR/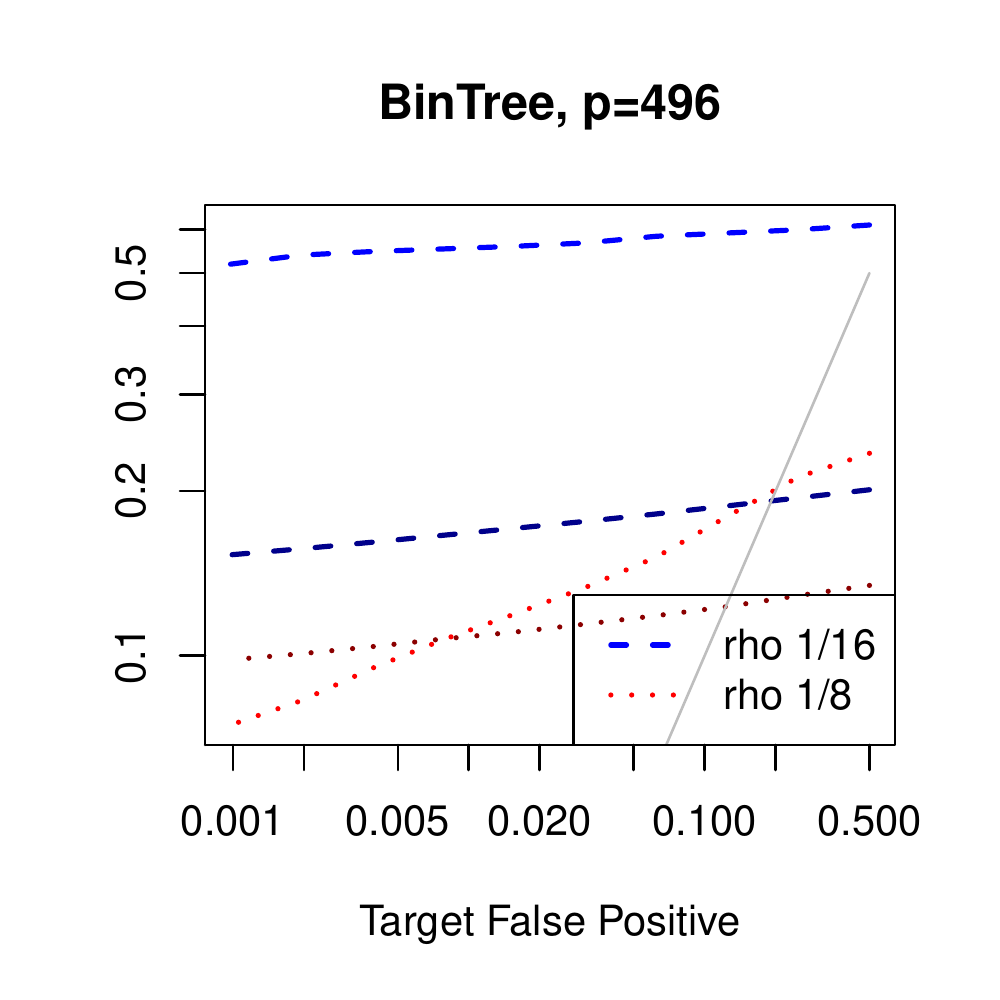}
		\includegraphics[width=0.32\textwidth]{\PICDIR/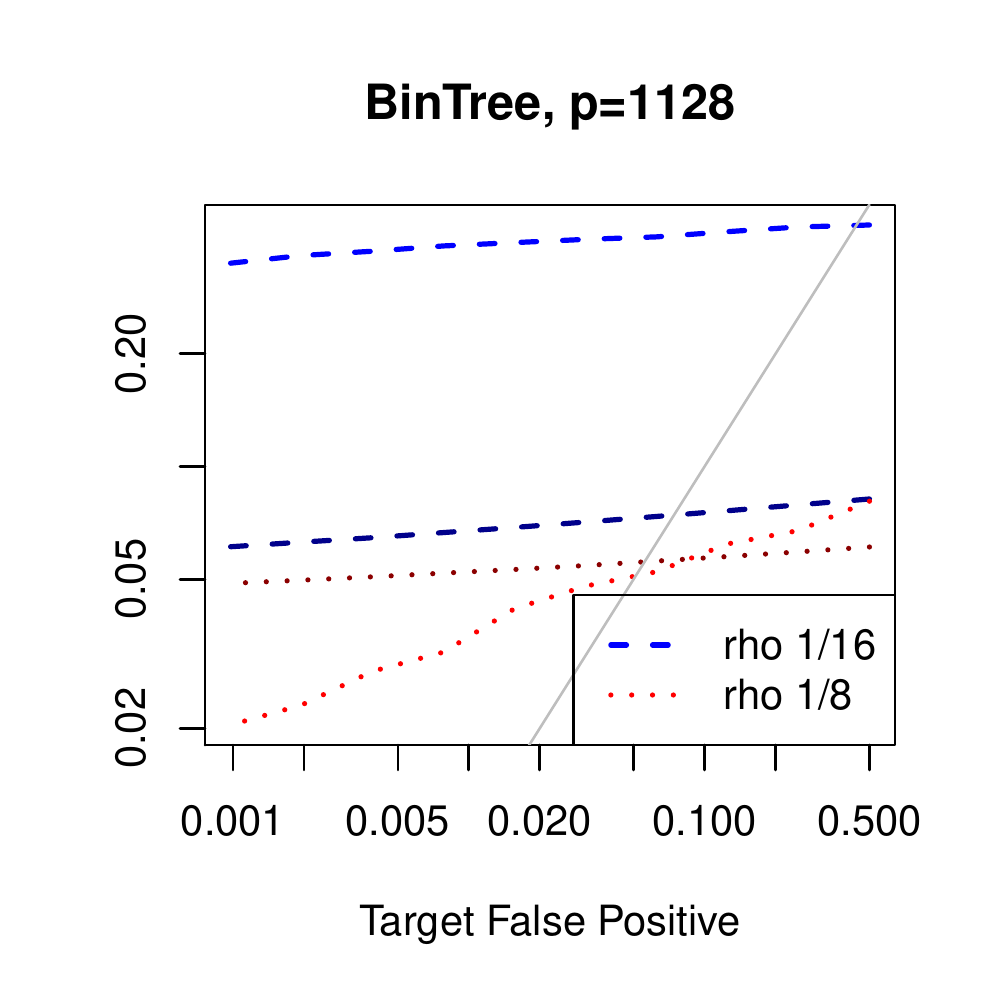}
		\includegraphics[width=0.32\textwidth]{\PICDIR/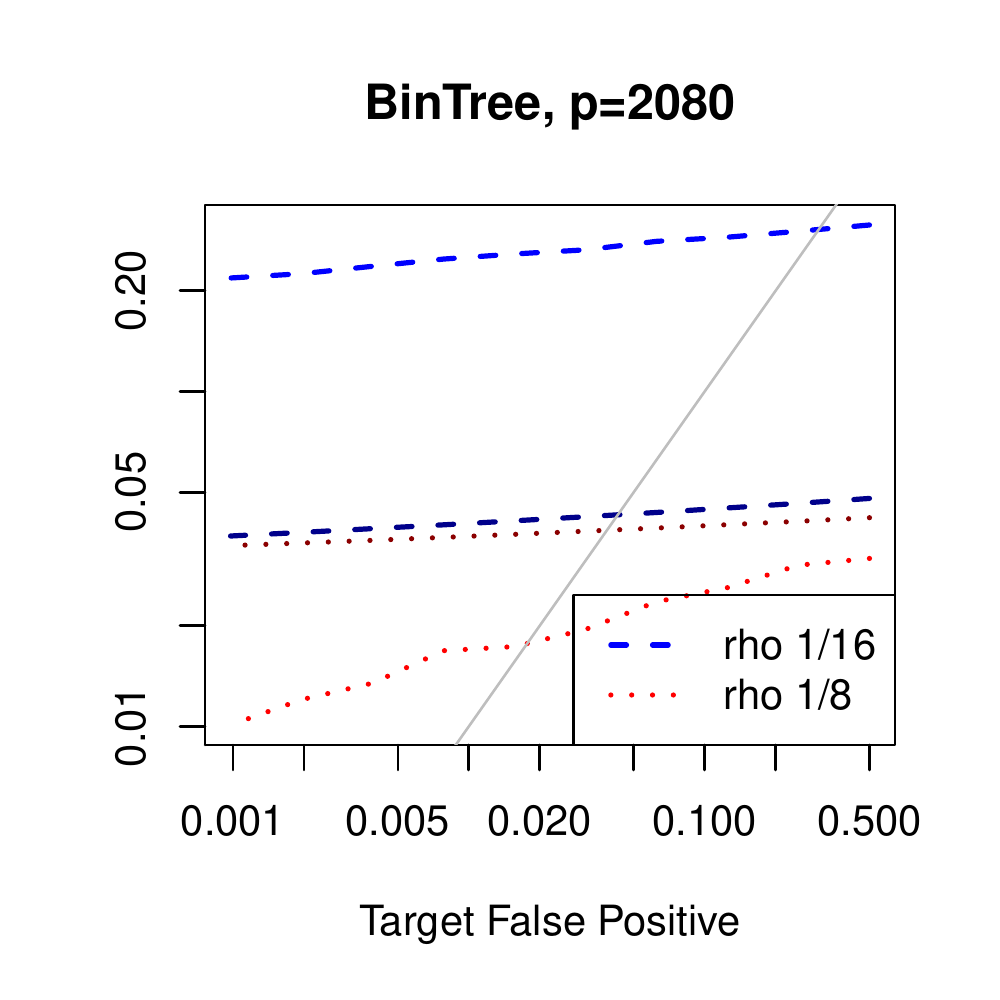}
		\includegraphics[width=0.32\textwidth]{\PICDIR/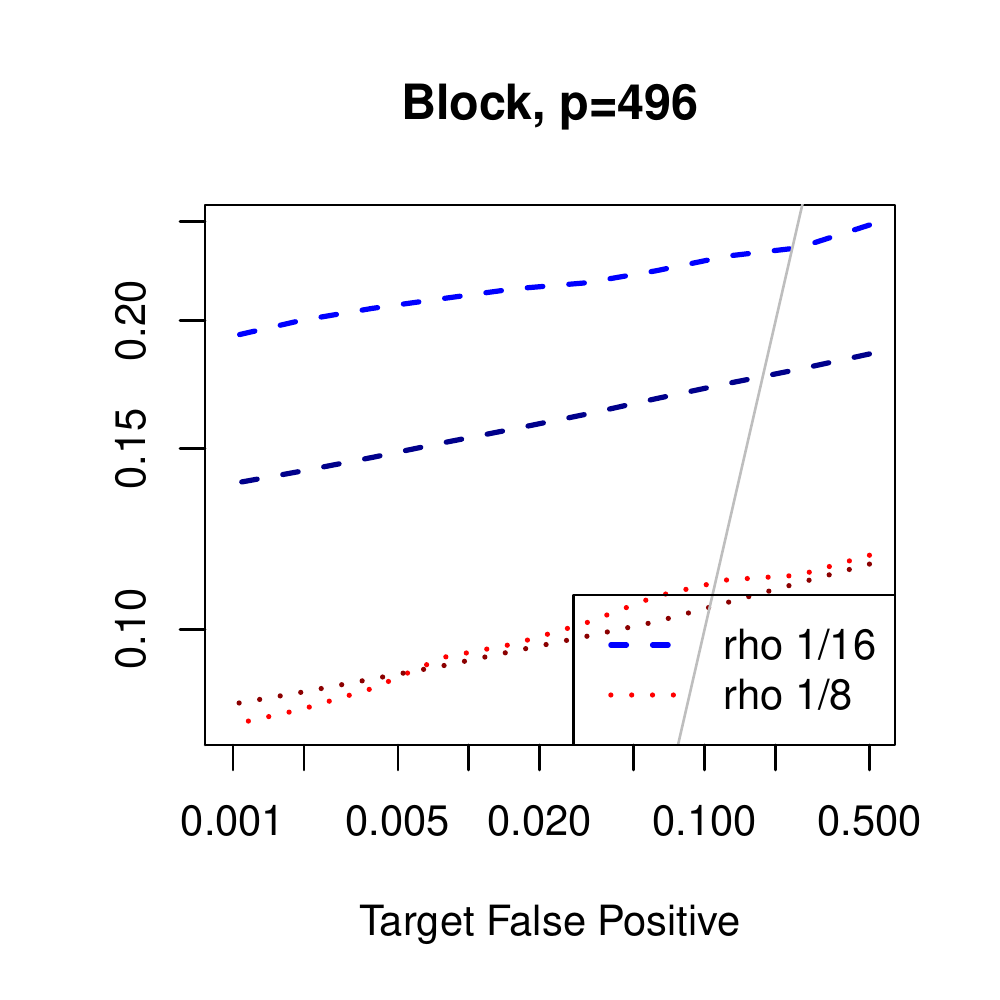}
		\includegraphics[width=0.32\textwidth]{\PICDIR/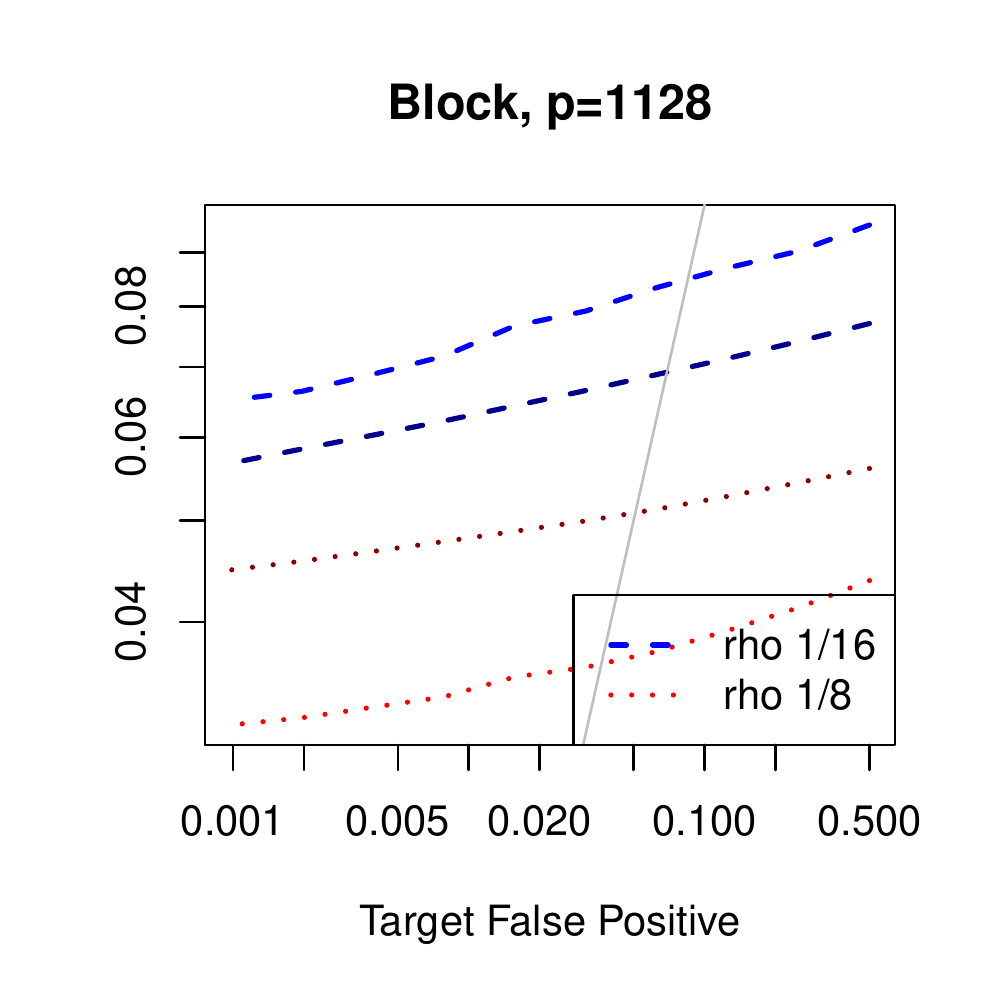}
		\includegraphics[width=0.32\textwidth]{\PICDIR/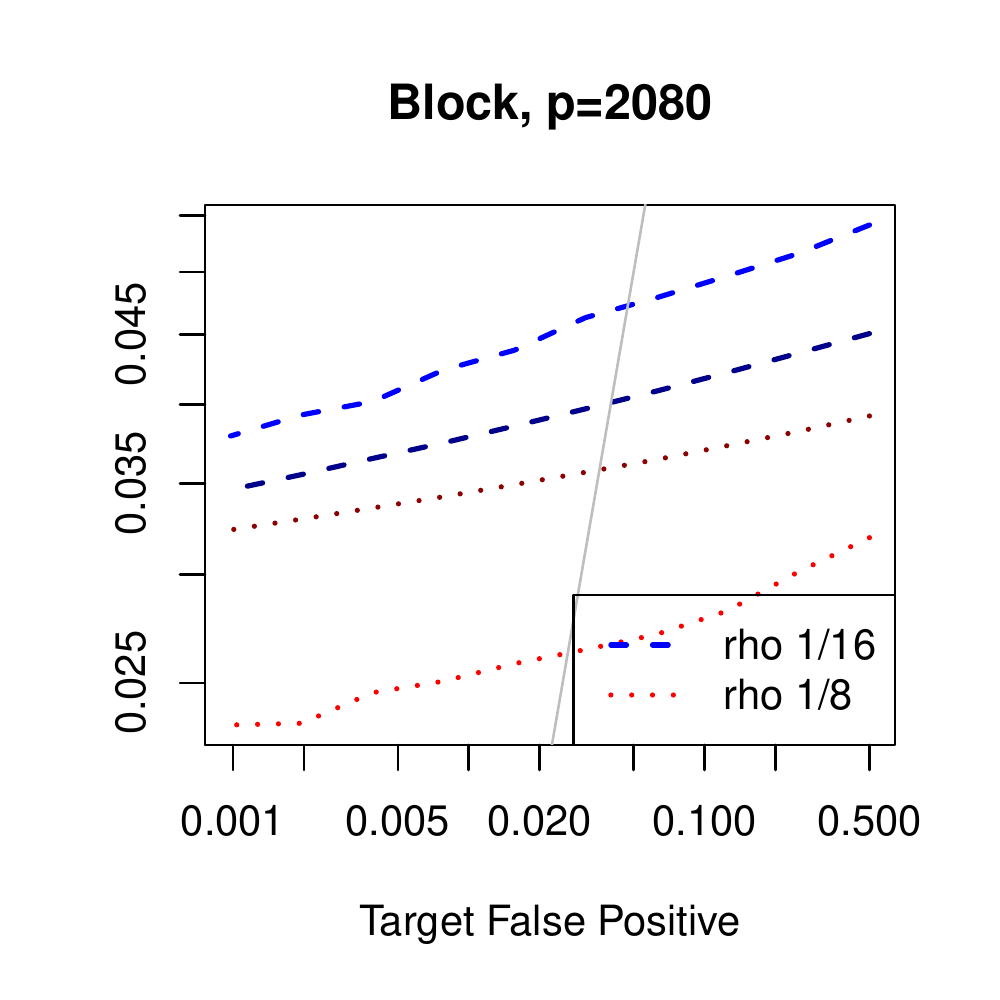}
	\end{center}
	\capt{
		\label{fig:jankova}
		Achieved true and false positive rates plotted against the target 
		rate and displayed on the log-log scale.
		The rows from top to bottom correspond to the tridiagonal, 
		binary tree, and block diagonal matrices.  The columns from
		left to right correspond to dimensions 496, 1128, and 2080.
	}
\end{figure}

\end{document}